\numberwithin{equation}{section}
\newtheorem{theorem}{Theorem}[section]
\newtheorem{lemma}[theorem]{Lemma}
\newtheorem{proposition}[theorem]{Proposition}
\newtheorem{corollary}[theorem]{Corollary}
\theoremstyle{remark}
\newtheorem{remark}[theorem]{Remark}
\newtheorem{example}{Example}
\newcommandx{\accprob}[1][1=]{
\ifthenelse{\equal{#1}{}}{\alpha^{\mbox{\footnotesize{\textsf{R}}}}}{\alpha^{\mbox{\footnotesize{\textsf{MH}}}}}
}
\newcommandx{\accprobext}[1][1=]{
\ifthenelse{\equal{#1}{}}{\bar{\alpha}^{\mbox{\footnotesize{\textsf{R}}}}}{\bar{\alpha}^{\mbox{\footnotesize{\textsf{MH}}}}}
}
\newcommandx{\addf}[2][1=]{
\ifthenelse{\equal{#1}{}}{\termletter_{#2}}{\bar{h}_{#2 | #1}}
}
\newcommand{\adjfuncforward}[1]{\vartheta_{#1}}
\newcommand{\bi}[3]{J_{#1}^{(#2, #3)}}
\newcommandx{\bkmod}[2][1=]{ 
\ifthenelse{\equal{#1}{}}
{\kernel{B}_{#2}^\precpar}
{\kernel{B}_{#2}^\precpar}
}
\newcommandx{\bkw}[2][1=]{ 
\ifthenelse{\equal{#1}{}}
{\kernel{B}_{#2}}
{\kernel{B}_{#2}}
}
\newcommand{\bmf}[1]{\mathsf{F}(#1)}
\newcommand{\bmaf}[1]{\mathsf{A}(#1)}
\newcommand{\calF}[2]{\mathcal{F}_{#1}^{#2}}
\newcommand{\calG}[2]{\mathcal{G}_{#1}^{#2}}
\newcommand{\cat}{\mathsf{cat}}
\newcommand{\cbound}{c(\sigma_\pm)}
\newcommand{\cboundtd}{\tilde{c}(\sigma_\pm)}
\newcommand{\cev}[1]{\reflectbox{\ensuremath{\vec{\reflectbox{\ensuremath{#1}}}}}}
\newcommand{\cond}{\mid}
\newcommand{\dbound}{d(\sigma_\pm)}
\newcommand{\dlim}{\overset{\mathcal{D}}{\longrightarrow}}
\newcommand{\eg}{\emph{e.g.}}
\newcommand{\epart}[2]{\ensuremath{\xi_{#1}^{#2}}}
\newcommand{\eqdef}{\coloneqq}
\newcommand{\ewght}[2]{\ensuremath{\omega_{#1}^{#2}}}
\newcommand{\fk}[2]{\mathbf{F}_{#1 | #2}}
\newcommand{\ftd}[1]{\tilde{f}_{#1}}
\newcommand{\hd}[1]{q_{#1}} 
\newcommand{\hk}{\kernel{Q}}
\newcommand{\ie}{\emph{i.e.}}
\newcommand{\im}{\operatorname{i}}
\newcommand{\incrementalvar}{\delta}
\newcommand{\ind}[2]{I_{#1}^{#2}}
\newcommand{\init}{\nu}
\newcommand{\initwgtfunc}{w_{-1}}
\newcommand{\intvect}[2]{\llbracket #1, #2 \rrbracket}
\newcommandx{\K}[1][1=]{\ifthenelse{\equal{#1}{}}{{\kletter}}{{M^{#1}}}}
\newcommand{\kernel}[1]{\mathbf{#1}}
\newcommand{\kissforward}[3][]
{\ifthenelse{\equal{#1}{}}{p_{#2}}
{\ifthenelse{\equal{#1}{fully}}{p^{\star}_{#2}}
{\ifthenelse{\equal{#1}{smooth}}{\tilde{r}_{#2}}{\mathrm{erreur}}}}}
\newcommand{\kletter}{M}
\newcommand{\llh}[1]{L_{#1}}
\newcommand{\lwd}[1]{l_{#1}}
\newcommand{\meas}[1]{\mathsf{M}(#1)}
\newcommand{\md}[1]{g_{#1}}
\newcommand{\mk}{\kernel{G}}
\newcommand{\N}{N}
\newcommand{\noshift}{\shiftsymbol^{\precpar}}
\newcommand{\nset}{\mathbb{N}}
\newcommand{\nsetpos}{\mathbb{N}_{> 0}}
\newcommand{\1}{\mathds{1}} 
\newcommand{\partmixt}{\pi}
\newcommand{\pE}{\mathbb{E}}
\newcommand{\pP}{\mathbb{P}}
\newcommand{\pplim}{\stackrel{\pP}{\longrightarrow}}
\newcommandx{\post}[2][1=]{
\ifthenelse{\equal{#1}{}}
	{\phi_{#2}}
	{\phi_{#2}^\N}
}
\newcommandx{\postmod}[2][1=]{
\ifthenelse{\equal{#1}{}}
	{\phi_{#2}^\precpar}
	{\phi_{#2}^\N}
}
\newcommand{\powerset}[1]{\mathcal{P}(#1)}
\newcommand{\precpar}{\varepsilon}
\newcommand{\precparsp}{\mathcal{E}}
\newcommand{\probmeas}[1]{\mathsf{M}_1(#1)}
\newcommand{\prop}[1]{\mathbf{P}_{#1}}
\newcommand{\propdens}[1]{p_{#1}}
\newcommand{\retrok}{\boldsymbol{\mathcal{L}}}
\newcommand{\retroknorm}{\bar{\boldsymbol{\mathcal{L}}}}
\newcommand{\retrokmod}{\boldsymbol{\mathcal{L}}^\precpar}
\newcommand{\retrokmodnorm}{\bar{\boldsymbol{\mathcal{L}}}^{\precpar}}
\newcommand{\retrokmodmod}{\boldsymbol{\mathcal{R}}^\precpar}
\newcommand{\retrokmodmodnorm}{\bar{\boldsymbol{\mathcal{R}}}^\precpar}
\newcommand{\rmd}{d}
\newcommand{\rset}{\ensuremath{\mathbb{R}}}
\newcommand{\rsetnn}{\rset_{\geq 0}}
\newcommand{\rsetpos}{\rset_{> 0}}
\newcommand{\shiftbwd}{\cev{\shiftsymbol}^{\precpar}}
\newcommand{\shiftfwd}{\vec{\shiftsymbol}^{\, \precpar}}
\newcommand{\shiftsymbol}{\varphi}
\newcommand{\sumwght}[2][]{%
\ifthenelse{\equal{#1}{}}{\ensuremath{\Omega_{#2}}}{\ensuremath{\Omega_{#2}^{(#1)}}}}
\newcommand{\sumwghthat}[2][]{%
\ifthenelse{\equal{#1}{}}{\ensuremath{\widehat{\Omega}_{#2}}}{\ensuremath{\widehat{\Omega}_{#2}^{(#1)}}}}
\newcommand{\tensprod}{\varotimes}
\newcommand{\termletter}{\tilde{h}}
\newcommandx{\tstat}[2][1=]{
\ifthenelse{\equal{#1}{}}
	{\tstatletter_{#2}}
	{\tau_{#2}^{#1}}
}
\newcommand{\termprime}{\Delta_N^\prime}
\newcommand{\termbis}{\Delta_N^{\prime \prime}}
\newcommand{\trmletter}{\boldsymbol{\Pi}}
\newcommand{\trm}[1]{\trmletter_{#1}}
\newcommand{\trmext}[1]{\bar{\trmletter}_{#1}}
\newcommand{\tstatletter}{\kernel{T}}
\newcommandx\tstatmod[2][1=]{
\ifthenelse{\equal{#1}{}}
	{\tstatletter^{\varepsilon}_{#2}}
	{\tau^{\varepsilon}_{#2}^{#1}}
}
\newcommand{\ud}[1]{\uksymbol_{#1}} 
\newcommand{\udlow}{\sigma_-}
\newcommand{\udmod}[1]{\uksymbol^\precpar_{#1}}
\newcommand{\udup}{\sigma_+}
\newcommand{\uk}[1]{\mathbf{L}_{#1}}
\newcommand{\ukest}[2]{\uksymbol_{#1} \langle #2 \rangle}
\newcommand{\ukestvar}[1]{\varsigma_{#1}^2}
\newcommand{\ukdist}[1]{\mathbf{R}_{#1}}
\newcommand{\ukmod}[1]{\mathbf{L}_{#1}^\precpar}
\newcommand{\uksymbol}{\ell}
\newcommand{\wgtfunc}[1]{w_{#1}}
\newcommand{\wgtfuncideal}[1]{\bar{w}_{#1}}
\newcommand{\wgtfuncmod}[1]{w^{\varepsilon}_{#1}}
\newcommand{\xarb}{x^\ast}
\newcommand{\Xfd}{\mathcal{X}}
\newcommand{\Xset}{\mathsf{X}}
\newcommand{\Yfd}{\mathcal{Y}}
\newcommand{\Yset}{\mathsf{Y}}
\newcommand{\Zfd}{\mathcal{Z}}
\newcommand{\zpart}[2]{\zeta_{#1}^{#2}} 
\newcommand{\zset}{\mathbb{Z}}
\newcommand{\Zset}{\mathsf{Z}}
\newcommand{\wgtfuncext}[1]{w_{#1}}
\newcommand{\ewghthat}[2]{\omega_{#1}^{#2}}
\newcommandx\tstathat[2][1=]{
\ifthenelse{\equal{#1}{}}
	{\tstatletter_{#2}}
	{\tau_{#2}^{#1}}
}
\newcommand{\M}{M}
\definecolor{violet}{cmyk}{0.79,0.88,0,0}
\definecolor{lavander}{cmyk}{0,0.48,0,0}
\definecolor{burntblue}{cmyk}{0.86,0.30,0.18,0}
\definecolor{burntorange}{cmyk}{0,0.52,1,0}
\definecolor{burntgreen}{cmyk}{0.62,0.44,0.47,0}
\definecolor{colorproof}{RGB}{80,93,113}
\definecolor{lightr}{RGB}{204,0,0}
\definecolor{palegreen}{cmyk}{0.86,0.30,0.96,0}
\newcounter{hypH}
\newenvironment{hypH}{\refstepcounter{hypH}\begin{itemize}
\item[({\bf H\arabic{hypH}})]}{\end{itemize}}
\newcommandx{\hypref}[2][1=]{
\ifthenelse{\equal{#1}{}}
{\hspace{-1mm}(\textbf{H\ref{#2}})\hspace{-1mm}}
{\hspace{-1mm}(\textbf{H\ref{#1}--\ref{#2}})\hspace{-1mm}}
}
\newcommand{\ourlongtitle}{A pseudo-marginal sequential Monte Carlo online smoothing algorithm}
\newcommand{\ourshorttitle}{A pseudo-marginal SMC online smoothing algorithm}
\begin{document}

\begin{frontmatter}
\title{\ourlongtitle} 
\runtitle{\ourshorttitle} 

\begin{aug}
\author[A]{\fnms{Pierre} \snm{Gloaguen}\ead[label=e1]{pierre.gloaguem@agroparistech.fr}},
\author[B]{\fnms{Sylvain} \snm{Le Corff}\ead[label=e2]{sylvain\_lecorff@telecom-sudparis.eu}}
\and
\author[C]{\fnms{Jimmy} \snm{Olsson}\ead[label=e3]{jimmyol@kth.se}}
\address[A]{AgroParisTech, Paris, France.
\printead{e1}}

\address[B]{T\'el\'ecom SudParis, Paris, France.
\printead{e2}}

\address[C]{KTH Royal Institute of Technology, Stockholm, Sweden. 
\printead{e3}}

\end{aug}

\begin{abstract}
We consider online computation of expectations of additive state functionals under general path probability measures proportional to products of unnormalised transition densities. These transition densities are assumed to be intractable but possible to estimate, with or without bias. Using pseudo-marginalisation techniques we are able to extend the particle-based, rapid incremental smoother (PaRIS) algorithm proposed in [J.~Olsson and J.~Westerborn. Efficient particle-based online smoothing in general hidden Markov models: The {PaRIS} algorithm. {\em Bernoulli}, 23(3):1951--1996, 2017] to this setting. The resulting algorithm, which has a linear complexity in the number of particles and constant memory requirements, applies to a wide range of challenging path-space Monte Carlo problems, including smoothing in partially observed diffusion processes and models with intractable likelihood. The algorithm is furnished with several theoretical results, including a central limit theorem, establishing its convergence and numerical stability. Moreover, under strong mixing assumptions we establish a novel $\mathcal{O}(n \precpar)$ bound on the asymptotic bias of the algorithm, where $n$ is the path length and $\precpar$ controls the bias of the density estimators. 
\end{abstract}

\begin{keyword}
\kwd{central limit theorem}
\kwd{exponential concentration}
\kwd{partially observed diffusions}
\kwd{particle smoothing}
\kwd{pseudo-marginal methods}
\kwd{sequential Monte Carlo methods}
\end{keyword}

\end{frontmatter}

\section{Introduction}
\label{sec:introduction}
Let $(\Xset_n, \Xfd_n)_{n \in \nset}$ be a sequence of general state spaces and let, for all $n \in \nset$, $\uk{n} : \Xset_n \times \Xfd_{n + 1} \to \rsetnn$ be bounded kernels in the sense that $\sup_{x \in \Xset_n} \uk{n}(x, \Xset_{n + 1}) < \infty$. We will assume a dominated model where each kernel $\uk{n}$ has a kernel density $\ud{n}$ with respect to some $\sigma$-finite reference measure $\mu_{n + 1}$ on $\Xfd_{n + 1}$. Finally, let $\chi$ be some bounded measure on $\Xfd_0$. In the following, we denote state-space product sets and $\sigma$-fields by $\Xset^n \eqdef \Xset_0 \times \cdots \times \Xset_n$ and $\Xfd^n \eqdef \Xfd_0 \tensprod \cdots \tensprod \Xfd_n$, respectively, and consider probability measures  
\begin{equation} \label{eq:def:post}
\post{0:n}(\rmd x_{0:n}) \propto \chi(\rmd x_0) \prod_{m = 0}^{n - 1} \uk{m}(x_m, \rmd x_{m + 1}), \quad n \in \nset, 
\end{equation}
on these product spaces.\footnote{We will always use the standard convention $\prod_{\varnothing} = 1$, implying that $\post{0} \propto \chi$.} Given a sequence $(\addf{n})_{n \in \nset}$ of measurable functions $\addf{n} : \Xset_n \times \Xset_{n + 1} \to \rset$, the aim of the present paper is the online approximation of expectations of \emph{additive functionals}   
\begin{equation} \label{eq:add:func}
    h_n : \Xset^n \ni x_{0:n} \mapsto \sum_{m = 0}^{n - 1} \addf{m}(x_m, x_{m + 1})
\end{equation}
under the distribution flow $(\post{0:n})_{n \in \nset}$ using \emph{sequential Monte Carlo} (SMC) methods. 

The generality of the model \eqref{eq:def:post} is striking. In the special case where each $\uk{n}$ can be decomposed as $\uk{n}(x_n, \rmd x_{n + 1}) = \md{n}(x_n) \, \hk_n(x_n, \rmd x_{n + 1})$ for some Markov kernel $\hk_n$ and some nonnegative potential function $\md{n}$, \eqref{eq:def:post} yields the \emph{Feynman-Kac path models} \cite{delmoral:2004}, which are applied in a large variety of scientific and engineering disciplines, including statistics, physics, biology, and signal processing. In a \emph{hidden Markov model} (HMM) (see, \eg,  \cite{Cappe:2005:IHM:1088883}), a Markov chain $(X_n)_{n \in \nset}$ with kernels $(\hk_n)_{n \in \nset}$ and initial distribution $\chi$ is only partially observed through a sequence $(Y_n)_{n \in \nset}$ of observations being conditionally independent given the Markov states. In that case, $g_n$ plays the role of the likelihood of the state $X_n$ given the corresponding observation $Y_n$, and $\post{0:n}$ describes the \emph{joint-smoothing distribution}, \ie, the joint posterior of the hidden states $X_0, \ldots, X_n$ given corresponding observations (see Example~\ref{ex:state-space:models} for details). We will adopt this terminology throughout the present paper and refer to the distributions defined in \eqref{eq:def:post} as `smoothing distributions'; the problem of computing expectations of functionals of type \eqref{eq:add:func} under these distributions will be referred to as `additive smoothing'. General state-space HMMs are prevalent in time-series and sequential-data analysis and are used extensively in, \eg, movement ecology \cite{michelot2016movehmm}, energy-consumption modeling \cite{candanedo2017methodology}, genomics \cite{yau2011bayesian}, target tracking \cite{sarkka2007rao}, enhancement and segmentation of speech and audio signals \cite{rabiner1989tutorial}; see also \cite{sarkka2013bayesian, zucchini2017hidden} and the numerous references therein. Operating on models of this sort, online additive smoothing is instrumental for, \emph{e.g.},    
\begin{enumerate}
    \item[--] \emph{path reconstruction}, \ie, the estimation of hidden states given observations. Especially in \emph{fixed-point smoothing}, where interest is in computing the expectations of $h(X_m)$ conditionally to $Y_0, \ldots, Y_n$ for some given $m$ and test function $h$ as $n$ tends to infinity, a problem that can be cast into our framework by letting, in \eqref{eq:add:func}, $\addf{m}(x_m, x_{m + 1}) = h(x_m)$ and $\addf{\ell} \equiv 0$ for all $\ell \neq m$. 
    \item[--] \emph{parameter inference}, where additive smoothing is a key ingredient in the computation of log-likelihood gradients (\emph{score functions}) via Fisher's identity or the intermediate quantity of the \emph{expectation-maximisation} (EM) \emph{algorithm}; see, \eg, \cite[Chapter~10]{Cappe:2005:IHM:1088883}. On-the-fly computation becomes especially important in online implementations via, \eg, the \emph{online EM} or \emph{recursive maximum likelihood} approaches \cite{cappe:2009,legland:mevel:1997}.  
 \end{enumerate}

As closed-form solutions to this smoothing problem can be obtained only for linear Gaussian models or models with finite state spaces $(\Xset_n)_{n \in \nset}$, loads of papers have been written over the years with the aim of developing SMC-based approximative solutions. Most of these works assume that each density $\ud{n}$ (or, in the HMM case, the transition density of $\hk_n$ and the likelihood $g_n$) is available in a closed form; however, this is not the case for a large number of interesting models, including most state-space HMMs governed by stochastic differential equations. Still, there are a few exceptions in the literature. In \cite{fearnhead2008particle} (see also \cite{fearnhead:papaspiliopoulos:roberts:stuart:2010}), the authors showed that asymptotically consistent online state estimation in partially observed diffusion processes can be achieved by means of a \emph{random-weight particle filter}, in which unavailable importance weights are replaced by unbiased estimates (produced using so-called \emph{generalized Poisson estimators} \cite{beskos:papaspiliopoulos:roberts:fearnhead:2006}). This approach is closely related to \emph{pseudo-marginal methods} \cite{andrieu:robert:2009}, since the unbiasedness allows the true, intractable target to be embedded into an extended distribution having the target as a marginal; as a consequence, the consistency of the algorithm follows straightforwardly from standard SMC convergence results. A similar pseudo-marginal SMC approach was developed in \cite{mcgree:drovandi:white:pettitt:2016} for random effects models with non-analytic likelihood. In \cite{olsson:strojby:2011}, this technology was cast into the framework of \emph{fixed-lag particle smoothing} of additive state functionals, where the well-known particle-path degeneracy of naive particle smoothers is avoided at the price of a lag-induced bias. Recently, \cite{yonekura:beskos:2020} designed a random-weight version of the forward-only particle smoother proposed in \cite{delmoral:doucet:singh:2010}, whose computational complexity is quadratic in the number $\N$ of particles, yielding a strongly consistent---though computationally demanding---algorithm. Moreover, \cite{gloaguen2018online} extended the random-weight particle filtering approach to the \emph{particle-based, rapid incremental smoother} (PaRIS), proposed in \cite{olsson:westerborn:2014b} as a means for additive smoothing in HMMs, yielding an algorithm with just linear complexity. The complexity of the latter algorithm is appealing; however, the schedule was not furnished with any theoretical results concerning the asymptotic properties and long-term stability of the estimator or the effect of the weight randomisation on the accuracy. In addition, the algorithm is restricted to partially observed diffusions and unbiased weight estimation, calling for strong assumptions on the unobserved process. 
        
In the present paper we further develop the approach in \cite{gloaguen2018online} and extend the PaRIS to online additive smoothing in general models in the form \eqref{eq:def:post} and the scenario where the transition densities $(\ud{n})_{n \in \nset}$ are intractable but can be estimated by means of simulation. These estimates may be unbiased or biased. In its original form, the PaRIS avoids particle-path degeneracy by alternating two sampling operations, one that propagates a sample of forward-filtering particles and another that resamples a set of backward-smoothing statistics, and the proposed method replaces the sampling distributions associated with these operations by suitable pseudo-marginals. This leads to an $\mathcal{O}(\N)$ algorithm that can be applied to a wide range of smoothing problems, including additive smoothing in partially observed diffusion processes and additive \emph{approximate Bayesian computation smoothing} \cite{martin:jasra:singh:whiteley:delmoral:maccoy:2014}. As illustrated by our examples, it covers the random-weight algorithms proposed in \cite{fearnhead2008particle} and \cite{gloaguen2018online} as special cases and provides, as another special case, an extension of the original PaRIS proposed in \cite{olsson:westerborn:2014b} to general path models \eqref{eq:def:post} and auxiliary particle filters. In addition, the proposed method is furnished with a rigorous theoretical analysis, the results of which can be summarised as follows. 

\begin{itemize}
\item We establish exponential concentration and asymptotic normality of the estimators produced by the algorithm. These results extend analogous results established in \cite{olsson:westerborn:2014b} for the original PaRIS (operating on fully dominated HMMs using the bootstrap particle filter), and the additional randomness of the pseudo-marginals can be shown to effect the asymptotic variance through an additional positive term. The fact that our smoothing algorithm, as explained above, involves two separate levels of pseudo-marginalisation makes this extension highly non-trivial. 
\item Under strong mixing assumptions we establish the long-term stochastic stability of our algorithm by showing that its asymptotic variance grows at most linearly in $n$. As explained in \cite[Section~1]{olsson:westerborn:2014b}, this is optimal for a path-space Monte Carlo estimator. As a by-product of this analysis, we obtain a time-uniform bound on the asymptotic variance of the random-weight particle filter. 
\item As mentioned above, we do not require the estimators of $(\uk{n})_{n \in \nset}$ to be unbiased. The bias is assumed to be regulated by some precision parameter $\precpar$ (see \hypref{assum:bias:bound}), and under additional strong mixing assumptions we establish an $\mathcal{O}(n \precpar)$ bound on the asymptotic bias of the final estimator. In addition, we obtain, as a by-product, an $\mathcal{O}(\precpar)$ bound for the random-weight particle filter. These results are the first of its kind.   
\end{itemize}

The paper is structured as follows. In Section~\ref{sec:preliminaries} we cast, under the temporary assumption that each $\ud{n}$ is tractable, the PaRIS into the general model \eqref{eq:def:post} and auxiliary particle filters and define carefully the two forward and backward sampling operations constituting the algorithm. Since this extension is of independent interest, we provide the details. In Section~\ref{sec:pseudo:marginal:PaRIS} we show how pseudo-marginal forward and backward sampling allow the temporary tractability assumption to be abandoned. Section~\ref{sec:theoretical:results} presents all theoretical results and although an extensive numerical study of the proposed scheme is beyond the scope of our paper, we present a minor numerical illustration of the $\mathcal{O}(n \precpar)$ bias bound in Section~\ref{sec:numerical:results}. Sections~\ref{sec:proof:lem:reversibility}--\ref{sec:variance:bounds} contain all proofs.

\section{Preliminaries}
\label{sec:preliminaries}
We first introduce some general notation. 
For any $(m, n) \in \zset$ such that $n \leq m$, we let $\intvect{m}{n}$ denote the set $\{m, \ldots, n\}$. For arbitrary elements $a_\ell$, $\ell \in \intvect{m}{n}$, we denote vectors by $a_{m:n} = (a_m, \ldots, a_n)$. The sets of measures, probability measures, and real-valued bounded measurable functions on some given state-space $(\mathsf{X}, \mathcal{X})$ are denoted by $\meas{\mathcal{X}}$, $\probmeas{\mathcal{X}}$, and $\bmf{\mathcal{X}}$, respectively. For any measure $\mu$ and measurable function $h$ we let $\mu h \eqdef \int h(x) \, \mu(\rmd x)$ denote the Lebesgue integral of $h$ with respect to $\mu$ whenever this is well defined. We will write $\mu^2 f = (\mu f)^2$ (whereas $\mu f^2 = \mu(f^2)$). For any finite set $S$, $\powerset{S}$ denotes the power set of $S$. The following kernel notation will be used repeatedly in the paper. Let $(\mathsf{X}, \mathcal{X})$ and $(\mathsf{Y}, \mathcal{Y})$ be general state spaces and $\kernel{K} : \mathsf{X} \times \mathcal{Y} \to \rsetnn$ some transition kernel. Then $\kernel{K}$ induces two operators, one acting on measurable functions and the other on measures. More precisely, for any $h \in \bmf{\mathcal{X} \tensprod \mathcal{Y}}$ and $\mu \in \meas{\mathcal{X}}$, let 
$$
\kernel{K} h : \mathsf{X} \ni x \mapsto \int h(x, y) \, \kernel{K}(x, \rmd y), \quad \mu \kernel{K} : \mathcal{Y} \ni A \mapsto \int \mu(\rmd x) \, \kernel{K}(x, A)
$$ 
Moreover, let $(\mathsf{Z}, \mathcal{Z})$ be a third state space and $\kernel{K}' : \mathsf{Y} \times \mathcal{Z} \to \rsetnn$ another kernel; then the product of $\kernel{K}$ and $\kernel{K}'$ is the kernel defined by 
$$
\kernel{K} \kernel{K}' : (x, A) \ni \mathsf{X} \times \mathcal{Z} \mapsto \int \kernel{K}(x, \rmd y) \, \kernel{K}'(y, A). 
$$    

\subsection{Model and aim}
\label{sec:model}

With notations as in Section~\ref{sec:introduction}, define, for each $n \in \nset$ and $m \in \intvect{0}{n}$, the kernel 
\begin{equation} \label{eq:def:uk:products}
    \uk{m, n}(x_{0:m}', \rmd x_{0:n + 1}) \eqdef \delta_{x_{0:m}'}(\rmd x_{0:m}) \prod_{\ell = m}^n \uk{\ell}(x_\ell, \rmd x_{\ell + 1}) 
\end{equation}
on $\Xset^n \times \Xfd^{n + 1}$. In addition, let $\uk{n, n - 1} = \operatorname{id}$. Note that $\uk{n, n}$ is different from $\uk{n}$ in the sense that the former is defined on $\Xset^n \times \Xfd^{n + 1}$ whereas the latter is defined on $\Xset_n \times \Xfd_{n + 1}$. We will always assume that for all $n \in \nset$, $\chi \uk{0, n - 1} \1_{\Xset^n} = \chi \uk{0} \cdots \uk{n - 1} \1_{\Xset_n} > 0$. Since each mapping $\uk{m, n} \1_{\Xset^n}$ depends only on the last coordinate $x_m$, a version of this mapping with domain $\Xset_m$ is well defined; we will denote the latter by the same symbol and write $\uk{m, n} \1_{\Xset^n}(x_m)$, $x_m \in \Xset_m$, when needed. Using the previous notations, the path measures \eqref{eq:def:post} can be expressed as 
\begin{equation}
\label{eq:FK:path}
    \post{0:n}(\rmd x_{0:n}) = \frac{\chi \uk{0, n - 1}(\rmd x_{0:n})}{\chi \uk{0, n - 1} \1_{\Xset^n}}, \quad n \in \nset. 
\end{equation} 
For each $n \in \nset$, let  
$ 
\post{n} : A \ni \Xfd_n \mapsto \post{0:n}(\Xset^{n - 1} \times A)
$
denote the marginal of $\post{0:n}$ with respect to the last component. Note that the path- and marginal-measure flows can be expressed recursively as  
\begin{equation} \label{eq:recursion:FK:path}
    \post{0:n + 1}(\rmd x_{0:n + 1}) = \frac{\post{0:n} \uk{n, n}(\rmd x_{0:n + 1})}{\post{0:n} \uk{n, n} \1_{\Xset^{n + 1}}} = \frac{\post{0:n} \uk{n, n}(\rmd x_{0:n + 1})}{\post{n} \uk{n} \1_{\Xset_{n + 1}}}, \quad n \in \nset, 
\end{equation}
and 
\begin{equation} \label{eq:recursion:FK:marg}
    \post{n + 1}(\rmd x_{n + 1}) = \frac{\post{n} \uk{n}(\rmd x_{n + 1})}{\post{n} \uk{n} \1_{\Xset_{n + 1}}}, \quad n \in \nset, 
\end{equation}
respectively. Given some sequence $(\addf{n})_{n \in \nset}$ of functions $\addf{n} : \Xset_n \times \Xset_{n + 1} \to \rset$, our aim is, as declared in Section~\ref{sec:introduction}, the online approximation of $(\post{0:n} h_n)_{n \in \nset}$, where $h_n$ defined by \eqref{eq:add:func}. 

\begin{remark} \label{we:vs:delmoral}
    Note that our framework is equivalent with the Feynman-Kac models considered in \cite[Section~1.3]{delmoral:2004}, where it is assumed that each kernel $\uk{n}$ can be decomposed into a Markov transition kernel $\kernel{M}_n$ on $\Xset_n \times \Xfd_{n + 1}$ and a potential function $\md{n} \in \bmf{\Xfd_n}$ according to $\uk{n}(x_n, \rmd x_{n + 1}) = \md{n}(x_n) \, \kernel{M}(x_n, \rmd x_{n + 1})$. Indeed, as soon as $\uk{n}$ is bounded, such a decomposition is always possible by letting $\kernel{M}(x_n, \rmd x_{n + 1}) \eqdef \uk{n}(x_n, \rmd x_{n + 1}) / \uk{n}(x_n, \Xset_{n + 1})$ and $\md{n}(x_n) \eqdef \uk{n}(x_n, \Xset_{n + 1})$. However, in our case this potential function is, on the contrary to what is assumed in \cite{delmoral:2004}, generally intractable, since $\uk{n}(x_n, \Xset_{n + 1})$ is typically unknown for the models that we will consider. Moreover, as noted in \cite[Section~1.3]{delmoral:2004}, the path model $(\post{0:n})_{n \in \nset}$ and the marginal model $(\post{n})_{n \in \nset}$ have the same mathematical structure in the sense that the path model can be formulated as a marginal model evolving on the spaces $(\Xset_n', \Xfd_n')_{n \in \nset}$, where $\Xset_n' \eqdef \Xset^n$ and $\Xfd_n' \eqdef \Xfd^n$, according to the initial distribution $\chi \eqdef \chi'$ and the transition kernels $(\uk{n}')_{n \in \nset}$, where $\uk{n}' \eqdef \uk{n, n}$. Still, the kernels $(\uk{n}')_{n \in \nset}$ involve transitions according Dirac measures, which makes the model formed by $\chi'$ and $(\uk{n}')_{n \in \nset}$ ill-suited for naive particle approximation; see Section~\ref{sec:SMC} for further discussion. 
\end{remark}

\begin{example}[state-space models]
\label{ex:state-space:models}
Let $(\Xset, \Xfd)$ and $(\Yset, \Yfd)$ be general state spaces and $(\hk_n)_{n \in \nset}$ and $(\mk_n)_{n \in \nset}$ sequences of Markov kernels on $\Xset \times \Xfd$ and $\Xset^2 \times \Yfd$, respectively. In addition, let $\chi$ be some probability measure on $\Xfd$. Consider a fully dominated model where all $\mk_n$ and $\hk_n$ have transition densities $\md{n}$ and $\hd{n}$ with respect to some reference measures $\nu$ and $\mu$ on $\Yfd$ and $\Xfd$, respectively. Let $\{X_0, (X_n, Y_n) : n \in \nsetpos\}$ be the canonical Markov chain induced by the initial distribution $\chi$ and the Markov 
kernel $\hk_n(x_n, \rmd x_{n + 1}) \, \mk_n(x_n, x_{n + 1}, \rmd y_{n + 1})$ (which has no dependence on the $y_n$ variable; the same dynamics hence applies to the first transition $X_0 \rightsquigarrow (X_1, Y_1)$) and denote by $\pP_\chi$ its law with corresponding expectation $\pE_\chi$. In this model, we assume that the \emph{state process} $(X_n)_{n \in \nset}$ is latent and only partially observed thought the \emph{observation process} $(Y_n)_{n \in \nsetpos}$. It can be shown that (i) the state process is itself a Markov chain with initial distribution $\chi$ and transition kernels $(\hk_n)_{n \in \nset}$ and (ii) conditionally to the state process, the observations are independent and such that the marginal distribution of $Y_n$ is given by $\mk_{n - 1}(X_{n - 1}, X_n, \cdot)$ for all $n$. In the case where $\mk_{n - 1}$ does not depend on $x_{n - 1}$, the model is a fully adapted general state-space HMM; see \cite[Section~2.2]{Cappe:2005:IHM:1088883}. In this setting, the joint-smoothing distribution at time $n \in \nset$ is, for a given record $y_{1:n} \in \Yset^n$ of observations, defined as the probability measure  
\begin{equation}
\label{eq:smooth}
    \post{0:n} \langle y_{1:n} \rangle (\rmd x_{0:n}) \eqdef \llh{n}^{-1}(y_{1:n}) \chi(\rmd x_0) \prod_{m = 0}^{n - 1} \hk_m(x_m, \rmd x_{m + 1}) \, \md{m}(x_m, x_{m + 1}, y_{m + 1}), 
\end{equation}
on $\Xfd^n$, where   
\begin{equation}
    \label{eq:likelihood}
    \llh{n}(y_{1:n}) \eqdef \idotsint \chi(\rmd x_0) \prod_{m = 0}^{n - 1} \hk_m(x_m, \rmd x_{m + 1}) \, \md{m}(x_m, x_{m + 1}; y_{m + 1})
\end{equation}
is the observed data likelihood. Along the lines of \cite[Proposition~3.1.4]{Cappe:2005:IHM:1088883} one may show that $\post{0:n} \langle Y_{1:n} \rangle$ is, $\pP_\chi$-a.s., the conditional distribution of $X_{0:n}$ given $Y_{1:n}$. The marginal  $\post{n} \langle y_{1:n} \rangle$ of the joint smoothing distribution with respect to its last component $x_n$ is referred to as the \emph{filtering distribution at time $n$}. Consequently, by defining kernel densities $\ud{n}(x_n, x_{n + 1}) \eqdef \md{n}(x_n, x_{n + 1}, y_{n + 1}) \hd{n}(x_n, x_{n + 1})$ for all $n \in \nset$ (while keeping dependence on observations implicit) and letting $\uk{n}$ be the induced transition kernels, the joint-smoothing distributions may be expressed in the form \eqref{eq:FK:path}.
\end{example}

The measures \eqref{eq:FK:path} are generally intractable in two ways. First, in many applications the transition densities $(\ud{n})_{n \in \nset}$ cannot be evaluated pointwise. Returning to Example~\ref{ex:state-space:models} and the context of smoothing in state-space models, this is typically the case when the dynamics of the latent process is governed by a stochastic differential equation (see Example~\ref{eq:durham:gallant} below). Second, even in the case where these transition densities are evaluable, the normalising constant in \eqref{eq:FK:path} is generally intractable. Thus, in order to solve the smoothing problem in full generality, one needs to be able to handle this double intractability, which is the goal of the algorithm that we will develop next. We will proceed in two steps. In the next section, Section~\ref{sec:PaRIS}, we will solve the additive smoothing problem under the temporary assumption that the densities $(\ud{n})_{n \in \nset}$ are tractable; the resulting extension of the PaRIS proposed in  \cite{olsson:westerborn:2014b} to general models in the form \eqref{eq:def:post} and auxiliary particle filters is of independent interest. Then, in Section~\ref{sec:pseudo:marginal:PaRIS}, we abandon the temporary assumption of tractability and assume that the user has only possibly biased proxies of these densities at hand. 

\subsection{The PaRIS}
\label{sec:PaRIS}


\subsubsection{Auxiliary particle filters}
\label{sec:SMC}

Assume for a moment that each transition density $\ud{n}$ is available in a closed form. Then standard SMC methods (see \cite{chopin:papaspiliopoulos:2020} for a recent introduction) can be used to approximate the distribution flows $(\post{0:n})_{n \in \nset}$ and $(\post{n})_{n \in \nset}$ using Monte Carlo samples generated recursively by means of sequential importance sampling and resampling operations. In order to set notations, let us recall the most general class of such algorithms, the so-called auxiliary particle filters \cite{pitt:shephard:1999}. In the light of Remark~\ref{we:vs:delmoral} it is enough to consider particle approximation of the marginals $(\post{n})_{n \in \nset}$. We proceed recursively and assume that we, at time $n \in \nset$, have at hand a sample $(\epart{n}{i}, \ewght{n}{i})_{i = 1}^\N$ of $\Xset_n$-valued \emph{particles} (the $\epart{n}{i}$) and associated nonnegative importance weights (the $\ewght{n}{i}$) such that the self-normalised estimator $\post[\N]{n} h \eqdef \sumwght{n}^{-1} \sum_{i = 1}^\N \ewght{n}{i} h(\epart{n}{i})$, with $\sumwght{n} \eqdef \sum_{i = 1}^\N \ewght{n}{i}$, approximates $\post{n} h$ for every $\post{n}$-integrable function $h$. Then plugging $\post[\N]{n}$ into the recursion \eqref{eq:recursion:FK:marg} yields the approximation $\sum_{i = 1}^\N \pi_n(i, \rmd x)$ of $\post{n + 1}$, where 
\begin{equation} \label{eq:def:partmixt}
\partmixt_n(i, \rmd x) \propto \ewght{n}{i} \uk{n}(\epart{n}{i}, \rmd x)
\end{equation}
is a mixture distribution on $\powerset{\intvect{1}{\N}} \tensprod \Xfd_{n + 1}$. In order to form new particles approximating $\post{n + 1}$, we may draw, using importance sampling, pairs $(\ind{n + 1}{i}, \epart{n + 1}{i})_{i = 1}^\N$ of indices and particles from $\partmixt_n$ and discard the former. For this purpose, we introduce some instrumental mixture distribution 
\begin{equation} \label{eq:cond:instrumental:mixture}
\rho_n(i, \rmd x) \propto \ewght{n}{i} \adjfuncforward{n}(\epart{n}{i}) \, \prop{n}(\epart{n}{i}, \rmd x)
\end{equation}
on the same space, where $\adjfuncforward{n}$ is a real-valued positive \emph{adjustment-weight function} on $\Xset_n$ and $\prop{n}$ is a proposal kernel on $\Xset_n \times \Xfd_{n + 1}$ such that $\uk{n}(x, \cdot) \ll \prop{n}(x, \cdot)$ for all $x \in \Xset_n$. We will always assume that $\prop{n}$ has a transition density $\propdens{n}$ with respect to $\mu_{n + 1}$. A draw $(\ind{n + 1}{i}, \epart{n + 1}{i})$ from $\rho_n$ is easily generated by first drawing $\ind{n + 1}{i}$ from the categorical distribution induced by the adjusted importance weights and then drawing $\epart{n + 1}{i}$ by moving randomly the selected ancestor $\epart{n}{{\ind{n + 1}{i}}}$ according to the proposal kernel. These steps are often referred to as \emph{selection} and \emph{mutation}, respectively. 
Finally, each draw $\epart{n + 1}{i}$ is assigned the updated importance weight proportional to $\rmd \partmixt_n / \rmd \rho_n (\ind{n + 1}{i}, \epart{n + 1}{i})$, and the estimator $\post[\N]{n + 1} h = \sumwght{n + 1}^{-1} \sum_{i = 1}^\N \ewght{n + 1}{i} h(\epart{n + 1}{i})$ approximates $\post{n + 1} h$ for every $\post{n + 1}$-integrable $h$. The full update, which we will refer to as \emph{forward sampling} and express in a short form as,  
\begin{equation} \label{eq:forward:sampling}
    (\epart{n + 1}{i}, \ewght{n + 1}{i})_{i = 1}^\N \sim \mathsf{FS}((\epart{n}{i}, \ewght{n}{i})_{i = 1}^\N),   
\end{equation}
is summarised in Algorithm~\ref{alg:ideal:SMC}.\footnote{Mathematically, the forward sampling operation defines a Markov transition kernel, which motivates the use of the symbol $\sim$ in \eqref{eq:forward:sampling}.} 

\begin{algorithm}[h] 
    \KwData{$(\epart{n}{i}, \ewght{n}{i})_{i = 1}^\N$}
    \KwResult{$(\epart{n + 1}{i}, \ewght{n + 1}{i})_{i = 1}^\N$}
    \For {$i = 1 \to \N$}{
        draw $\ind{n + 1}{i} \sim \cat(\{ \adjfuncforward{n}(\epart{n}{\ell}) \ewght{n}{\ell} \}_{\ell = 1}^\N)$\;
        draw $\epart{n + 1}{i} \sim \prop{n}(\epart{n}{{\ind{n + 1}{i}}}, \cdot)$\;
        set $\ewght{n + 1}{i} \gets \frac{\ud{n}(\epart{n}{{\ind{n + 1}{i}}}, \epart{n + 1}{i})}{\adjfuncforward{n}(\epart{n}{{\ind{n + 1}{i}}}) \propdens{n}(\epart{n}{{\ind{n + 1}{i}}}, \epart{n + 1}{i})}$\;
}
\caption{Forward sampling, \textsf{FS}} \label{alg:ideal:SMC}
\end{algorithm}

With this terminology, the auxiliary particle filter consists of iterated forward sampling operations, and we will assume that the process is initialised by sampling independent particles $(\epart{0}{i})_{i = 1}^\N$ from some proposal $\init$ on $(\Xset_0, \Xfd_0)$ such that $\chi \ll \init$ and letting $\ewght{0}{i} \eqdef \rmd \chi / \rmd \init(\epart{0}{i})$ for all $i$. 

Note that we may, in the light of Remark~\ref{we:vs:delmoral},  obtain particle approximations $(\epart{0:n}{i}, \ewght{n}{i})_{i = 1}^\N$, $n \in \nset$, of the smoothing distribution flow $(\post{0:n})_{n \in \nset}$ by applying the previous sampling scheme to the model formed by $\chi'$ and $(\uk{n}')_{n \in \nset}$. From an algorithmic point of view, it is easy to see that the only change needed is to insert, just after Line 3, the command $\epart{0:n + 1}{i} \gets (\epart{0:n}{\ind{n + 1}{i}}, \epart{n + 1}{i})$ storing the particle paths (in particular, the weight-updating step on Line~4 remains the same). Still, it is well known that repeated selection operations lead to coalescing paths $(\epart{0:n}{i})_{i = 1}^\N$, and as a consequence the variance of this naive smoothing estimator increases rapidly with $n$; indeed, in the case of additive state functionals, the growth in variance is typically quadratic in $n$ (see \cite{olsson:westerborn:2014b} for a discussion), which is unreasonable from a computational point of view. We will thus rely on more advanced, stochastically stable smoothing technology avoiding the particle-path degeneracy problem by taking advantage of the time-uniform convergence of the marginal samples $(\epart{n}{i})_{i = 1}^\N$. This will be discussed in the next section. 

Finally, we note that the re-weighting operation on Line~4 in Algorithm~\ref{alg:ideal:SMC} requires the transition density $\ud{n}$ to be tractable, which is not the case in general. We will return to the general case in Section~\ref{sec:pseudo:marginal:PaRIS}. 


\subsubsection{Backward sampling} 
\label{sec:BS}

The following quantities will play a key role in the following. For each $m \in \nset$, define the  \emph{backward Markov kernel} 
\begin{equation} \label{eq:def:backward:kernel}
    \bkw{m}(x_{m + 1}, \rmd x_m) \eqdef \frac{\post{m}(\rmd x_m) \, \ud{m}(x_m, x_{m + 1})}{\post{m}[\ud{m}(\cdot, x_{m + 1})]}
\end{equation}
on $\Xset_{m + 1} \times \Xfd_m$. In addition, for each $n \in \nsetpos$, let the Markov kernel   
\begin{equation} \label{eq:def:tstat}
\tstat{n}(x_n, \rmd x_{0:n - 1}) \eqdef \prod_{m = 0}^{n - 1} \bkw{m}(x_{m + 1}, \rmd x_m)
\end{equation}
on $\Xset_n \times \Xfd^{n - 1}$ denote the joint law of the backward Markov chain induced by the kernels \eqref{eq:def:backward:kernel} when initialised at $x_n \in \Xset_n$. An important class of sequential Monte Carlo joint-smoothing methods \cite{doucet2000sequential,godsill:doucet:west:2004} is based on the following result. 
\begin{lemma} 
\label{lem:reversibility}
\ 

\begin{itemize} 
\item[(i)]
For all $n \in \nset$ and $h \in \bmf{\Xfd_n \tensprod \Xfd_{n + 1}}$, 
\begin{multline} \label{eq:reversibility}
\iint \post{n}(\rmd x_n) \, \uk{n}(x_n, \rmd x_{n + 1}) \, h(x_n, x_{n + 1}) \\
= \iint \post{n} \uk{n}(\rmd x_{n + 1}) \, \bkw{n}(x_{n + 1}, \rmd x_n) \, h(x_n, x_{n + 1}). 
\end{multline}
\item[(ii)]
For all $n \in \nsetpos$ and $h \in \bmf{\Xfd^n}$, 
$
\post{0:n} h = \post{n} \tstat{n} h. 
$
\end{itemize}
\end{lemma}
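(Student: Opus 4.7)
}

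For part (i), the plan is to recognise the identity as a disintegration/Bayes computation on the product space $\Xset_n \times \Xset_{n+1}$, applied to the measure $\post{n}(\rmd x_n) \, \ud{n}(x_n, x_{n+1}) \, \mu_{n+1}(\rmd x_{n+1})$. Writing $\uk{n}(x_n, \rmd x_{n+1}) = \ud{n}(x_n, x_{n+1}) \, \mu_{n+1}(\rmd x_{n+1})$ reduces the left-hand side to $\iint h(x_n, x_{n+1}) \ud{n}(x_n, x_{n+1}) \, \post{n}(\rmd x_n) \mu_{n+1}(\rmd x_{n+1})$. For the right-hand side, I would substitute the definition \eqref{eq:def:backward:kernel} of $\bkw{n}$ and observe that $\post{n}\uk{n}(\rmd x_{n+1}) = \post{n}[\ud{n}(\cdot, x_{n+1})] \, \mu_{n+1}(\rmd x_{n+1})$, so that the normalising factor $\post{n}[\ud{n}(\cdot, x_{n+1})]$ cancels. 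The two expressions then coincide by Fubini, which requires only that $h$ and $\ud{n}$ are bounded. No obstruction is anticipated here; it is essentially a verification.

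For part (ii), the plan is induction on $n$. The base case $n = 1$ follows directly from \eqref{eq:recursion:FK:path}, \eqref{eq:recursion:FK:marg}, and (i), since $\tstat{1} = \bkw{0}$ by definition. For the inductive step, assume $\post{0:n} h' = \post{n} \tstat{n} h'$ for every bounded $h'$ on $\Xset^n$, and fix $h \in \bmf{\Xfd^{n+1}}$. Using \eqref{eq:recursion:FK:path}, the identity $\post{0:n+1} h = \post{0:n} \tilde h / (\post{n}\uk{n}\1_{\Xset_{n+1}})$ holds with $\tilde h(x_{0:n}) \eqdef \int \uk{n}(x_n, \rmd x_{n+1}) h(x_{0:n+1})$, which is bounded. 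The inductive hypothesis gives $\post{0:n} \tilde h = \post{n} \tstat{n} \tilde h$. I then introduce $H(x_n, x_{n+1}) \eqdef \int \tstat{n}(x_n, \rmd x_{0:n-1}) h(x_{0:n+1})$ and rewrite $\post{n} \tstat{n} \tilde h = \iint \post{n}(\rmd x_n) \uk{n}(x_n, \rmd x_{n+1}) H(x_n, x_{n+1})$; applying (i) to $H$ and recognising, via \eqref{eq:def:tstat}, that $\bkw{n}(x_{n+1}, \rmd x_n) \tstat{n}(x_n, \rmd x_{0:n-1}) = \tstat{n+1}(x_{n+1}, \rmd x_{0:n})$ yields $\post{n} \tstat{n} \tilde h = (\post{n}\uk{n}) \tstat{n+1} h$. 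Dividing by $\post{n}\uk{n}\1_{\Xset_{n+1}}$ and invoking \eqref{eq:recursion:FK:marg} delivers $\post{n+1} \tstat{n+1} h$, completing the induction.

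The only delicate point I foresee is the bookkeeping in the inductive step, namely (a) ensuring that (i) is applied to a function of the pair $(x_n, x_{n+1})$, which is why I factor $\tstat{n}$ into the auxiliary function $H$, and (b) correctly identifying the concatenation of $\bkw{n}$ with $\tstat{n}$ as $\tstat{n+1}$ using \eqref{eq:def:tstat}. Neither is a genuine obstacle, but both require care with the ordering of integration and the measurability of the kernel compositions; everything else reduces to the recursions \eqref{eq:recursion:FK:path}--\eqref{eq:recursion:FK:marg} and part (i).
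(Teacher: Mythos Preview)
Your proposal is correct and follows essentially the same route as the paper: part (i) is the same cancellation of the normalising factor $\post{n}[\ud{n}(\cdot,x_{n+1})]$ after expanding $\bkw{n}$ and $\post{n}\uk{n}$, and part (ii) is the same induction, with your auxiliary function $H$ playing exactly the role of the paper's $\bar h$ and the key step being the application of (i) followed by the identification $\bkw{n}\bar h = \tstat{n+1}h$. One minor remark: Fubini in (i) is justified by boundedness of $h$ together with finiteness of $\post{n}\uk{n}\1_{\Xset_{n+1}}$, not by boundedness of $\ud{n}$ (which the paper does not assume).
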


In the case of state-space models, the identity (ii) above is a well-known result typically referred to as the \emph{backward decomposition} of the joint-smoothing distribution; still, as far as known to the authors, it has never been established in the general setting considered in the present paper, and a proof of Lemma~\ref{lem:reversibility} is hence given in Section~\ref{sec:proof:lem:reversibility} for completeness. Importantly, as noted in \cite{cappe:2009}, the functions $(\tstat{n} h_n)_{n \in \nsetpos}$ can be expressed recursively through  
\begin{equation} \label{eq:forward:smoothing}
\tstat{n + 1} h_{n + 1}(x_{n + 1}) = \int (\tstat{n} h_n(x_n) + \addf{n}(x_n, x_{n + 1})) \, \bkw{n}(x_{n + 1}, \rmd x_n), \quad n \in \nset, 
\end{equation}
with, by convention, $\tstat{0} h_0 \equiv 0$. Here the backward kernel $\bkw{n}$ depends on the marginal $\post{n}$; thus, the recursion is driven by the marginal flow $(\post{n})_{n \in \nset}$, which may again be expressed recursively through \eqref{eq:recursion:FK:marg}. However, as these marginals are, as already mentioned, generally intractable, exact computations need typically to be replaced by approximations. The authors of \cite{delmoral:doucet:singh:2010} propose to approximate the values of each statistic $\tstat{n} h_n$ at a random, discrete support formed by particles. More precisely, assume again that the transition density $\ud{n}$ is tractable and, by induction, that we at time step $n$ have at hand a given particle sample $(\epart{n}{i}, \ewght{n}{i})_{i = 1}^\N$ and a set of statistics $(\tstat[i]{n})_{i = 1}^\N$ such that $\tstat[i]{n}$ is an approximation of $\tstat{n} h_n(\epart{n}{i})$. Then, in order to propagate the statistics $(\tstat[i]{n})_{i = 1}^\N$ forward, one updates, in a first substep, the particle sample $(\epart{n}{i}, \ewght{n}{i})_{i = 1}^\N$ recursively by forward sampling (Algorithm~\ref{alg:ideal:SMC}). After forward sampling, one replaces, in the definition \eqref{eq:def:backward:kernel} of $\bkw{n}$, $\post{n}$ by the corresponding particle approximation, yielding the updates 
\begin{equation} \label{eq:tstat:FFBSm:update}
\tstat[i]{n + 1} = \sum_{j = 1}^\N \trm{n}(i, j) (\tstat[j]{n} + \addf{n}(\epart{n}{j}, \epart{n + 1}{i})), \quad i \in \intvect{1}{\N}, 
\end{equation} 
where we have defined the transition kernel  
\begin{equation} \label{eq:def:trm}
\trm{n}(i, j) \eqdef \frac{\ewght{n}{j} \ud{n}(\epart{n}{j}, \epart{n + 1}{i})}{\sum_{j' = 1}^\N \ewght{n}{j'} \ud{n}(\epart{n}{j'}, \epart{n + 1}{i})}
\end{equation}
on $\intvect{1}{\N} \times \powerset{\intvect{1}{\N}}$. Since computing each $\tstat[i]{n}$ according to \eqref{eq:tstat:FFBSm:update} has a linear computational complexity in the number $\N$ of particles, the overall complexity this approach is \emph{quadratic} in $\N$. In order to deal with this significant computational burden, the authors of \cite{olsson:westerborn:2014b} suggest replacing summation by additional Monte Carlo simulation. More precisely, by sampling, for each $i$, $\K \in \nsetpos$ independent indices $(J^{(i, j)})_{j = 1}^\M$ from $\trm{n}(i, \cdot)$ and replacing \eqref{eq:tstat:FFBSm:update} by  
\begin{equation} \label{eq:PaRIS:update}
\tstat[i]{n + 1} = \frac{1}{\K} \sum_{j = 1}^{\K} \left( \tstat[\bi{n + 1}{i}{j}]{n} + \addf{n}(\epart{n}{\bi{n + 1}{i}{j}}, \epart{n + 1}{i}) \right), \quad i \in \intvect{1}{\N}, 
\end{equation}
the computational complexity can, as we shall soon see, be reduced significantly. At each iteration, the self-normalised estimator $\sumwght{n}^{-1} \sum_{i = 1}^\N \ewght{n}{i} \tstat[i]{n}$ serves as an estimator of the quantity $\post{n} \tstat{n} h_n = \post{0:n} h_n$ of interest. This second operation, which we will refer to as \emph{backward sampling}, 
$$
    (\tstat[i]{n + 1})_{i = 1}^\N \sim \mathsf{BS}((\epart{n}{i}, \tstat[i]{n}, \ewght{n}{i})_{i = 1}^\N, (\epart{n + 1}{i})_{i = 1}^\N),  
$$
is summarised in Algorithm~\ref{alg:ideal:BS}. 

\begin{algorithm}[h] 
    \KwData{$(\epart{n}{i}, \tstat[i]{n}, \ewght{n}{i})_{i = 1}^\N$, $(\epart{n + 1}{i})_{i = 1}^\N$}
    \KwResult{$(\tstat[i]{n + 1})_{i = 1}^\N$}
    \For{$i = 1 \to \N$}{
    \For{$j = 1 \to \K$}{
    draw $\bi{n + 1}{i}{j} \sim \trm{n}(i, \cdot)$\;
    }
    set $\tstat[i]{n + 1} \gets \frac{1}{\K} \sum_{j = 1}^{\K} \left( \tstat[\bi{n + 1}{i}{j}]{n} + \addf{n}(\epart{n}{\bi{n + 1}{i}{j}}, \epart{n + 1}{i}) \right)$\;
}
\caption{Backward sampling, \textsf{BS}} \label{alg:ideal:BS}
\end{algorithm}

Let us examine closer the sampling step on Line~3 in Algorithm~\ref{alg:ideal:BS}. 
In order to keep the algorithmic complexity at a reasonable level, the computation of the normalising constant of $\trm{n}(i, \cdot)$, which consists of $\N$ terms, should be avoided; otherwise, the overall complexity remains quadratic in $\N$. This is possible using, \eg,   
\begin{itemize}
\item[--] \emph{rejection sampling}. This approach relies on the mild assumption that there exists some measurable function $c$ on $\Xset_{n + 1}$ such that $\ud{n}(x_n, x_{n + 1}) \leq c(x_{n + 1})$ for all $x_{n: n + 1} \in \Xset_n \times \Xset_{n + 1}$. Then, following \cite{douc:garivier:moulines:olsson:2010}, $\trm{n}(i, \cdot)$ can be sampled from by generating a candidate $J^\ast$ from $\cat(\{ \ewght{n}{\ell}\}_{\ell = 1}^\N)$ and accepting the same with probability 
\begin{equation} \label{eq:std:acc:prob:backward:sampling}
\accprob \eqdef \frac{\ud{n}(\epart{n}{J^\ast}, \epart{n + 1}{i})}{c(\epart{n + 1}{i})}. 
\end{equation}
The procedure is repeated until acceptance, conditionally to which $J^\ast$ is distributed according to $\trm{n}(i, \cdot)$. Since the $\cat(\{ \ewght{n}{\ell}\}_{\ell = 1}^\N)$ distribution is independent of $i$, this circumvents the need to compute a normalising sum for every $i$. The approach may reduce significantly the computational complexity; indeed, as shown in \cite[Proposition~2]{douc:garivier:moulines:olsson:2010}, the expected overall complexity of the algorithm is \emph{linear} in $\N$ under certain assumptions.   
\item[--] \emph{MCMC methods}. Another possibility is to generate the variables $(\bi{n + 1}{i}{j})_{j = 1}^{\K}$ using the Metropolis-Hastings algorithm. For this purpose, let $\rho$ be some proposal transition density on $\intvect{1}{\N}^2$. Then proceeding recursively, given $\bi{n + 1}{i}{j} = J$, a candidate $J^\ast$ for $\bi{n + 1}{i}{j + 1}$ is sampled from the density $\rho(J, \cdot)$ and accepted with probability 
$$
\accprob[MH] \eqdef 1 \wedge\frac{\ewght{n}{J^\ast} \ud{n}(\epart{n}{J^\ast}, \epart{n + 1}{i}) \rho(J^\ast, J)}{\ewght{n}{J} \ud{n}(\epart{n}{J}, \epart{n + 1}{i}) \rho(J, J^\ast)}. 
$$ 
If rejection, then $\bi{n + 1}{i}{j + 1} = J$. It is close at hand is to let $\rho$ take the form of an independent proposal given by the $\cat(\{ \ewght{n}{i} \}_{i = 1}^\N)$ distribution; in that case $\accprob[MH]$ simplifies to  
\begin{equation} \label{eq:std:MH:prob:backward:sampling}
\accprob[MH] = 1 \wedge \frac{\ud{n}(\epart{n}{J^\ast}, \epart{n + 1}{i})}{\ud{n}(\epart{n}{J}, \epart{n + 1}{i})}. 
\end{equation}
With this approach, the variables $(\bi{n + 1}{i}{j})_{j = 1}^{\K}$ are conditionally dependent; this can be counteracted by including only an $m$-skeleton of this sequence in the update \eqref{eq:PaRIS:update}. An important advantage of this approach over rejection sampling is that it does not require $\ud{n}$ to be dominated.  
\end{itemize}

Finally, combining the forward and backward sampling operations in accordance with Algorithm~\ref{alg:ideal:PaRIS} yields a generalisation of the PaRIS proposed in \cite{olsson:westerborn:2014b} to a general framework comprising Feynman-Kac models and auxiliary particle filters. 

\begin{algorithm}[h] 
    \KwData{$(\epart{n}{i}, \tstat[i]{n}, \ewght{n}{i})_{i = 1}^\N$}
    \KwResult{$(\epart{n + 1}{i}, \tstat[i]{n + 1}, \ewght{n + 1}{i})_{i = 1}^\N$}
    run $(\epart{n + 1}{i}, \ewght{n + 1}{i})_{i = 1}^\N \sim \mathsf{FS}((\epart{n}{i}, \ewght{n}{i})_{i = 1}^\N)$\;
    run $(\tstat[i]{n + 1})_{i = 1}^\N \sim \mathsf{BS}((\epart{n}{i}, \tstat[i]{n}, \ewght{n}{i})_{i = 1}^\N, (\epart{n + 1}{i})_{i = 1}^\N)$\;
    \medskip
\caption{Full PaRIS update.} \label{alg:ideal:PaRIS}
\end{algorithm}

Algorithm~\ref{alg:ideal:PaRIS} is initialised by drawing $(\epart{0}{i})_{i = 1}^\N \sim \init^{\varotimes \N}$ and letting $\ewght{0}{i} = \rmd \chi / \rmd \init(\epart{0}{i})$ and $\tstat[i]{n} = 0$.  

In this scheme, the sample size $\K$ of the backward sampling operation is an algorithmic parameter that has to be set by the user. As shown in Section~\ref{sec:theoretical:results}, the produced estimators are, for all $n \in \nset$, consistent and asymptotically normal for any fixed $\K$ larger than or equal to one. In addition, for any fixed $\K \geq 2$ the algorithm is stochastically stable with an $\mathcal{O}(n)$ variance, which is optimal; see \cite[Section~1]{olsson:westerborn:2014b} for a discussion. These results form a nontrivial extension of similar results obtained by \cite{olsson:westerborn:2014b} in the simpler setting of state-space models and bootstrap particle filters. 

Finally, we remind the reader that we have here considered the idealised situation where the unnormalised transition densities $(\ud{n})_{n \in \nset}$ can be evaluated pointwise, which will generally not be the case for the applications we will consider. Thus, in the next section we will approach the more general case where these transition densities are intractable but may be estimated, and we will show how consistent, asymptotically normal, and stochastically stable estimators can be produced also in such a scenario by pseudo-marginalising separately the forward and backward sampling operations.

\section{Pseudo-marginal PaRIS algorithms}
\label{sec:pseudo:marginal:PaRIS}
\subsection{Pseudo marginalisation in Monte Carlo methods}
\label{sec:pseudo:marginalisation}


Pseudo-marginalisation was originally proposed in \cite{beaumont:2003} in the framework of MCMC methods, and in \cite{andrieu:robert:2009} the method was developed further and provided with a solid theoretical basis. In the following we recapitulate briefly the main idea behind this approach. Consider the problem of sampling from some target distribution $\pi$ defined on some measurable space $(\Xset, \Xfd)$ and having a density with respect to some reference measure $\mu$. This density is assumed to be proportional to some \emph{intractable} nonnegative measurable function $\ell$ on $\Xset$, \ie, $\pi(\rmd x) = \lambda(\rmd x) / \lambda \1_\Xset$, where $\lambda(\rmd x) \eqdef \ell(x) \, \mu(\rmd x)$ is finite. While the target density is intractable we assume that there exist some additional state space $(\Zset, \Zfd)$, a Markov kernel $\mathbf{R}$ on $\Xset \times \Zfd$, and some nonnegative measurable function $\Xset \times \Zset \ni (x, z) \mapsto \ell \langle z \rangle (x)$ known up to a constant of proportionality and such that for all $x \in \Xset$, 
\begin{equation} \label{eq:pseudo:marginalisation}
\int \ell \langle z \rangle(x) \, \mathbf{R}(x, \rmd z) = \ell(x). 
\end{equation}
Thus, a pointwise estimate of $\ell(x)$ can be obtained by generating $\zeta$ from $\mathbf{R}(x, \rmd z)$ and computing the statistic $\ell \langle \zeta \rangle(x)$, the \emph{pseudo marginal}. In Monte Carlo methods, the measure to replace, when necessary, the true marginal $\ell$ by its pseudo marginal is referred to as \emph{pseudo marginalisation}. Interestingly, even though pseudo marginalisation is based on the plug-in principle, it preserves typically the consistency of an algorithm. In order to see the this, let $\bar{\mathsf{X}} \eqdef \mathsf{X} \times \mathsf{Z}$ and $\bar{\mathcal{X}} \eqdef \mathcal{X} \tensprod \mathcal{Z}$; then one may define an extended target distribution $\bar{\pi}(\rmd \bar{x}) \eqdef \bar{\lambda}(\rmd \bar{x}) / \bar{\lambda} \1_{\bar{\mathsf{X}}} = \bar{\lambda}(\rmd \bar{x}) / \lambda \1_{\mathsf{X}}$ on $(\bar{\mathsf{X}}, \bar{\mathcal{X}})$, where 
\begin{equation} \label{eq:extended:target}
\bar{\lambda}(\rmd \bar{x}) \eqdef \ell \langle z \rangle(x) \, \kernel{R}(x, \rmd z) \, \mu(\rmd x)
\end{equation}
(with $\bar{x} = (x, z)$). By \eqref{eq:pseudo:marginalisation}, $\pi$ is the marginal of $\bar{\pi}$ with respect to the $x$ component. This means that we may produce a random sample $(\xi^i)_{i = 1}^N$ in $\Xset$ targeting $\pi$ by generating a sample $(\xi^i, \zeta^i)_{i = 1}^N$ targeting $\bar{\pi}$ and simply discarding the $\Zset$-valued variables $(\zeta^i)_{i = 1}^N$. Let $\rho$ be a Markov transition density on $\Xset$ with respect to the reference measure $\mu$. Then following \cite{andrieu:robert:2009}, a Markov chain $(\xi_m, \zeta_m)_{m \in \nset}$ targeting $\bar{\pi}$ can be produced on the basis of the Metropolis-Hastings algorithm as follows. Given a state $(\xi_m, \zeta_m)$, a candidate $(\xi^\ast, \zeta^\ast)$ for the next state is generated by drawing $\xi^\ast \sim \rho(x) \, \mu(\rmd x)$ and $\zeta^\ast \sim \kernel{R}(\xi^\ast, \rmd z)$ and accepting the same with probability 
$$
\alpha \eqdef 1 \wedge \frac{\ell \langle \zeta^\ast \rangle(\xi^\ast) \rho(\xi^\ast, \xi_m)}{\ell \langle \zeta_m \rangle (\xi_m) \rho(\xi_m, \xi^\ast)},  
$$
which is tractable. Note that $\alpha$ is indeed a pseudo-marginal version of the exact acceptance probability $1 \wedge \ell (\xi^\ast) \rho(\xi^\ast, \xi_m) / \ell (\xi_m) \rho(\xi_m, \xi^\ast)$ obtained if $\ell$ would be known. Note that the auxiliary variable $\zeta$ enters $\alpha$ only through the estimates $\ell \langle \zeta^\ast \rangle (\xi^\ast)$ and $\ell \langle \zeta_m \rangle (\xi_m)$, and since the latter has already been computed at the previous iteration there is no need of recomputing this quantity. In the \emph{Monte-Carlo-within-Metropolis algorithm} (see again \cite{andrieu:robert:2009}) $\ell \langle z \rangle(x)$ is a pointwise importance sampling estimate of $\ell(x)$ based on a Monte Carlo sample $\zeta$ generated from $\mathbf{R}(x, \cdot)$. Alternatively, the extended distribution $\bar{\pi}$ can be sampled using rejection sampling or importance sampling, leading to consistent pseudo-marginal formulations of these algorithms as well. 

In the present paper we will generalise the pseudo-marginal approach towards biased estimation by allowing the function 
\begin{equation} \label{eq:biased:pseudo:marginalisation}
    \ell^\precpar(x) \eqdef \int \ell \langle z \rangle(x) \, \mathbf{R}(x, \rmd z)
\end{equation}
on $\Xset$ to be different from $\ell$. Here $\varepsilon \geq 0$ is some accuracy parameter describing the distance between $\ell^\precpar$ and $\ell$. Such biased estimates appear naturally when the law $\pi$ is, \eg, governed by a diffusion process and the density $\ell$ is approximated on the basis of different discretisation schemes; see the next section. In that case, $\precpar$ plays the role of the discretisation step size. In \eqref{eq:biased:pseudo:marginalisation}, also the estimator $\ell \langle z \rangle(x)$ and the kernel $\mathbf{R}(x, \rmd z)$ may depend on $\precpar$, even though this is suppressed in the notation. By introducing the possibly unnormalised measure $\lambda^\varepsilon(\rmd x) \eqdef \ell^\precpar(\rmd x) \, \mu(\rmd x)$ on $\Xfd$ we may define the \emph{skew} target probability measure
$$
    \pi^\varepsilon(\rmd x) \eqdef \frac{\lambda^\precpar(\rmd x)}{\lambda^\precpar \1_\Xset}
$$
on $\Xfd$. In the biased case, generating a sample $(\xi^i, \zeta^i, \omega^i)_{i = 1}^N$ targeting \eqref{eq:extended:target} will, by \eqref{eq:biased:pseudo:marginalisation}, provide a sample $(\xi^i, \omega^i)_{i = 1}^N$ targeting $\pi^\varepsilon$ as a by-product. Thus, of outmost importance is to obtain control over the bias between $\pi$ and $\pi^\precpar$, which is possible under the assumption that there exists a constant $c \geq 0$ such that for all $h \in \bmf{\Xfd}$ and $\varepsilon$,   
\begin{equation} \label{eq:lipschitz:simple:case}
    \left| \lambda^\varepsilon h - \lambda h \right| \leq c \varepsilon \| h \|_\infty. 
\end{equation}
For instance, in the diffusion process case mentioned above, a condition of type \eqref{eq:lipschitz:simple:case} holds, as we will see in Section~\ref{sec:theoretical:results}, typically true in the case where the density is approximated using the \emph{Durham-Gallant estimator} \cite{durham:gallant:2002}. Using that for all $h \in \bmf{\Xfd}$,  
$$
    \pi^\precpar h - \pi h = \pi^\precpar h \left( 1 - \frac{\lambda^\varepsilon \1_\Xset}{\lambda \1_\Xset} \right) + \frac{\lambda^\precpar h - \lambda h}{\lambda \1_\Xset}, 
$$
we straightforwardly obtain the bound  
$$
    \left| \pi^\precpar h - \pi h \right| \leq \precpar \frac{2 c}{\lambda \1_\Xset} \| h \|_\infty
$$
on the systematic error induced by the skew model. Note that the unbiased case \eqref{eq:pseudo:marginalisation} corresponds to letting $\precpar = 0$ in assumption~\eqref{eq:lipschitz:simple:case}. In the next section we will present a solution to the main problem addressed in Section~\ref{sec:model} exploring a pseudo-marginalised version of the PaRIS discussed in Section~\ref{sec:PaRIS}. 

\subsection{Pseudo-marginal PaRIS}

The algorithm that we will propose relies on the following assumption. 
 
\begin{hypH}
\label{assum:biased:estimate}
Let $(\Zset_n, \Zfd_n)_{n \in \nsetpos}$ be a sequence of general state spaces. For each $n \in \nset$ there exist a Markov kernel $\ukdist{n}$ on $\Xset_n \times \Xset_{n + 1} \times \Zfd_{n + 1}$ and a positive measurable function $\ukest{n}{z}(x_n, x_{n + 1})$ on $\Xset_n \times \Xset_{n + 1} \times \Zset_{n + 1}$ such that for every $x_{n:n + 1} \in \Xset_n \times \Xset_{n + 1}$, drawing $\zeta \sim \ukdist{n}(x_{n:n + 1}, \rmd z)$ and computing $\ukest{n}{\zeta}(x_n, x_{n + 1})$ yields an estimate of $\ud{n}(x_n, x_{n + 1})$.  
\end{hypH}

Under \hypref{assum:biased:estimate}, we denote, for every $n \in \nset$ and $x_{n:n + 1} \in \Xset_n \times \Xset_{n + 1}$, by   
\begin{equation} \label{eq:def:udmod}
\udmod{n}(x_n, x_{n + 1}) \eqdef \int \ukdist{n}(x_{n:n + 1}, \rmd z) \, \ukest{n}{z}(x_n, x_{n + 1})
\end{equation}
the expectation of the statistic $\ukest{n}{\zeta}(x_n, x_{n + 1})$. Here $\precpar$ is an accuracy parameter belonging to some parameter space $\precparsp \subset \rset$ and controlling the bias of the estimated model with respect to the true model; this will be discussed in depth in Section~\ref{sec:lipschitz:continuity}. For each $n \in \nset$ we define the unnormalised kernel 
\begin{equation} \label{eq:def:ukmod}
    \ukmod{n}(x_n, \rmd x_{n + 1}) = \udmod{n}(x_n, x_{n + 1}) \, \mu_{n + 1}(\rmd x_{n + 1}) 
\end{equation}
on $\Xset_n \times \Xfd_{n + 1}$. 
 

\subsubsection{Pseudo-marginal forward sampling}
\label{eq:sec:pm:forward:sampling}

In the case where each $\ud{n}$ is intractable, so is the mixture distribution $\partmixt_n$ defined in \eqref{eq:def:partmixt}. Under \hypref{assum:biased:estimate}, we may, as in Section~\ref{sec:pseudo:marginalisation}, aim at consistent pseudo-marginalisation of the forward-sampling operation by applying self-normalised importance sampling to the extended mixture  
$$
\bar{\pi}_n(i, \rmd x, \rmd z)
\propto \ewght{n}{i} \ukest{n}{z}(\epart{n}{i}, x) \, \mu_{n + 1}(\rmd x) \, \ukdist{n}(\epart{n}{i}, x, \rmd z) 
$$
on $\bar{\Xfd}_{n + 1} \eqdef \powerset{\intvect{1}{\N}} \tensprod \Xfd_{n + 1} \tensprod \Zfd_{n + 1}$ using the instrumental distribution  
$$
\bar{\rho}_n(i, \rmd x, \rmd z) \propto \ewght{n}{i} \adjfuncforward{n}(\epart{n}{i}) \, \prop{n}(\epart{n}{i}, \rmd x) \, \ukdist{n}(\epart{n}{i}, x, \rmd z)
$$
on the same space. Note that the marginal of $\bar{\pi}_n$ with respect to $(i, x)$ is the distribution proportional to $\ewght{n}{i} \ukmod{n}(\epart{n}{i}, \rmd x)$, whose distance to the target $\pi_n$ of interest is controlled by the precision parameter $\precpar$. Here the adjustment multiplier $\adjfuncforward{n}$ and the proposal kernel $\prop{n}$ of the instrumental distribution are as in Section~\ref{sec:SMC}. Each draw from $\bar{\rho}_n$ is assigned an importance weight given by the (tractable) Radon--Nikodym derivative of $\bar{\pi}_n$ to $\bar{\rho}_n$. It is easily seen that this sampling operation, which is detailed in Algorithm~\ref{alg:pm:SMC}, corresponds to replacing the intractable transition density $\ud{n}$ on Line~4 in Algorithm~\ref{alg:ideal:SMC} by an estimate provided by \hypref{assum:biased:estimate}; we will hence refer to Algorithm~\ref{alg:pm:SMC} as \emph{pseudo-marginal forward sampling} and express it concisely as 
$$
    (\epart{n + 1}{i}, \ewght{n + 1}{i})_{i = 1}^\N \sim \mathsf{pmFS}((\epart{n}{i}, \ewght{n}{i})_{i = 1}^\N). 
$$

\begin{algorithm}[h] 
    \KwData{$(\epart{n}{i}, \ewght{n}{i})_{i = 1}^\N$}
    \KwResult{$(\epart{n + 1}{i}, \ewght{n + 1}{i})_{i = 1}^\N$}
    \For {$i = 1 \to \N$}{
        draw $\ind{n + 1}{i} \sim \cat(\{ \adjfuncforward{n}(\epart{n}{\ell}) \ewght{n}{\ell} \}_{\ell = 1}^\N)$\;
        draw $\epart{n + 1}{i} \sim \prop{n}(\epart{n}{{\ind{n + 1}{i}}}, \cdot)$\;
        draw $\zpart{n + 1}{i} \sim \ukdist{n}(\epart{n}{{\ind{n + 1}{i}}}, \epart{n + 1}{i}, \cdot)$\;
        set $\displaystyle 
        \ewght{n + 1}{i} \gets \frac{\ukest{n}{\zpart{n + 1}{i}}(\epart{n}{{\ind{n + 1}{i}}}, \epart{n + 1}{i})}{\adjfuncforward{n}(\epart{n}{{\ind{n + 1}{i}}}) \propdens{n}(\epart{n}{{\ind{n + 1}{i}}}, \epart{n + 1}{i})}$\;
}
\caption{Pseudo-marginal forward sampling, \textsf{pmFS}.} \label{alg:pm:SMC}
\end{algorithm}

Iterating recursively, after initialisation as in Section~\ref{sec:SMC}, pseudo-marginal forward sampling yields a generalisation of the random-weight particle filter proposed in \cite{fearnhead2008particle} in the context of partially observed diffusion processes. 

\subsubsection{Pseudo-marginal backward sampling}
\label{eq:sec:backward:sampling:pseudo:marg}

Let us turn our focus to backward sampling. As intractability of $\ud{n}$ implies intractability of the kernel $\trm{n}$ (defined in \eqref{eq:def:trm}), we aim at further pseudo marginalisation by embedding $\trm{n}$ into the extended kernel  
$$
\trmext{n}(i, j, \rmd z) \propto \ewght{n}{j} \ukest{n}{z}(\epart{n}{j}, \epart{n + 1}{i}) \, \ukdist{n}(\epart{n}{j}, \epart{n + 1}{i}, \rmd z)
$$
on $\intvect{1}{\N} \times \powerset{\intvect{1}{\N}} \tensprod \Zfd_{n + 1}$. For every $i$, the marginal of $\trmext{n}(i, \cdot)$ with respect to the $j$ component is, by \eqref{eq:def:udmod}, proportional to $\ewght{n}{j} \udmod{n}(\epart{n}{j}, \epart{n + 1}{i})$, a distribution that we expect to be close to $\trm{n}(i, \cdot)$ for small $\precpar$. The intractable sampling step on Line~3 in Algorithm~\ref{alg:ideal:BS} can therefore be replaced by sampling from $\trmext{n}(i, \cdot)$, after which the auxiliary variables are discarded. The latter sampling operation will be examined in detail in the next section. This approach, which we express concisely as 
$$
    (\tstat[i]{n + 1})_{i = 1}^\N \sim \mathsf{pmBS}((\epart{n}{i}, \tstat[i]{n}, \ewght{n}{i})_{i = 1}^\N, (\epart{n + 1}{i})_{i = 1}^\N), 
$$
is summarised in Algorithm~\ref{alg:pm:backward:sampling}.  

\begin{algorithm}[h] 
    \KwData{$(\epart{n}{i}, \tstat[i]{n}, \ewght{n}{i})_{i = 1}^\N$, $(\epart{n + 1}{i})_{i = 1}^\N$}
    \KwResult{$(\tstat[i]{n + 1})_{i = 1}^\N$}
    \For{$i = 1 \to \N$}{
    \For{$j = 1 \to \K$}{
    draw $( \bi{n + 1}{i}{j}, \zpart{n + 1}{(i, j)}) \sim \trmext{n}(i, \cdot)$\;
    }
    set $\tstat[i]{n + 1} \gets \frac{1}{\K} \sum_{j = 1}^{\K} \left( \tstat[\bi{n + 1}{i}{j}]{n} + \addf{n}(\epart{n}{\bi{n + 1}{i}{j}}, \epart{n + 1}{i}) \right)$\;
}
\caption{Pseudo-marginal backward sampling, \textsf{pmBS}.} \label{alg:pm:backward:sampling}
\end{algorithm}

It remains to discuss how to sample from the extended distribution $\trmext{n}(i, \cdot)$. 
In the following we propose two possible approaches, which can be viewed as pseudo-marginal versions of the techniques discussed in Section~\ref{sec:BS}.  

\subsubsection*{Rejection sampling from $\trmext{n}$} 

Assume that there exists some measurable nonnegative function $c$ on $\Xset_{n + 1}$ such that for all $(x_{n:n + 1}, z) \in \Xset_n \times \Xset_{n + 1} \times \Zset_{n + 1}$,
$
\ukest{n}{z}(x_n, x_{n + 1}) \leq c(x_{n + 1}).  
$
Since this condition allows the Radon--Nikodym derivative of $\trmext{n}(i, \cdot)$ with respect to the probability measure 
\begin{equation} \label{eq:proposal:pm:rejection}
\rho_n^i(j, \rmd z) \propto \ewght{n}{j} \kernel{R}(\epart{n}{j}, \epart{n + 1}{i}, \rmd z)
\end{equation} 
on $\powerset{\intvect{1}{\N}} \tensprod \Zfd_{n + 1}$ to be bounded uniformly as 
$$
\frac{\rmd \trmext{n}(i, \cdot)}{\rmd \rho_n^i}(j, z) \leq \frac{c(\epart{n + 1}{i}) \sumwght{n}}{\sum_{i' = 1}^\N \ewght{n}{i'} \udmod{n}(\epart{n}{i'}, \epart{n + 1}{i})},  
$$ 
we may sample from the target $\trmext{n}(i, \cdot)$ using rejection sampling. Thus, the following procedure is iterated until acceptance: simulate a candidate $(J^\ast, \zeta^\ast)$ from $\rho_n^i$ by drawing $J^\ast \sim \cat(\{ \ewght{n}{\ell} \}_{\ell = 1}^\N)$ and $\zeta^\ast \sim \kernel{R}(\epart{n}{J^\ast}, \epart{n + 1}{i}, \rmd z)$; then accept the same with (tractable) probability 
$$
\accprobext \eqdef \frac{\ukest{n}{\zeta^\ast}(\epart{n}{J^\ast}, \epart{n + 1}{i})}{c(\epart{n + 1}{i})}. 
$$ 
Then conditionally to acceptance, the candidate has distribution $\trmext{n}(i, \cdot)$.  
Notably, the probability $\accprobext$ is obtained by simply plugging a transition density estimate provided by \hypref{assum:biased:estimate} into the probability $\accprob$ (see \eqref{eq:std:acc:prob:backward:sampling}) corresponding to the case where $\ud{n}$ is known. Moreover, since the proposal density is independent of $i$, the expected complexity of this sampling schedule is linear in the number of particles (we refer again to \cite{douc:garivier:moulines:olsson:2010}).  

\subsubsection*{MCMC sampling from $\trmext{n}$}

In some cases, bounding the estimator $\ukest{n}{z}(x_n, x_{n + 1})$ uniformly in $z$ and $x_n$ is not possible. Still, we may sample from $\trmext{n}(i, \cdot)$ using the Metropolis-Hastings algorithm with $\rho_n$ (in \eqref{eq:proposal:pm:rejection}) as independent proposal. In this case, $(\bi{n + 1}{i}{j}, \zpart{n + 1}{(i, j)})_{j = 1}^{\K}$ is a Markov chain generated recursively by the following mechanism. Given a state $\bi{n + 1}{i}{j} = J$ and $\zpart{n + 1}{(i, j)} = \zeta$, a candidate $(J^\ast, \zeta^\ast)$ for the next state is drawn from $\rho_n$ (as described above) and accepted with probability 
$$
\accprobext[MH] \eqdef 1 \wedge \frac{\ukest{n}{\zeta^\ast}(\epart{n}{J^\ast}, \epart{n + 1}{i})}{\ukest{n}{\zeta}(\epart{n}{J}, \epart{n + 1}{i})}. 
$$  
In the case of rejection, the next state is assigned the previous state. The resulting Markov chain has $\trmext{n}(i, \cdot)$ as stationary distribution and similar to the case of rejection sampling, the acceptance probability $\accprobext[MH]$ can be viewed as a plug-in estimate of the corresponding probability $\accprob[MH]$ (see \eqref{eq:std:MH:prob:backward:sampling}). 

\subsubsection{Pseudo-marginal PaRIS: full update} 

Combining the pseudo-marginal forward and backward sampling operations yields a pseudo-marginal PaRIS update described in the following algorithm, which is the main contribution of this section.  

\begin{algorithm}[h] 
    \KwData{$(\epart{n}{i}, \tstat[i]{n}, \ewght{n}{i})_{i = 1}^\N$}
    \KwResult{$(\epart{n + 1}{i}, \tstat[i]{n + 1}, \ewght{n + 1}{i})_{i = 1}^\N$}
    run $(\epart{n + 1}{i}, \ewght{n + 1}{i})_{i = 1}^\N \sim \mathsf{pmFS}((\epart{n}{i}, \ewght{n}{i})_{i = 1}^\N)$\;
    run $(\tstat[i]{n + 1})_{i = 1}^\N \sim \mathsf{pmBS}((\epart{n}{i}, \tstat[i]{n}, \ewght{n}{i})_{i = 1}^\N, (\epart{n + 1}{i})_{i = 1}^\N)$\; 
    \medskip
\caption{Full pseudo-marginal PaRIS update.} \label{alg:pm:PaRIS}
\end{algorithm}

Algorithm~\ref{alg:pm:PaRIS} is initialised by drawing $(\epart{0}{i})_{i = 1}^\N \sim \chi^{\varotimes \N}$ and letting $\ewght{0}{i} = \rmd \chi / \rmd \init(\epart{0}{i})$ and $\tstat[i]{n} = 0$.

We now illustrate \hypref{assum:biased:estimate} by a few examples. 

\begin{example}[Durham--Gallant estimator]
\label{eq:durham:gallant}
As an illustration, we return to the state-space model framework discussed in Example~\ref{ex:state-space:models}. Let $\Xset \eqdef \rset^{d_x}$ and $\Yset \eqdef \rset^{d_y}$ be equipped with their respective Borel $\sigma$-fields $\Xfd$ and $\Yfd$, and let $(X_t)_{t > 0}$ be some diffusion process on $\Xset$ driven by the homogeneous stochastic differential equation
\begin{equation} \label{eq:SDE}
\rmd X_t = \mu(X_t) \, \rmd t + \sigma(X_t) \, \rmd W_t, \quad t > 0, 
\end{equation}
where $X_0 = 0$, $(W_t)_{t > 0}$ is $d_x$-dimensional Brownian motion, $b : \Xset \to \Xset$ and $\sigma : \Xset \to \Xset^2$ are twice differentiable with bounded first and second order derivatives. In addition, the matrix $\sigma \sigma^\intercal$ is uniformly non-degenerate. Let $(\mathcal{F}_t)_{t > 0}$ be the natural filtration generated by the process $(X_t)_{t > 0}$. The state sequence $(X_t)_{t > 0}$ is latent but partially observed at discrete time points $(t_n)_{n \in \nsetpos}$ which are assumed to be equally spaced for simplicity, \ie, $t_n = t_1 + \delta (n - 1)$ for all $n$ and some $\delta > 0$. Abusing notations, we denote $X_n \eqdef X_{t_n}$ and let $q_\delta$ be the transition density of $(X_n)_{n \in \nsetpos}$. Denote by $\kernel{Q}_\delta$ the transition kernel induced by $q_\delta$. In general, $q_\delta$ is intractable, which makes the problem of computing online, for a given data stream $(y_n)_{n \in \nsetpos}$ in $\Yset$, the sequence of joint-smoothing distributions \eqref{eq:smooth} in models of this sort very challenging. Still, using the Euler scheme, one may, for small $\delta$, approximate $q_\delta$ by 
$$
\bar{q}_\delta(x_n, x_{n + 1}) \eqdef \phi(x_{n + 1}; x_n + \delta \mu(x_n), \delta \sigma^2(x_n)), 
$$ 
where $\phi(\cdot; m, s^2)$ is the density of the Gaussian distribution with mean $m$ and variance $s^2$. Let $\bar{\kernel{Q}}_\delta$ be the transition kernel induced by $\bar{q}_\delta$. Since the approximation $\bar{q}_\delta$ is poor for $\delta$ not small enough, we may instead, as suggested in \cite{durham:gallant:2002}, pick some finer step size $\precpar \in \precparsp_\delta \eqdef \{ \delta / n : n \in \nsetpos \}$ and estimate the density $q_\delta(x_n, x_{n + 1})$ by $q_\delta \langle \zeta \rangle (x_n, x_{n + 1})$, where $\zeta = (\zeta^i)_{i = 1}^L$ are independent draws from some proposal $r(x_n, x_{n + 1}, z) \, \rmd z$ on $\Xset^2 \times \Xfd^{\delta / \precpar - 1}$ and  
$$
q_\delta \langle z \rangle (x_n, x_{n + 1}) \eqdef \frac{1}{L} \sum_{i = 1}^L \frac{\prod_{k = 1}^{\delta / \precpar} \bar{q}_\precpar(z_{k - 1}^i, z_k^i)}{r(x_n, x_{n + 1}, z^i)},  
$$
with $z^i = (z_1^i, \ldots, z_{\delta / \precpar - 1}^i)$ and, by convention, $z_0^i = x_n$ and $z_{\delta / \precpar}^i = x_{n + 1}$. In \cite{durham:gallant:2002}, $r(x_n, x_{n + 1}, z) \, \rmd z$ is the distribution of a discretised (possibly modified) \emph{Brownian bridge}, \ie, Brownian motion started at $x_n$ and conditioned to terminate at $x_{n + 1}$. 

Let $Y_n$ denote the $\Yset$-valued observation at time $t_n$. We will consider two different models for the observation process $(Y_n)_{n \in \nset}$.    

\subsubsection*{Case~1}
First, assume that for all $n \in \nsetpos$, 
$$
Y_n \mid \mathcal{F}_{t_n} \sim \md{n - 1}(X_{n - 1}, X_n, y_n) \, \rmd y_n, 
$$
where $\md{n - 1}$ is some tractable transition density with respect to Lebesgue measure. 
In this case, \hypref{assum:biased:estimate} holds with the estimator 
$$
\ukest{n}{z}(x_n, x_{n + 1}) = q_\delta \langle z \rangle (x_n, x_{n + 1}) \md{n}(x_n, x_{n + 1}, y_{n + 1})
$$ 
and the instrumental kernel 
$$\ukdist{n}(x_n, x_{n + 1}, \rmd z) = \prod_{i = 1}^L r(x_n, x_{n + 1}, z^i) \, \rmd z^i.
$$
Finally, we note that 
\begin{equation} \label{eq:ukmod:durham:gallant}
\ukmod{n}(x_n, \rmd x_{n + 1}) = \bar{\kernel{Q}}_\precpar^{\delta / \precpar}(x_n, \rmd x_{n + 1}) \, \md{n}(x_n, x_{n + 1}, y_{n + 1}), 
\end{equation}
which is generally intractable.  

\subsubsection*{Case 2} Alternatively, we may assume that $(Y_n)_{n \in \nsetpos}$ are discrete observations of the solution to the stochastic differential equation 
$$
\rmd Y_t = \tilde{\mu}(X_t, Y_t) \, \rmd t + \tilde{\sigma}(X_t, Y_t) \, \rmd \tilde{W}_t, \quad t > 0, 
$$
where $Y = 0$, $(\tilde{W}_t)_{t > 0}$ is $d_y$-dimensional Brownian motion independent of $(W_t)_{t > 0}$ and $\tilde{\mu} : \Xset \times \Yset \to \Yset$ and $\tilde{\sigma} : \Xset \times \Yset \to \Yset^2$ are known functions which are twice differentiable with bounded first and second order derivatives. In addition, the matrix $\tilde{\sigma} \tilde{\sigma}^\intercal$ is uniformly non-degenerate. Denote by $p_\delta$ the transition density of $(X_t, Y_t)_{t > 0}$. In this case, the joint-smoothing distributions can, for a given data stream $(y_n)_{n \in \nsetpos}$, be expressed as path measures \eqref{eq:FK:path} induced by $\ud{n}(x_n, x_{n + 1}) = p_\delta(x_n, y_n, x_{n + 1}, y_{n + 1})$, $n \in \nset$. Since also $p_\delta$ is generally intractable we subject the bivariate process to Euler discretisation, yielding the approximation 
\begin{multline}
\bar{p}_\delta(x_n, y_n, x_{n + 1}, y_{n + 1}) \\
\eqdef \phi \left( x_{n + 1}, y_{n + 1}; (x_n + \delta \mu(x_n), y_n + \delta \tilde{\mu}(x_n, y_n))^\intercal, \delta \, \mbox{diag}(\sigma^2(x_n), \tilde{\sigma}^2(x_n, y_n)) \right) \label{eq:td:bivariate:process}
\end{multline}
of $p_\delta$. Denote by $\bar{\kernel{P}}_\delta$ the Markov kernel induced by $\bar{p}_\delta$. In the case of sparse observations we may improve the approximation $\bar{p}_\delta$ by picking again some finer step size $\precpar \in \mathcal{E}_\delta$ and computing (by swapping, in~\eqref{eq:ukmod:durham:gallant}, $\bar{q}_\delta$ for $\bar{p}_\delta$ and letting $r(x_n, y_n, x_{n + 1}, y_{n + 1}, z) \, \rmd z$ be the distribution of a discretised, $\rset^{d_x + d_y}$-valued Brownian bridge started in $(x_n, y_n)$ and conditioned to terminate in $(x_{n + 1}, y_{n + 1})$) the Durham--Gallant estimator $p_\delta \langle z \rangle$. In this case, $\ukest{n}{z}(x_n, x_{n + 1}) = p_\delta \langle z \rangle (x_n, y_n, x_{n + 1}, y_{n + 1})$, which yields
\begin{equation} \label{eq:ukmod:durham:gallant:case:2}
\ukmod{n}(x_n, \rmd x_{n + 1}) = \bar{p}_\precpar^{\delta / \precpar}(x_n, y_n, x_{n + 1}, y_{n + 1}) \, \rmd x_{n + 1}, 
\end{equation}
where $\bar{p}_\precpar^{\delta / \precpar}$ denotes the transition density of (the $\delta / \precpar$-skeleton) $\bar{\kernel{P}}_\precpar^{\delta / \precpar}$.  
\end{example}

\begin{example}[the exact algorithm] \label{ex:exact:algorithm}
We consider again the partially observed diffusion process model in Example~\ref{eq:durham:gallant}. In the special case where the diffusion process governed by \eqref{eq:SDE} can be transformed into one with a constant diffusion term through the \emph{Lamperti transformation}, it was shown in \cite{beskos:papaspiliopoulos:roberts:fearnhead:2006,fearnhead2008particle} how unbiased estimation of $q_\delta$ can be achieved using generalised Poisson estimators. In our setup, this simply yields $\udmod{n} = \ud{n}$ for all $n$. We refer to the mentioned papers for details. 
\end{example}

\begin{example} 
[approximate Bayesian computation smoothing]
Consider smoothing in a fully dominated general state-space HMM for which the state likelihood functions $(\md{n}(\cdot, y_n))_{n \in \nset}$ are intractable (or expensive to evaluate) for any given sequence $(y_n)_{n \in \nset}$ of observations in $\rset^{d_y}$. In the case where it is possible (or faster) to sample observation emissions according to the kernels $(\mk_n)_{n \in \nset}$ one may then 
take an \emph{approximate-Bayesian-computation} (ABC) approach (see {\eg}  \cite{marin:pudlo:robert:ryder:2012}), and replace any value $\md{n}(x_n, y_n)$ by a point estimate $\kappa_\precpar(\zeta_n - y_n)$, where $\zeta_n \sim \mk_n(x_n, \cdot)$ and $\kappa_\varepsilon$ is a $d_y$-dimensional kernel density scaled by some bandwidth $\precpar > 0$. In \cite{martin:jasra:singh:whiteley:delmoral:maccoy:2014}, the authors apply the forward-only smoothing approach of \cite{delmoral:doucet:singh:2010} to this approximate model, yielding a particle-based ABC smoothing algorithm. Also this framework is covered by \hypref{assum:biased:estimate}, by letting $\ukest{n}{z}(x_n, x_{n + 1}) = \hd{n}(x_n, x_{n + 1}) \kappa_\precpar(z - y_{n + 1})$ and $\ukdist{n}(x_n, x_{n + 1}, \rmd z) = \mk_n(x_{n + 1}, \rmd z)$. In this case, 
\begin{equation} \label{eq:ukmod:ABC:smoothing}
\ukmod{n}(x_n, \rmd x_{n + 1}) = \hk_n(x_n, \rmd x_{n + 1}) \int \kappa_\precpar(z - y_{n + 1}) \, \mk_n(x_{n + 1}, \rmd z). 
\end{equation}
\end{example}

\section{Theoretical results}
\label{sec:theoretical:results}
\subsection{Convergence of pseudo-marginal PaRIS estimates}
\label{sec:convergence:results}

\subsubsection{Convergence of Algorithm~\ref{alg:pm:PaRIS}}
\label{sec:convergence:pm:PaRIS}

In \cite{olsson:westerborn:2014b}, the authors established strong consistency and asymptotic normality of the PaRIS in the framework of fully dominated general state-space HMMs and the bootstrap particle filter, \ie, in the simple case where $\adjfuncforward{n} \equiv 1$ and $\kissforward{n}{n} \equiv \hd{n}$ for all $n$. In the following we will extend these results to the considerably more general setting comprising models \eqref{eq:def:post} and the pseudo-marginal PaRIS in Algorithm~\ref{alg:pm:PaRIS}. More precisely, we will show that each weighted sample $(\epart{n}{i}, \tstat[i]{n}, \ewght{n}{i})_{i = 1}^N$, $n \in \nset$, produced by Algorithm~\ref{alg:pm:PaRIS} satisfies exponential concentration (Theorem~\ref{cor:hoeffding:tau:marginal}) and asymptotic normality (Theorem~\ref{cor:clt:pseudo:marginal:paris}) with respect to the expected additive functional $h_n$ under the \emph{skew} path model 
$$
\postmod{0:n}(\rmd x_{0:n}) \propto \chi(\rmd x_0) \prod_{m = 0}^{n - 1} \ukmod{m}(x_m, \rmd x_{m + 1}), \quad n \in \nset. 
$$
Even though these results are established along the lines of the corresponding proofs in \cite{olsson:westerborn:2014b}, it is the matter of non-trivial adaptations. As explained in Section~\ref{sec:introduction}, a random-weight particle filter (iterated pseudo-marginal forward sampling) can be viewed as a standard particle filter evolving on an extended state space comprising also the states of the auxiliary variables, and hence its convergence follows from standard SMC convergence results \cite{fearnhead2008particle}. On the contrary, since Algorithm~\ref{alg:pm:PaRIS} involves \emph{two} levels of pseudo marginalisation (with respect to both the forward-sampling and the backward-sampling operations), it cannot be described equivalently as a special instance of the original PaRIS (even in its general form given by Algorithm~\ref{alg:ideal:PaRIS}) operating on an extended space. Thus, there is no free lunch when it concerns the theoretical analysis of this scheme. Furthermore, in the case of fully dominated HMMs and when forward sampling is guided by the bootstrap filter, which was the setting in \cite{olsson:westerborn:2014b}, the conditional distribution of the particles given their ancestors (\ie, the marginal of $\rho_n$ in \eqref{eq:cond:instrumental:mixture} with respect to $x$) coincides, at any time point $n$, with the denominator of the backward kernel. Mathematically, this enables a cancellation that facilitates the analysis significantly. However, this simplification is not possible once the particle dynamics is guided by general proposal kernels, which is necessarily the case in Algorithm~\ref{alg:pm:PaRIS}. This complicates the analysis; see Remark~\ref{rem:non-trivial:extension} for further details.

To be able to describe our results in full detail, we also need to introduce, for every $n \in \nset$, the skew backward Markov kernels 
$$
    \bkmod{n}(x_{n + 1}, \rmd x_n) \eqdef \frac{\postmod{n}(\rmd x_n) \, \udmod{n}(x_n, x_{n + 1})}{\postmod{n}[\udmod{n}(\cdot, x_{n + 1})]}
$$
on $\Xset_{n + 1} \times \Xfd_n$ as well as the joint law 
\begin{equation} \label{eq:skew:backward:law}
\tstatmod{n}(x_n, \rmd x_{0:n - 1}) \eqdef \prod_{m = 0}^{n - 1} \bkmod{m}(x_{m + 1}, \rmd x_m)
\end{equation}
on $\Xset_n \times \Xfd^{n - 1}$. 

The analysis will be carried through under the following assumption.
\begin{hypH}
\label{assum:bound:filter:pseudomarginal}
For all $n \in \nset$, the functions $\adjfuncforward{n}$ and  
\begin{align*}
&\wgtfuncext{n} : \Xset_n \times \Xset_{n + 1} \times \Zset_{n + 1} \ni (x_n, x_{n + 1}, z) \mapsto \frac{\ukest{n}{z}(x_n, x_{n + 1})}{\adjfuncforward{n}(x) \kissforward{n}{n}(x_n, x_{n + 1})}, \\
&\wgtfuncmod{n} : \Xset_n \times \Xset_{n + 1} \ni (x_n, x_{n + 1}) \mapsto \frac{\udmod{n}(x_n, x_{n + 1})}{\adjfuncforward{n}(x_n) \kissforward{n}{n}(x_n, x_{n + 1})}
\end{align*}
are bounded. So is also $\initwgtfunc : \Xset_0 \ni x_0 \mapsto \rmd \chi / \rmd \init(x_0)$. 
\end{hypH}


\subsubsection*{Hoeffding inequality}

Our first theoretical result is the following Hoeffding-type concentration inequality, which also plays a critical role for the derivation of the central limit theorem (CLT) in Theorem~\ref{cor:clt:pseudo:marginal:paris}. For every $n \in \nset$, let $\bmaf{\Xfd^n}$ denote the set of additive functionals \eqref{eq:add:func} such that $\addf{m} \in \bmf{\Xfd_m \tensprod \Xfd_{m + 1}}$ for all $m \in \intvect{0}{n - 1}$.   
\begin{theorem}
\label{cor:hoeffding:tau:marginal}
Assume \hypref[assum:biased:estimate]{assum:bound:filter:pseudomarginal}. Then for every $n \in \nset$, $\precpar \in \precparsp$, $h_n \in \bmaf{\Xfd^n}$, and $\K \in \nsetpos$ there exists $(c_n, d_n) \in \rsetpos^2$ such that for all $\N \in \nsetpos$ and $\epsilon > 0$,
$$
\pP\left(\left| \sum_{i = 1}^\N \frac{\ewght{n}{i}}{\sumwght{n}} \tstat[i]{n} - \postmod{0:n} h_n \right| \geq \epsilon \right) \leq c_n \exp \left( - d_n \N \epsilon^2 \right).
$$
\end{theorem}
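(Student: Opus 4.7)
The proof proceeds by induction on $n$, following the standard strategy for analysing particle smoothers but with careful accounting for the two pseudo-marginal layers in Algorithm~\ref{alg:pm:PaRIS}. The base case $n = 0$ is immediate because, by initialisation, $\tstat[i]{0} \equiv 0$ and $\postmod{0:0} h_0 = 0$, so the left-hand probability vanishes for every $\epsilon > 0$. Assume the statement at time $n$ for every $\K \in \nsetpos$ and every $h \in \bmaf{\Xfd^n}$.

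By Lemma~\ref{lem:reversibility}(ii) applied to the skew model, $\postmod{0:n+1} h_{n+1} = \postmod{n+1} H_{n+1}$ where $H_{n+1} \eqdef \tstatmod{n+1} h_{n+1} \in \bmf{\Xfd_{n+1}}$ is bounded. Let $\mathcal{G}_{n+1}$ denote the $\sigma$-field generated by the entire particle swarm up to, and including, the pseudo-marginal forward sampling step at time $n+1$, so that, conditional on $\mathcal{G}_{n+1}$, the backward draws $\{\bi{n+1}{i}{j} : (i,j) \in \intvect{1}{\N} \times \intvect{1}{\K}\}$ are independent. Decompose the error as
\begin{align*}
\sum_{i=1}^\N \frac{\ewght{n+1}{i}}{\sumwght{n+1}} \tstat[i]{n+1} - \postmod{0:n+1} h_{n+1} = A_\N + B_\N + C_\N,
\end{align*}
where $A_\N$ is the centred backward-sampling Monte Carlo error $\sum_i (\ewght{n+1}{i}/\sumwght{n+1})(\tstat[i]{n+1} - \pE[\tstat[i]{n+1} \mid \mathcal{G}_{n+1}])$; $B_\N$ is the error induced by replacing $\bkmod{n}$ by its particle surrogate and $H_n(\epart{n}{k})$ by $\tstat[k]{n}$ in the one-step backward recursion~\eqref{eq:forward:smoothing}; and $C_\N$ is the standard self-normalised pseudo-marginal filter error associated with the bounded test function $H_{n+1}$.

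Each term is handled separately. Term $A_\N$ is controlled by a conditional Hoeffding inequality: given $\mathcal{G}_{n+1}$, the summands are independent, bounded (by \hypref{assum:bound:filter:pseudomarginal} and boundedness of $h_{n+1}$), and of mean zero; integrating out $\mathcal{G}_{n+1}$ preserves the exponential rate. Term $C_\N$ follows from existing Hoeffding bounds for the auxiliary particle filter viewed on the extended state space $(\Xset_n \times \Zset_n)_{n \in \nset}$ (as discussed in Section~\ref{sec:pseudo:marginal:PaRIS}), since the forward weights are bounded under \hypref{assum:bound:filter:pseudomarginal}. Term $B_\N$ is further split into (i) the error of the implied particle approximation of the skew backward integral $\int (H_n(x_n) + \addf{n}(x_n, \epart{n+1}{i})) \, \bkmod{n}(\epart{n+1}{i}, \rmd x_n)$ at each point $\epart{n+1}{i}$, and (ii) the induction-hypothesis error of substituting $\tstat[k]{n}$ for $H_n(\epart{n}{k})$ inside that average; both pieces are controlled uniformly in $i$ by the inductive Hoeffding bound applied to suitable length-$n$ additive functionals, combined once more with the standard pseudo-marginal particle-filter bound.

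The main obstacle is handling the two self-normalised ratios jointly. One exploits the decomposition $a/b - c = (a - bc)/b$ together with an auxiliary Hoeffding bound ensuring that $\sumwght{n+1}/\N$ stays bounded away from zero except on an event of exponentially small probability; this step is where the boundedness of $\wgtfuncext{n}$ and $\wgtfuncmod{n}$ in \hypref{assum:bound:filter:pseudomarginal} is indispensable. A secondary subtlety is the conditional-independence structure used in the bound on $A_\N$: although each $\bi{n+1}{i}{j}$ is drawn from $\trmext{n}(i, \cdot)$, whose normaliser depends on the entire swarm at time $n$, the draws remain conditionally independent across $(i,j)$ given $\mathcal{G}_{n+1}$, which is precisely what legitimises the conditional Hoeffding step (in the MCMC-based backward sampling variant of Section~\ref{eq:sec:backward:sampling:pseudo:marg}, this step must be replaced by a concentration inequality for uniformly ergodic Markov chains).
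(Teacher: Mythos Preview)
Your decomposition has a genuine gap in the handling of $B_\N$. The inductive hypothesis you carry, namely Theorem~\ref{cor:hoeffding:tau:marginal} itself, controls only $\sum_j \ewght{n}{j} \tstat[j]{n} / \sumwght{n}$; but the pieces (i) and (ii) of $B_\N$ involve, for each $i$, ratios of the form
\[
\frac{\sum_{j} \ewght{n}{j} \udmod{n}(\epart{n}{j},\epart{n+1}{i})\,\tstat[j]{n}}{\sum_{j} \ewght{n}{j} \udmod{n}(\epart{n}{j},\epart{n+1}{i})},
\]
i.e.\ $\tstat[j]{n}$ weighted by a non-constant test function of $\epart{n}{j}$. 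This is outside the scope of your inductive hypothesis. Even after strengthening the hypothesis to affine statistics $\sum_j \ewght{n}{j}\{\tstat[j]{n} f_n(\epart{n}{j}) + \ftd{n}(\epart{n}{j})\}/\sumwght{n}$ (which is exactly what the paper does in Proposition~\ref{prop:hoeffding:tau:marginal}), you would only obtain, for each \emph{fixed} $x \in \Xset_{n+1}$, an exponential bound on $|e_\N(x)|$ where $e_\N(x) \eqdef \pE[\tstat[i]{n+1}\mid\calG{n+1}{\N}]\big|_{\epart{n+1}{i}=x} - H_{n+1}(x)$. Your claim that this is ``controlled uniformly in $i$'' is the real problem: a pointwise Hoeffding bound does not imply $\sup_x |e_\N(x)|$ is small with exponentially high probability, and the points $\epart{n+1}{i}$ at which you need the bound are themselves random.

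The paper avoids both issues simultaneously by conditioning not on your $\mathcal{G}_{n+1}$ but on the coarser $\calF{n}{\N}$ and by computing the conditional expectation of the \emph{unnormalised, weighted} quantity $\ewght{n+1}{i}\{\tstat[i]{n+1} f_{n+1}(\epart{n+1}{i}) + \ftd{n+1}(\epart{n+1}{i})\}$ directly (Lemma~\ref{lem:expectation:incremental:marginal}). A cancellation then occurs---the particle backward kernel $\trmext{n}$ disappears entirely from the conditional mean---and one is left with
\[
(\post[\N]{n} \adjfuncforward{n})^{-1}\sum_j \frac{\ewght{n}{j}}{\sumwght{n}}\{\tstat[j]{n}\,\ukmod{n} f_{n+1}(\epart{n}{j}) + \ukmod{n}(\addf{n} f_{n+1} + \ftd{n+1})(\epart{n}{j})\},
\]
which is exactly of the affine form covered by the strengthened inductive hypothesis at time $n$, with no uniformity over $\Xset_{n+1}$ required. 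The two-term decomposition~\eqref{eq:hoeffding:decomp} that results (a conditionally centred piece handled by Hoeffding plus a self-normalised time-$n$ piece handled by induction and \cite[Lemma~4]{douc:garivier:moulines:olsson:2010}) replaces your three-term split and eliminates the problematic $B_\N$ altogether.
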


The proof of Proposition~\ref{cor:hoeffding:tau:marginal}, which is an adaptation of the proof of \cite[Theorem~1]{olsson:westerborn:2014b}, is presented in Section~\ref{sec:proof:prop:hoeffding:tau:marginal}. Since we proceed by induction and the objective functions $(h_n)_{n \in \nset}$ are additive, it is, following \cite{olsson:westerborn:2014b}, necessary to establish the result for estimators in the form $\sum_{i = 1}^\N \ewght{n}{i} \{ f_n(\epart{n}{i}) \tstat[i]{n} + \ftd{n}(\epart{n}{i}) \} / \sumwght{n}$, where $(f_n, \ftd{n}) \in \bmf{\Xfd_n}^2$. This is done in Proposition~\ref{prop:hoeffding:tau:marginal}, and Theorem \ref{cor:hoeffding:tau:marginal} follows as a corollary of that result. 

The previous bound describes the pointwise convergence of the estimator delivered by Algorithm~\ref{alg:pm:PaRIS} as $\N$ grows for $n$ fixed, and here no attempt has been made to obtain a bound that is uniform in $n$. As we shall see shortly, in Theorem~\ref{thm:variance:bound}, the numerical stability of the algorithm can instead be established by bounding the asymptotic variance of the estimator. Finally, note that the previous bound implies that the estimator $\sum_{i = 1}^\N \ewght{n}{i} \tstat[i]{n} / \sumwght{n}$ tends $\pP$-a.s. to $\postmod{0:n} h_n$ as $\N$ tends to infinity. 


\subsubsection*{Central limit theorem}

We now focus on the asymptotic properties of Algorithm~\ref{alg:pm:PaRIS} and furnish, in Theorem~\ref{cor:clt:pseudo:marginal:paris} below, this scheme with a CLT. On the basis of this result, the stochastic stability of the algorithm is expressed by establishing, in Theorem~\ref{thm:variance:bound}, an $\mathcal{O}(n)$ bound on the asymptotic variances. In order to be able to state these results accurately, we need some additional notation. First, let 
 for $(x_n, x_{n + 1}) \in \Xset_n \times \Xset_{n + 1}$, 
\begin{multline} \label{eq:def:ukestvar}
\ukestvar{n}(x_n, x_{n + 1}) \\
\eqdef \frac{1}{\wgtfuncmod{n}(x_n, x_{n + 1})} \int \{ \wgtfunc{n}(x_n, x_{n + 1}, z) - \wgtfuncmod{n}(x_n, x_{n + 1}) \}^2 \, \ukdist{n}(x_n, x_{n + 1}, \rmd z)
\end{multline}
denote the conditional relative variance of the (random) weight $\ewght{n + 1}{i}$ given $\epart{n}{\ind{n}{i}} = x_n$ and $\epart{n + 1}{i}= x_{n + 1}$. In addition, for each $n \in \nset$ and $m \in \intvect{0}{n}$, define the kernel  
\begin{equation} \label{eq:def:retrokmodmod}
    \retrokmodmod_{m, n}(x_m', \rmd x_{0:n}) \eqdef \delta_{x_m'}(\rmd x_m) \,  
    \tstatmod{m}(x_m, \rmd x_{0:m - 1})
    \prod_{\ell = m}^{n - 1} \ukmod{\ell}(x_\ell, \rmd x_{\ell + 1})
\end{equation}
on $\Xset_m \times \Xfd^n$ as well as the centered version 
\begin{equation} \label{eq:def:retrokmodmodnorm}
\retrokmodmodnorm_{m, n} h (x_m) \eqdef  \retrokmodmod_{m, n}(h - \postmod{0:n} h)(x_m) 
\end{equation}
on the same space. 

The following is the main result of this section. 
\begin{theorem} \label{cor:clt:pseudo:marginal:paris}
Assume~\hypref[assum:biased:estimate]{assum:bound:filter:pseudomarginal}. Then for all $n \in \nset$, $\precpar \in \precparsp$, $\K \in \nsetpos$, and $h_n \in \bmaf{\Xfd^n}$, 
$$
 \sqrt{\N} \left( \sum_{i = 1}^\N \frac{\ewght{n}{i}}{\sumwght{n}} \tstat[i]{n} - \postmod{0:n} h_n  \right) 
  \dlim \sigma_n(h_n) Z, 
$$
where $Z$ is standard normally distributed and 
\begin{equation} \label{eq:as:var:decomp}
\sigma^2_n (h_n) \eqdef \frac{\chi(\initwgtfunc \retrokmodmodnorm_{0, n} h_n)^2}{(\chi \ukmod{0, n - 1} \1_{\Xset^n})^2} + \sigma^2_n \langle (\wgtfuncmod{\ell})_{\ell = 0}^{n - 1} \rangle (h_n) + \sigma^2_n \langle (\ukestvar{\ell})_{\ell = 0}^{n - 1} \rangle (h_n) 
\end{equation}
and
\begin{multline} \label{eq:non-recursive:as:var}
\sigma^2_n \langle (\varphi_\ell)_{\ell = 0}^{n - 1} \rangle (h_n) 
\eqdef \sum_{m = 0}^{n - 1} \frac{\postmod{m} \adjfuncforward{m} \postmod{m} \ukmod{m}(\varphi_m [\retrokmodmodnorm_{m + 1, n} h_n ]^2)}{(\postmod{m} \ukmod{m, n - 1} \1_{\Xset^n})^2}\\
+ \sum_{m = 0}^{n - 1} \sum_{\ell = 0}^m \frac{\postmod{m} \adjfuncforward{m} \postmod{\ell} \ukmod{\ell} \{\bkmod{\ell}(\tstatmod{\ell} h_{\ell} + \addf{\ell} - \tstatmod{\ell + 1} h_{\ell +1})^2 \ukmod{\ell + 1, m}( \bkmod{m} \varphi_m [\ukmod{m + 1, n - 1} \1_{\Xset^n}]^2
)\}}{\K^{m - \ell + 1} (\postmod{\ell} \ukmod{\ell, m - 1} \1_{\Xset^m})(\postmod{m} \ukmod{m, n - 1} \1_{\Xset^n})^2}
\end{multline}
for any sequence $(\varphi_\ell)_{\ell \in \nset}$ of measurable functions $\varphi_\ell : \Xset_\ell \times \Xset_{\ell + 1} \to \rsetnn$. 
\end{theorem}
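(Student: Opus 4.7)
The strategy is to proceed by induction on $n$, mirroring the structure of the Olsson--Westerborn CLT for the original PaRIS \cite{olsson:westerborn:2014b} but accounting for the two distinct layers of pseudo-marginalisation: one introduced by pseudo-marginal forward sampling, which contributes the $\ukestvar{\ell}$ terms in \eqref{eq:as:var:decomp}, and one introduced by pseudo-marginal backward sampling, which produces the factors $1/\K^{m-\ell+1}$ in \eqref{eq:non-recursive:as:var}. To make the induction close I strengthen the inductive claim in the manner of \cite{olsson:westerborn:2014b}, asserting for every pair $(f_n,\ftd{n})\in\bmf{\Xfd_n}^2$ the joint CLT
\begin{equation*}
\sqrt{\N}\left(\sum_{i=1}^{\N}\frac{\ewght{n}{i}}{\sumwght{n}}\bigl\{f_n(\epart{n}{i})\tstat[i]{n}+\ftd{n}(\epart{n}{i})\bigr\} - \postmod{n}(f_n\cdot\tstatmod{n}h_n+\ftd{n})\right) \dlim \Gaussian(0,\Sigma_n(f_n,\ftd{n})),
\end{equation*}
with an explicit $\Sigma_n$ that is additive in the three sources of randomness in \eqref{eq:as:var:decomp}. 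The theorem then follows on taking $f_n\equiv 1$ and $\ftd{n}\equiv 0$, using the skew analogue of Lemma~\ref{lem:reversibility}(ii). The base case $n=0$ reduces to a classical importance sampling CLT for $(\epart{0}{i})\sim\init^{\tensprod\N}$ with weights $\initwgtfunc(\epart{0}{i})$, since $\tstat[i]{0}=0$; only the first term of \eqref{eq:as:var:decomp} appears.

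For the induction step, let $\mathcal{F}_{n+1}^{\N}$ denote the $\sigma$-field generated by all randomness up to and including pseudo-marginal forward sampling at time $n+1$, and decompose the centered estimator as $A_{\N}+B_{\N}$, where $A_{\N}$ centers each $\tstat[i]{n+1}$ about its conditional mean given $\mathcal{F}_{n+1}^{\N}$ and $B_{\N}$ is the residual $\mathcal{F}_{n+1}^{\N}$-measurable term. For $B_{\N}$, taking conditional expectation collapses the inner Monte Carlo over $(\bi{n+1}{i}{j})_{j=1}^{\K}$ back to $\sum_j \trm{n}(i,j)\{\tstat[j]{n}+\addf{n}(\epart{n}{j},\epart{n+1}{i})\}$; invoking the recursion \eqref{eq:forward:smoothing} applied to the skew backward kernel $\bkmod{n}$ then recognises $B_{\N}$ as a statistic of the form covered by the enhanced inductive hypothesis at time $n$, with a test pair $(f_n^\star,\ftd{n}^\star)$ built from $\bkmod{n}$ and $\addf{n}$. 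Standard SMC CLT arguments applied to pseudo-marginal forward sampling through the extended mixture $\bar\pi_n$ then contribute, via the tower property, both the variance $\sigma^2_n\langle(\wgtfuncmod{\ell})\rangle$ of the skew forward-sampling mixture and the extra variance $\sigma^2_n\langle(\ukestvar{\ell})\rangle$ injected by randomising $\udmod{n}$ into $\ukest{n}{z}$.

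For $A_{\N}$, conditionally on $\mathcal{F}_{n+1}^{\N}$ the summands in \eqref{eq:PaRIS:update} are mutually independent across $i$, and each is itself an average of $\K$ conditionally i.i.d.\ draws from $\trmext{n}(i,\cdot)$. A conditional Lindeberg--Feller CLT, combined with Theorem~\ref{cor:hoeffding:tau:marginal} to replace the random $\tstat[J]{n}$'s by their deterministic limit $\tstatmod{n}h_n$ in the conditional second moments, yields stable convergence of $A_{\N}$ to a centered Gaussian whose variance matches the new $1/\K$-weighted contribution at level $\ell=m$ in \eqref{eq:non-recursive:as:var}. Since $B_{\N}$ is $\mathcal{F}_{n+1}^{\N}$-measurable while $A_{\N}$ converges stably to a conditionally Gaussian limit, the pair $(A_{\N},B_{\N})$ converges jointly to a Gaussian with diagonal covariance, so the total variance is the sum of the two contributions; unrolling the recursion \eqref{eq:forward:smoothing} and collecting terms delivers \eqref{eq:non-recursive:as:var} and \eqref{eq:as:var:decomp}.

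The main obstacle is the induction step itself. Unlike the bootstrap-filter setting of \cite{olsson:westerborn:2014b}, here no cancellation occurs between the denominator of the backward kernel and the conditional law of $\epart{n+1}{i}$ given its ancestor, because the algorithm uses a general auxiliary particle filter and, moreover, the forward weights involve the stochastic estimator $\ukest{n}{z}$ while the backward kernel involves the deterministic skew density $\udmod{n}$. Reconciling the two levels of pseudo-marginalisation within a single inductive CLT requires a careful construction of $(f_n^\star,\ftd{n}^\star)$ together with a tower-property decomposition that cleanly separates the contribution of $\wgtfuncmod{n}$ (forward-sampling fluctuation at the skew level) from that of $\ukestvar{n}$ (extra fluctuation from weight randomisation) --- precisely the decomposition that produces the two disjoint terms $\sigma^2_n\langle(\wgtfuncmod{\ell})\rangle$ and $\sigma^2_n\langle(\ukestvar{\ell})\rangle$ in \eqref{eq:as:var:decomp}.
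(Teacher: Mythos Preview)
Your overall strategy---strengthen the induction to statements about estimators of the form $\sum_i (\ewght{n}{i}/\sumwght{n})\{f_n(\epart{n}{i})\tstat[i]{n}+\ftd{n}(\epart{n}{i})\}$, decompose into a conditional-mean part and a fluctuation part, and apply a conditional Lindeberg--Feller CLT to the latter---is exactly the paper's approach (Theorem~\ref{prop:clt:pseudo:marginal:paris}). Your identification of the main obstacle (no cancellation between proposal and backward-kernel denominator, and the mismatch between the stochastic $\ukest{n}{z}$ in the weights and the deterministic $\udmod{n}$ in the backward kernel) is also on target.

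However, there is a concrete gap in the decomposition. You condition on the $\sigma$-field generated \emph{after} forward sampling at time $n+1$, so your $B_\N$ still involves $(\epart{n+1}{i},\ewght{n+1}{i})$ and is \emph{not} a statistic of the time-$n$ form $\sum_j (\ewght{n}{j}/\sumwght{n})\{f^\star_n(\epart{n}{j})\tstat[j]{n}+\ftd{n}^\star(\epart{n}{j})\}$; the claim that it is ``covered by the enhanced inductive hypothesis at time $n$'' is therefore incorrect as stated. Invoking the recursion \eqref{eq:forward:smoothing} does not remove the dependence on the new particles, and a separate ``standard SMC CLT'' for forward sampling cannot be bolted on without a further centering step that you do not describe. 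The paper avoids this by conditioning on $\calF{n}{\N}$ (all randomness \emph{before} step $n+1$). The key device is Lemma~\ref{lem:expectation:incremental:marginal}, which shows that
\[
\pE\bigl[\ewght{n+1}{i}\{\tstat[i]{n+1}f_{n+1}(\epart{n+1}{i})+\ftd{n+1}(\epart{n+1}{i})\}\mid\calF{n}{\N}\bigr]
=(\post[\N]{n}\adjfuncforward{n})^{-1}\sum_j \frac{\ewght{n}{j}}{\sumwght{n}}\{\tstat[j]{n}\,\ukmod{n}f_{n+1}(\epart{n}{j})+\ukmod{n}(\addf{n}f_{n+1}+\ftd{n+1})(\epart{n}{j})\},
\]
which \emph{is} a time-$n$ statistic and feeds directly into the induction, with $f_n^\star=\ukmod{n}f_{n+1}$ and $\ftd{n}^\star=\ukmod{n}(\addf{n}f_{n+1}+\ftd{n+1})$ (built from the forward kernel $\ukmod{n}$, not from $\bkmod{n}$ as you write). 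All of the step-$(n+1)$ randomness---forward \emph{and} backward---then sits in a single fluctuation term $\termbis$, handled by one conditional CLT in which an inner conditioning on $\calG{n+1}{\N}$ separates the $\wgtfuncmod{n}$ and $\ukestvar{n}$ contributions.
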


\begin{remark}
The first term of \eqref{eq:as:var:decomp} corresponds to the contribution of the initialisation step to the asymptotic variance. This term is incorrectly missing in the asymptotic-variance expressions given in \cite[Theorem~3 and Corollary~5]{olsson:westerborn:2017}. The last term corresponds to the additional variance induced by the estimation of $(\ud{n})_{n \in \nset}$ regulated by \hypref{assum:biased:estimate}. Finally, as we shall see in Section~\ref{sec:implied:convergence:results}, the first two terms correspond jointly to the variance of the ideal PaRIS in Algorithm~\ref{alg:ideal:PaRIS}, for which $(\ud{n})_{n \in \nset}$ are assumed to be known and tractable. 
\end{remark}

Theorem~\ref{cor:clt:pseudo:marginal:paris} is established in Section~\ref{sec:proof:prop:clt:pseudo:marginal:paris}. The structure of the proof is adopted from \cite{olsson:westerborn:2017} (however, we remark again that it is, as explained above, the matter of a non-trivial extension), and the CLT in Theorem~\ref{cor:clt:pseudo:marginal:paris} is obtained directly as a corollary of a more general CLT for estimators of form $\sum_{i = 1}^\N \ewght{n}{i} \{ f_n(\epart{n}{i}) \tstat[i]{n} + \ftd{n}(\epart{n}{i}) \} / \sumwght{n}$, where $(f_n, \ftd{n}) \in \bmf{\Xfd_n}^2$; see Theorem~\ref{prop:clt:pseudo:marginal:paris}.  

\subsubsection{Convergence of Algorithm~\ref{alg:pm:SMC}} 
\label{sec:implied:convergence:results}

Our analysis provides, as a by-product, also a CLT for the random-weight particle filter obtained by iterating the forward-sampling operation in Algorithm~\ref{alg:pm:SMC}; indeed, this result, which is similar to \cite[Theorem~3]{fearnhead2008particle}, follows immediately by letting $f_n \equiv 0$ in Theorem~\ref{prop:clt:pseudo:marginal:paris}, and we state it below for completeness. 

\begin{proposition} \label{prop:clt:pseudo:marginal:filter}
Assume~\hypref[assum:biased:estimate]{assum:bound:filter:pseudomarginal} and let $(\epart{n}{i}, \ewght{n}{i})_{i = 1}^\N$, $n \in \nset$, be generated by Algorithm~\ref{alg:pm:SMC}. Then for all $n \in \nset$, $\precpar \in \precparsp$, and $h \in \bmf{\Xfd_n}$, 
$$
 \sqrt{\N} \left( \sum_{i = 1}^\N \frac{\ewght{n}{i}}{\sumwght{n}} h(\epart{n}{i}) - \postmod{n} h \right) 
  \dlim \tilde{\sigma}_n(h) Z, 
$$
where $Z$ is standard normally distributed and
\begin{equation} \label{eq:as:var:pm:filter}
\tilde{\sigma}^2_n(h) \eqdef \frac{\chi\{ \initwgtfunc \ukmod{0} \cdots \ukmod{n - 1}(h - \postmod{n} h) \}^2}{(\chi \ukmod{0} \cdots \ukmod{n - 1} \1_{\Xset_n})^2} + \tilde{\sigma}^2_n \langle (\wgtfuncmod{\ell})_{\ell = 0}^{n - 1} \rangle (h) + \tilde{\sigma}^2_n \langle (\ukestvar{\ell})_{\ell = 0}^{n - 1} \rangle (h) 
\end{equation}
and
\begin{equation} \label{eq:non-recursive:filter:as:var}
\tilde{\sigma}^2_n \langle (\varphi_\ell)_{\ell = 0}^{n - 1} \rangle (h) 
\eqdef \sum_{m = 0}^{n - 1} \frac{\postmod{m} \adjfuncforward{m} \postmod{m} \ukmod{m}(\varphi_m [\ukmod{m + 1} \cdots \ukmod{n - 1} (h - \postmod{n} h)
]^2)}{(\postmod{m} \ukmod{m} \cdots \ukmod{n - 1} \1_{\Xset_n})^2} 
\end{equation}
for any sequence $(\varphi_\ell)_{\ell \in \nset}$ of measurable functions $\varphi_\ell : \Xset_\ell \times \Xset_{\ell + 1} \to \rsetnn$. 
\end{proposition}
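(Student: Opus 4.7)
The plan is to deduce this proposition as an immediate corollary of Theorem~\ref{prop:clt:pseudo:marginal:paris}, the more general CLT that underpins Theorem~\ref{cor:clt:pseudo:marginal:paris} and covers estimators of the form $\sum_{i=1}^{\N} \ewght{n}{i} \{ f_n(\epart{n}{i}) \tstat[i]{n} + \ftd{n}(\epart{n}{i}) \} / \sumwght{n}$ for arbitrary $(f_n, \ftd{n}) \in \bmf{\Xfd_n}^2$. First, I observe that the joint distribution of the weighted particle system $(\epart{n}{i}, \ewght{n}{i})_{i = 1}^{\N}$ produced by Algorithm~\ref{alg:pm:SMC} coincides with the distribution of the corresponding marginal components of the system produced by Algorithm~\ref{alg:pm:PaRIS}, since the backward sampling operation in the latter does not feed back into the forward dynamics. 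Hence the conclusion of Theorem~\ref{prop:clt:pseudo:marginal:paris} transfers verbatim.

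The next step is to specialise to $f_n \equiv 0$ and $\ftd{n} \eqdef h$, so that the estimator under consideration collapses to the pure filter estimator $\sum_{i=1}^{\N} (\ewght{n}{i} / \sumwght{n}) h(\epart{n}{i})$ and the centering quantity reduces to $\postmod{n} h$. The remaining task is then to show that, under this specialisation, the asymptotic variance delivered by Theorem~\ref{prop:clt:pseudo:marginal:paris} collapses to $\tilde\sigma_n^2(h)$ given by \eqref{eq:as:var:pm:filter}--\eqref{eq:non-recursive:filter:as:var}. Mirroring the decomposition \eqref{eq:as:var:decomp} of $\sigma_n^2(h_n)$, the general variance splits into (i) an initialisation contribution expressed through $\chi$ and $\initwgtfunc$, (ii) forward-sampling contributions $\sigma_n^2\langle(\wgtfuncmod{\ell})\rangle$ and $\sigma_n^2\langle(\ukestvar{\ell})\rangle$, and (iii) backward-sampling contributions carrying inverse powers of $\K$.

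Two simplifications will then be key. First, since $\tstat[i]{n}$ disappears from the estimator when $f_n \equiv 0$, the Monte Carlo noise introduced by the backward sampling step is no longer injected into the estimator; tracking the recursive structure of the variance of Theorem~\ref{prop:clt:pseudo:marginal:paris} backwards in time, every term of type (iii) carries a factor inherited from $f_m$ that vanishes identically, leaving only terms of types (i) and (ii). Second, the kernel $\retrokmodmodnorm_{m, n}$ defined by \eqref{eq:def:retrokmodmod}--\eqref{eq:def:retrokmodmodnorm} reduces, on any function depending only on the coordinate $x_n$, to
$$
\retrokmodmodnorm_{m, n} h(x_m) = \ukmod{m} \cdots \ukmod{n - 1} (h - \postmod{n} h)(x_m),
$$
because the backward part $\tstatmod{m}(x_m, \cdot)$ of $\retrokmodmod_{m,n}$ is a probability kernel and therefore integrates out against the constant in the $x_{0:m-1}$ variables. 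Plugging this identity into the surviving initialisation and forward terms of \eqref{eq:as:var:decomp} and \eqref{eq:non-recursive:as:var} produces exactly \eqref{eq:as:var:pm:filter} and \eqref{eq:non-recursive:filter:as:var}, completing the argument.

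The main obstacle I anticipate is not conceptual but bookkeeping: one must verify carefully, from the explicit form of the asymptotic variance in Theorem~\ref{prop:clt:pseudo:marginal:paris}, that every summand appearing with an inverse power of $\K$ is indeed proportional to a quantity that factors through $f_m$ for some $m \leq n$, so that setting $f_n \equiv 0$ really does eliminate all backward-sampling contributions. Once this factorisation is exhibited the reduction to \eqref{eq:as:var:pm:filter}--\eqref{eq:non-recursive:filter:as:var} is a direct algebraic substitution.
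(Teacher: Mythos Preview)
Your proposal is correct and matches the paper's approach exactly: the paper states that the result ``follows immediately by letting $f_n \equiv 0$ in Theorem~\ref{prop:clt:pseudo:marginal:paris}'', and you have spelled out precisely this specialisation, correctly identifying that the double sum in \eqref{eq:partial:variance} vanishes because it contains the factor $[\ukmod{m+1,n-1} f_n]^2$, and that $\retrokmodmodnorm_{m,n}$ collapses to $\ukmod{m}\cdots\ukmod{n-1}(h-\postmod{n}h)$ on functions of $x_n$ alone. One small sharpening: the backward-sampling terms all carry $f_n$ itself (not a general $f_m$), so the vanishing is even more direct than your phrasing suggests.
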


\subsubsection{Convergence of Algorithm~\ref{alg:ideal:PaRIS}}

Importantly, Theorem~\ref{cor:clt:pseudo:marginal:paris} provides, as another by-product, also a CLT for the ideal PaRIS in Algorithm~\ref{alg:ideal:PaRIS}. Indeed, in the case where every $\ud{n}$ is tractable, we may set 
\begin{equation} \label{eq:ideal:case}
\ukest{n}{z}(x_n, x_{n + 1}) = \ud{n}(x_n, x_{n + 1})
\end{equation} 
for all $z$ and define $\ukdist{n}$ arbitrarily. This implies that the relative variance \eqref{eq:def:ukestvar} is identically zero, which eliminates the last term of \eqref{eq:as:var:decomp}. Thus, in this case the asymptotic variance is given by the first two terms of \eqref{eq:as:var:decomp}, but now induced by the original dynamics $(\uk{n})_{n \in \nset}$ (as \eqref{eq:ideal:case} implies that also $\udmod{n} = \ud{n}$). This result is formulated in the following corollary, where we have defined, for each $n \in \nset$ and $m \in \intvect{0}{n}$, the kernel  
\begin{equation} \label{eq:def:retrok}
    \retrok_{m, n}(x_m', \rmd x_{0:n}) \eqdef \delta_{x_m'}(\rmd x_m) \,  
    \tstat{n}(x_m, \rmd x_{0:m - 1})
    \prod_{\ell = m}^{n - 1} \uk{\ell}(x_\ell, \rmd x_{\ell + 1})
\end{equation}
on $\Xset_m \times \Xfd^n$ as well as the centered version 
$$
\retroknorm_{m, n} h (x_m) \eqdef  \retrok_{m, n}(h - \post{0:n} h)(x_m) 
$$
on the same space. 

\begin{corollary} \label{cor:clt:ideal:paris}
For each $n \in \nset$, assume that the functions $\adjfuncforward{n}$ and  
$$
\wgtfuncideal{n} : \Xset_n \times \Xset_{n + 1} \ni (x_n, x_{n + 1}) \mapsto \frac{\ud{n}(x_n, x_{n + 1})}{\adjfuncforward{n}(x) \kissforward{n}{n}(x_n, x_{n + 1})} 
$$
are bounded and let $(\epart{n}{i}, \tstat[i]{n}, \ewght{n}{i})_{i = 1}^\N$, $n \in \nset$, be generated by Algorithm~\ref{alg:ideal:PaRIS}. Then for all $n \in \nset$, $\precpar \in \precparsp$, $\K \in \nsetpos$, and $h_n \in \bmaf{\Xfd^n}$, 
$$
\sqrt{\N} \left( \sum_{i = 1}^\N \frac{\ewght{n}{i}}{\sumwght{n}} \tstat[i]{n} - \post{0:n} h_n  \right) \dlim \bar{\sigma}_n (h_n) Z, 
$$
where $Z$ is standard normally distributed and
\begin{multline} \label{eq:as:var:ideal:PaRIS}
\bar{\sigma}^2_n (h_n) 
\eqdef \frac{\chi(\initwgtfunc \retroknorm_{0, n} h_n)^2}{(\chi \ukmod{0, n - 1} \1_{\Xset^n})^2} + \sum_{m = 0}^{n - 1} \frac{\post{m} \adjfuncforward{m} \post{m} \uk{m}(\wgtfuncideal{m} [\retroknorm_{m + 1, n} h_n ]^2)}{(\post{m} \uk{m, n - 1} \1_{\Xset_n})^2} \\
+ \sum_{m = 0}^{n - 1} \sum_{\ell = 0}^m \frac{\post{m} \adjfuncforward{m} \post{\ell} \uk{\ell} \{\bkw{\ell}(\tstat{\ell} h_{\ell} + \addf{\ell} - \tstat{\ell + 1} h_{\ell +1})^2 \uk{\ell + 1, m}( \bkw{m} \wgtfuncideal{m} [\uk{m + 1, n - 1} \1_{\Xset^n}]^2
)\}}{\K^{m - \ell + 1} (\post{\ell} \uk{\ell, m - 1} \1_{\Xset_m})(\post{m} \uk{m, n - 1} \1_{\Xset_n})^2}. 
\end{multline}
\end{corollary}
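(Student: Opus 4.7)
The plan is to obtain Corollary~\ref{cor:clt:ideal:paris} as an immediate specialization of Theorem~\ref{cor:clt:pseudo:marginal:paris} by instantiating the pseudo-marginal framework with a trivial, exact estimator. Concretely, I would set, for every $n \in \nset$ and $z \in \Zset_{n + 1}$, $\ukest{n}{z}(x_n, x_{n + 1}) \eqdef \ud{n}(x_n, x_{n + 1})$ and pick any Markov kernel $\ukdist{n}$ (for instance the trivial one on a one-point state space). With this choice, \hypref{assum:biased:estimate} is automatic and the weighted sample produced by Algorithm~\ref{alg:pm:PaRIS} has the same joint law as the one produced by Algorithm~\ref{alg:ideal:PaRIS}: the weight on Line~5 of Algorithm~\ref{alg:pm:SMC} collapses to the weight on Line~4 of Algorithm~\ref{alg:ideal:SMC}, while the $j$-marginal of $\trmext{n}(i, \cdot)$ is proportional to $\ewght{n}{j} \ud{n}(\epart{n}{j}, \epart{n + 1}{i})$ and thus coincides with $\trm{n}(i, \cdot)$, making the auxiliary variables $\zpart{n + 1}{i}$ inert.

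The next step is to identify the simplified quantities entering the variance formula. Since $\udmod{n} = \int \ukest{n}{z}(\cdot, \cdot) \, \ukdist{n}(\cdot, \rmd z) = \ud{n}$, we have $\ukmod{n} = \uk{n}$, $\postmod{n} = \post{n}$, $\postmod{0:n} = \post{0:n}$, $\bkmod{n} = \bkw{n}$, $\tstatmod{n} = \tstat{n}$, $\retrokmodmod_{m, n} = \retrok_{m, n}$ and $\retrokmodmodnorm_{m, n} = \retroknorm_{m, n}$. Moreover, $\wgtfunc{n}(x_n, x_{n + 1}, z) = \wgtfuncmod{n}(x_n, x_{n + 1}) = \wgtfuncideal{n}(x_n, x_{n + 1})$ is independent of $z$, so the conditional relative variance \eqref{eq:def:ukestvar} satisfies $\ukestvar{n} \equiv 0$. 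In particular, \hypref{assum:bound:filter:pseudomarginal} reduces to the boundedness of $\adjfuncforward{n}$, $\wgtfuncideal{n}$, and $\initwgtfunc$, which are exactly the hypotheses of the corollary.

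Finally, I would invoke Theorem~\ref{cor:clt:pseudo:marginal:paris} and simplify \eqref{eq:as:var:decomp}. The third contribution, $\sigma^2_n \langle (\ukestvar{\ell})_{\ell = 0}^{n - 1} \rangle (h_n)$, vanishes because the factor $\ukestvar{\ell} \equiv 0$ is present in every summand of the two sums defining it via \eqref{eq:non-recursive:as:var}. Substituting $\wgtfuncmod{\ell} = \wgtfuncideal{\ell}$ into $\sigma^2_n \langle (\wgtfuncmod{\ell})_{\ell = 0}^{n - 1} \rangle (h_n)$ via \eqref{eq:non-recursive:as:var} produces, together with the kernel identifications above, precisely the last two sums of \eqref{eq:as:var:ideal:PaRIS}, while the first summand of \eqref{eq:as:var:decomp} becomes the first summand of \eqref{eq:as:var:ideal:PaRIS}. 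No real obstacle arises: the argument is a bookkeeping verification that the skew model coincides with the original model and that the pseudo-marginal layer contributes no extra variance when the density estimator is in fact exact.
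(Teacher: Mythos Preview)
Your proposal is correct and follows exactly the approach taken in the paper: the corollary is obtained from Theorem~\ref{cor:clt:pseudo:marginal:paris} by setting $\ukest{n}{z}(x_n, x_{n + 1}) = \ud{n}(x_n, x_{n + 1})$ for all $z$ with an arbitrary $\ukdist{n}$, so that $\udmod{n} = \ud{n}$, $\ukestvar{n} \equiv 0$, and the skew model coincides with the original one. The paper makes precisely this observation in the discussion preceding the corollary, and your identification of the remaining variance terms with \eqref{eq:as:var:ideal:PaRIS} is accurate.
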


Corollary~\ref{cor:clt:ideal:paris} extends \cite[Corollary~5]{olsson:westerborn:2017} to general models \eqref{eq:def:post} and auxiliary-particle-filter-guided forward sampling. 

\subsection{Long-term stochastic stability}

In this section we establish the long-term stochastic stability of Algorithm~\ref{alg:pm:PaRIS} by providing an $\mathcal{O}(n)$ bound on the sequence $(\sigma^2_n (h_n))_{n \in \nset}$ for $\K \geq 2$. Using $\K \geq 2$ is critical, since, as noted in \cite{olsson:westerborn:2014b}, using $\K = 1$ in the PaRIS leads to a path-degeneracy phenomenon similar to that of the naive smoother described in Section~\ref{sec:SMC}; we refer to \cite[Section~3.1]{olsson:westerborn:2014b} for a detailed discussion. Still, the variance bounds that we will present are of order $n(1 + 1/(\K - 1))$, which means that large $\K$ do not serve to reduce the variance significantly. This is in good agreement with simulations, where $\K = 2$ leads generally to good results. In addition, we shall see that our analysis, which is carried through in detail in Section~\ref{sec:variance:bounds}, yields, as by-products, a similar bound for the ideal PaRIS in Algorithm~\ref{alg:ideal:PaRIS} as well as a time-uniform bound on the sequence $(\tilde{\sigma}^2(h))_{n \in \nset}$ of asymptotic variances of the random-weight-particle-filter estimators generated by Algorithm~\ref{alg:pm:SMC}. As far as we know, the latter result is the first of its kind. The analysis will be carried through under the following strong-mixing assumption, which is now classical (see, \eg, \cite[Chapter~4]{delmoral:2004} and \cite[Section~4.3]{Cappe:2005:IHM:1088883}) and typically requires the state spaces $(\Xset_n)_{n \in \nset}$ to be compact sets. 
\begin{hypH}
\label{assum:strong:mixing}
There exist constants $0 < \udlow < \udup < \infty$ such that for all $m \in \nset$ and $(x_m, x_{m + 1}) \in \Xset_m \times \Xset_{m + 1}$, 
$$
    \udlow \leq \ud{m}(x_m, x_{m + 1}) \leq \udup
$$ 
and 
$$
    \udlow \leq \inf_{\precpar \in \precparsp} \udmod{m}(x_m, x_{m + 1}), \quad \sup_{\precpar \in \precparsp} \udmod{m}(x_m, x_{m + 1}) \leq \udup. 
$$ 
\end{hypH}
Note that under \hypref{assum:strong:mixing}, each reference measure $\mu_m$ is finite; we may hence, without loss of generality, assume that each $\mu_m$ is a probability measure. Under \hypref{assum:strong:mixing}, define the mixing rate 
\begin{equation} \label{eq:def:rho}
    \rho \eqdef 1 - \frac{\udlow}{\udup}
\end{equation}
as well as the constants 
$$
c(\sigma_\pm) \eqdef \frac{1}{\rho^2 (1 - \rho)^5 \udlow}, \quad d(\sigma_\pm) \eqdef \frac{1}{(1- \rho)^4 \udlow^2} \left( 2 + \frac{1}{1 - \rho} \right)^2, 
$$
and 
$$
\cboundtd \eqdef \frac{1}{(1 - \rho)^3 \udlow} \left( \frac{1}{\rho^2(1 - \rho^2)} + 1 \right). 
$$

Having introduced these quantities, we are ready to present the main result of this section. 

\begin{theorem} \label{thm:variance:bound}
Assume \hypref[assum:biased:estimate]{assum:strong:mixing}. 
Then for every $\precpar \in \precparsp$, $\K \geq 2$, $h_n \in \bmaf{\Xfd^n}$, and sequence $(\varphi_\ell)_{\ell \in \nset}$ of bounded measurable functions $\varphi_\ell : \Xset_\ell \times \Xset_{\ell + 1} \to \rsetnn$, 
\begin{multline*}
\limsup_{n \to \infty} \frac{1}{n} \sigma^2_n \langle (\varphi_\ell)_{\ell = 0}^{n - 1} \rangle (h_n) \\
\leq \left( \cbound + \frac{1}{\K - 1} \dbound \right) 
\sup_{\ell \in \nset} \| \addf{\ell} \|_\infty^2 \sup_{\ell \in \nset} \| \adjfuncforward{\ell} \|_\infty \sup_{\ell \in \nset} \| \varphi_\ell \|_\infty.  
\end{multline*}
\end{theorem}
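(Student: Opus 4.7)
The plan is to bound the two sums appearing in the explicit expression \eqref{eq:non-recursive:as:var} for $\sigma^2_n \langle (\varphi_\ell)_{\ell = 0}^{n-1} \rangle (h_n)$ by $\mathcal{O}(n)$ quantities with the stated constants. The engine of the argument is the geometric forgetting of the backward kernels $(\bkmod{m})$ and of the normalised Feynman--Kac semigroup associated with $(\ukmod{m})$, which under \hypref{assum:strong:mixing} holds with contraction rate $\rho = 1 - \udlow/\udup$ (this is classical, see \cite{delmoral:2004}).

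First I would set up uniform bounds on the normalising constants: under \hypref{assum:strong:mixing}, for every $0 \leq \ell \leq m$, one has $\udlow^{m-\ell} \leq \postmod{\ell} \ukmod{\ell, m-1} \1_{\Xset^m} \leq \udup^{m-\ell}$, and similarly for the ratios appearing in \eqref{eq:non-recursive:as:var}. Coupled with the Dobrushin contraction $\|\bkmod{\ell}(x, \cdot) - \bkmod{\ell}(x', \cdot)\|_{\mathrm{TV}} \leq \rho$ (which follows because the density of $\bkmod{\ell}$ with respect to $\postmod{\ell}$ is bounded below by $\udlow/\udup$ times the normalising integral), iterating gives $\| \tstatmod{m} f(x) - \tstatmod{m}f(x') \|_\infty \leq \rho^m \|f\|_\infty$ and the analogous contraction for the forward Feynman--Kac semigroup. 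The role of these contractions is to tame the apparent $\mathcal{O}(n)$ size of $h_n$: writing $h_n = \sum_{k=0}^{n-1} \addf{k}$ and using forgetting, the fluctuation of $\retrokmodmod_{m+1, n} h_n$ around $\postmod{0:n} h_n$ has contributions from each $\addf{k}$ that decay geometrically with $|k - m|$. Hence $\|\retrokmodmodnorm_{m+1, n} h_n\|_\infty \leq C(\rho) \sup_\ell \|\addf{\ell}\|_\infty$ uniformly in $n$. Plugging this bound, together with the denominator bounds, into the first sum of \eqref{eq:non-recursive:as:var} yields a per-$m$ bound of order $c(\sigma_\pm) \sup_\ell \|\adjfuncforward{\ell}\|_\infty \sup_\ell \|\varphi_\ell\|_\infty \sup_\ell\|\addf{\ell}\|_\infty^2$, and the sum over $m \in \intvect{0}{n-1}$ produces the linear growth with coefficient $\cbound$.

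The second sum is the PaRIS-specific contribution, where the key factor is $\K^{-(m-\ell+1)}$. Using the identity \eqref{eq:forward:smoothing} we interpret
$\tstatmod{\ell} h_\ell(x_\ell) + \addf{\ell}(x_\ell, x_{\ell+1}) - \tstatmod{\ell+1} h_{\ell+1}(x_{\ell+1})$
as a centered conditional increment, bounded in sup-norm by $2\|\addf{\ell}\|_\infty + \mathrm{osc}(\tstatmod{\ell} h_\ell)$. Here too, $\mathrm{osc}(\tstatmod{\ell} h_\ell)$ is bounded uniformly in $\ell$ by $C(\rho) \sup_\ell\|\addf{\ell}\|_\infty$ thanks to forgetting in the backward chain $\tstatmod{\ell}$. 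Simultaneously bounding the middle factor $\ukmod{\ell+1, m}(\bkmod{m}\varphi_m [\ukmod{m+1, n-1}\1_{\Xset^n}]^2)$ from above by $\udup^{2(n-m-1)+(m-\ell)}\|\varphi_m\|_\infty$ and the denominator $(\postmod{m} \ukmod{m, n-1}\1_{\Xset^n})^2$ from below by $\udlow^{2(n-m)}$ (dividing also by $\postmod{\ell}\ukmod{\ell, m-1}\1_{\Xset^m} \geq \udlow^{m-\ell}$), the powers of $\udup/\udlow$ cancel into a bounded constant and what remains is the geometric weight $\K^{-(m-\ell+1)}$. Summing over $\ell \in \intvect{0}{m}$ gives an inner sum bounded by $(\K-1)^{-1} d(\sigma_\pm) \sup_\ell\|\adjfuncforward{\ell}\|_\infty \sup_\ell\|\varphi_\ell\|_\infty \sup_\ell\|\addf{\ell}\|_\infty^2$, and one more summation over $m \in \intvect{0}{n-1}$ yields the linear growth with coefficient $(\K-1)^{-1} \dbound$.

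The hard part is not the arithmetic but tracking the forgetting carefully enough to obtain the exact constants $\cbound$ and $\dbound$. In particular, the bound $\|\retrokmodmodnorm_{m+1,n}h_n\|_\infty \leq C(\rho)\sup_\ell\|\addf{\ell}\|_\infty$ must be made quantitative by decomposing $h_n$ term by term and estimating, for each summand $\addf{k}$, the oscillation of its contribution through the combined backward--forward sweep of $\retrokmodmod_{m+1,n}$; the resulting geometric series gives the factors $(1-\rho)^{-5}$ and $\rho^{-2}$ that appear in $\cbound$. Once these uniform bounds are in place, the remainder is a bookkeeping exercise with the strong-mixing estimates and the $\K^{-(m-\ell+1)}$ decay, and dividing by $n$ and taking $\limsup$ yields the claim.
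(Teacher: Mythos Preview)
Your overall architecture matches the paper's: split $\sigma^2_n\langle(\varphi_\ell)\rangle(h_n)$ into its two sums, control the first via forward/backward forgetting of the additive pieces $\addf{k}$, and control the second via the geometric weight $\K^{-(m-\ell+1)}$ combined with a uniform bound on the centred increments $\tstatmod{\ell}h_\ell+\addf{\ell}-\tstatmod{\ell+1}h_{\ell+1}$. That part is fine.

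The gap is in how you handle the \emph{unnormalised} kernels. Two concrete places where your argument breaks:
\begin{itemize}
\item You assert $\|\retrokmodmodnorm_{m+1,n}h_n\|_\infty \leq C(\rho)\sup_\ell\|\addf{\ell}\|_\infty$ uniformly in $n$. This is false: $\retrokmodmod_{m+1,n}$ contains the forward product $\prod_{\ell=m+1}^{n-1}\ukmod{\ell}$, which is unnormalised. The correct statement (the paper's \eqref{eq:retrokmodmodnorm:bound}) is
\[
\|\retrokmodmodnorm_{m,n}h_n\|_\infty \leq \|\ukmod{m,n-1}\1_{\Xset^n}\|_\infty \sum_{k=0}^{n-1}\|\addf{k}\|_\infty\,\rho^{|k-m|-1},
\]
i.e.\ a bounded quantity \emph{times} the total mass $\|\ukmod{m,n-1}\1_{\Xset^n}\|_\infty$.
\item In the second sum you bound numerator and denominator separately by $\udup^{2(n-m-1)+(m-\ell)}$ and $\udlow^{2(n-m)+(m-\ell)}$ and claim ``the powers of $\udup/\udlow$ cancel into a bounded constant''. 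They do not: the resulting ratio is $(\udup/\udlow)^{2n-m-\ell-2}/\udlow^{2}$, which blows up in $n$.
\end{itemize}

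What makes the paper's proof work is not crude separate bounds but a \emph{one--step ratio trick}: for any $m$,
\[
\frac{\|\ukmod{m,n-1}\1_{\Xset^n}\|_\infty}{\postmod{m-1}\ukmod{m-1,n-1}\1_{\Xset^n}} \leq \frac{\udup}{\udlow^2},
\]
obtained by peeling off a single transition in numerator and denominator so that the remaining integral $\mu_{m+1}\ukmod{m+1,n-1}\1_{\Xset^n}$ is literally the same in both and cancels (this is \eqref{eq:retrokmod:ratio:bd}). Applying this ratio bound (twice for the squared term) to the unnormalised factor carried by $\retrokmodmodnorm_{m+1,n}h_n$, and similarly in the second sum, is what yields the $n$--uniform constants $\cbound$ and $\dbound$. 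Without this device the estimates do not close, so you need to replace the crude power bounds by these ratio comparisons.
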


Using Theorem~\ref{thm:variance:bound}, $\mathcal{O}(n)$ bounds on the asymptotic variances of Algorithm~\ref{alg:pm:PaRIS} and Algorithm~\ref{alg:ideal:PaRIS} are readily obtained. 

\begin{corollary} \label{cor:variance:bound}
Assume \hypref[assum:biased:estimate]{assum:strong:mixing}. Then for every $\precpar \in \precparsp$, $\K \geq 2$, and $h_n \in \bmaf{\Xfd^n}$, 
\begin{multline} \label{eq:variance:bound:full:monty}
\limsup_{n \to \infty} \frac{1}{n} \sigma^2_n(h_n) \\ 
\leq \left( \cbound + \frac{1}{\K - 1} \dbound \right) \sup_{\ell \in \nset} \| \addf{\ell} \|_\infty^2 \sup_{\ell \in \nset} \| \adjfuncforward{\ell} \|_\infty  \left( \sup_{\ell \in \nset} \| \wgtfuncmod{\ell} \|_\infty + \sup_{\ell \in \nset} \| \ukestvar{\ell} \|_\infty \right) \\
+ \frac{1}{\rho^2(1 - \rho)^4 (\chi \1_{\Xset_0})^2} \sup_{\ell \in \nset} \| \addf{\ell} \|_\infty^2. 
\end{multline}
\end{corollary}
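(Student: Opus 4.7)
The decomposition \eqref{eq:as:var:decomp} splits $\sigma^2_n(h_n)$ into the initialisation contribution $T_n^{\mathrm{init}} \eqdef \chi[\initwgtfunc(\retrokmodmodnorm_{0,n}h_n)^2]/(\chi\ukmod{0,n-1}\1_{\Xset^n})^2$ together with the two ``sampling'' terms $\sigma^2_n\langle(\wgtfuncmod{\ell})\rangle(h_n)$ and $\sigma^2_n\langle(\ukestvar{\ell})\rangle(h_n)$. The last two are handled immediately by Theorem~\ref{thm:variance:bound}: applying it once with $\varphi_\ell \eqdef \wgtfuncmod{\ell}$ and once with $\varphi_\ell \eqdef \ukestvar{\ell}$ and summing the resulting bounds yields, because the factor $\sup_\ell\|\varphi_\ell\|_\infty$ in that theorem splits additively between the two choices, precisely the first summand on the right-hand side of \eqref{eq:variance:bound:full:monty}.

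The bulk of the work is therefore to bound $T_n^{\mathrm{init}}$ by the second summand asymptotically. Writing $Qh(x_0) \eqdef \retrokmodmod_{0,n}h(x_0)/\retrokmodmod_{0,n}\1(x_0)$, which equals the conditional expectation of $h$ given $X_0 = x_0$ under $\postmod{0:n}$, one factorises
\[
T_n^{\mathrm{init}} = \chi\bigl[\initwgtfunc(\retrokmodmod_{0,n}\1)^2 (Qh_n - \postmod{0:n}h_n)^2\bigr]/(\chi\ukmod{0,n-1}\1_{\Xset^n})^2.
\]
Two classical ingredients from the Feynman--Kac mixing toolkit combine here. First, \hypref{assum:strong:mixing} yields, by a one-step argument on the unnormalised potentials $\retrokmodmod_{0,n}\1 = \ukmod{0}\cdots\ukmod{n-1}\1$, the time-uniform Del~Moral-type ratio bound $\retrokmodmod_{0,n}\1(x_0)/(\chi\ukmod{0,n-1}\1_{\Xset^n}) \leq (1-\rho)^{-1}(\chi\1_{\Xset_0})^{-1}$, valid for all $x_0$ and all $n$, with $\rho$ as in \eqref{eq:def:rho}. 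Second, the forgetting of the initial condition for Feynman--Kac flows delivers $\sup_{x_0}|Q\addf{m}(x_0) - \postmod{0:n}\addf{m}| \leq C\rho^m\|\addf{m}\|_\infty$ for an absolute constant $C$ and every $m<n$. Decomposing $h_n - \postmod{0:n}h_n = \sum_{m=0}^{n-1}(\addf{m} - \postmod{0:n}\addf{m})$ and combining these two bounds---the first to control the mass ratio, the second applied term-by-term with Cauchy--Schwarz to the $n$-term sum---delivers $T_n^{\mathrm{init}} \leq n\sup_\ell\|\addf{\ell}\|_\infty^2/[\rho^2(1-\rho)^4(\chi\1_{\Xset_0})^2]$ up to bounded multiplicative factors absorbing $\|\initwgtfunc\|_\infty$. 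Taking $\limsup n^{-1}$ then reproduces the second summand of \eqref{eq:variance:bound:full:monty}.

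The main obstacle is not the principle of these estimates---both the Del~Moral-type time-uniform ratio bound and the geometric forgetting for additive functionals are by now standard under \hypref{assum:strong:mixing}---but the precise bookkeeping needed to recover exactly the constant $\rho^{-2}(1-\rho)^{-4}(\chi\1_{\Xset_0})^{-2}$ together with the correct prefactors. These computations reuse the mixing machinery deployed in detail in the proof of Theorem~\ref{thm:variance:bound} and fit naturally into the broader analysis carried out in Section~\ref{sec:variance:bounds}, where the initialisation contribution can essentially be absorbed into the same recursive bookkeeping.
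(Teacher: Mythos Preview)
Your treatment of the two sampling terms via Theorem~\ref{thm:variance:bound} matches the paper exactly. The difference is in the initialisation term.

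The paper's argument is shorter and sharper. It simply quotes the bound \eqref{eq:retrokmodmodnorm:bound}, already established inside the proof of Theorem~\ref{thm:variance:bound}, at $m=0$: since $\sum_{k=0}^{n-1}\rho^{k-1}\leq 1/[\rho(1-\rho)]$, this gives $\|\retrokmodmodnorm_{0,n}h_n\|_\infty \leq \|\ukmod{0,n-1}\1_{\Xset^n}\|_\infty \sup_\ell\|\addf{\ell}\|_\infty/[\rho(1-\rho)]$. Combined with the one-step ratio estimate $\|\ukmod{0,n-1}\1_{\Xset^n}\|_\infty/(\chi\ukmod{0,n-1}\1_{\Xset^n})\leq 1/[(1-\rho)\chi\1_{\Xset_0}]$---which is precisely your ``Del~Moral-type'' bound, obtained from $\udlow\leq\udmod{0}\leq\udup$---this delivers the constant $\rho^{-2}(1-\rho)^{-4}(\chi\1_{\Xset_0})^{-2}$ directly, and the resulting bound on the initialisation term is \emph{uniform in $n$}. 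Hence its contribution to $\limsup_n n^{-1}\sigma^2_n(h_n)$ is in fact zero; the second summand in \eqref{eq:variance:bound:full:monty} survives only as a crude residual upper bound.

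Your Cauchy--Schwarz step is unnecessary and costs sharpness: the geometric series $\sum_m\rho^m$ is already summable, so $\bigl(\sum_m C\rho^m\|\addf{m}\|_\infty\bigr)^2$ is bounded uniformly in $n$ without any further inequality. Invoking Cauchy--Schwarz inflates this to $O(n)$, which still suffices after dividing by $n$ but obscures the fact that the initialisation term is asymptotically negligible. Drop the Cauchy--Schwarz and your argument is essentially the paper's---you are re-deriving \eqref{eq:retrokmodmodnorm:bound} at $m=0$ rather than quoting it. As for the $\|\initwgtfunc\|_\infty$ factor you flag: the paper's proof likewise bounds only $\|\retrokmodmodnorm_{0,n}h_n\|_\infty^2/(\chi\ukmod{0,n-1}\1_{\Xset^n})^2$ and does not display this factor either, so your hedge there is not a departure from the paper.
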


\begin{corollary} \label{cor:variance:bound:ideal:PaRIS}
Assume \hypref[assum:biased:estimate]{assum:strong:mixing}. Then for every $\K \geq 2$ and $h_n \in \bmaf{\Xfd^n}$, 
\begin{multline*}
\limsup_{n \to \infty} \frac{1}{n} \bar{\sigma}^2_n(h_n) \leq  \left( \cbound + \frac{1}{\K - 1} \dbound \right)
\sup_{\ell \in \nset} \| \addf{\ell} \|_\infty^2 \sup_{\ell \in \nset} \| \adjfuncforward{\ell} \|_\infty \sup_{\ell \in \nset} \| \wgtfuncideal{\ell} \|_\infty \\
+ \frac{1}{\rho^2(1 - \rho)^4 (\chi \1_{\Xset_0})^2} \sup_{\ell \in \nset} \| \addf{\ell} \|_\infty^2,  
\end{multline*}
where $(\bar{\sigma}^2_n)_{n \in \nset}$ is given by \eqref{eq:as:var:ideal:PaRIS}. 
\end{corollary}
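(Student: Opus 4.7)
The plan is to obtain Corollary~\ref{cor:variance:bound:ideal:PaRIS} as an immediate specialisation of Corollary~\ref{cor:variance:bound} to the idealised scenario in which the transition densities $(\ud{n})_{n \in \nset}$ are available in closed form and therefore need not be estimated. Following the discussion preceding Corollary~\ref{cor:clt:ideal:paris}, this amounts to setting $\ukest{n}{z}(x_n, x_{n+1}) = \ud{n}(x_n, x_{n+1})$ for every $z$, with the auxiliary kernel $\ukdist{n}$ chosen arbitrarily.

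Under this choice I would first verify that all skew quantities collapse to their original counterparts: by construction $\udmod{n} = \ud{n}$, whence $\postmod{n} = \post{n}$, $\ukmod{n} = \uk{n}$, $\bkmod{n} = \bkw{n}$, $\tstatmod{n} = \tstat{n}$, and the composite kernel $\retrokmodmod_{m,n}$ of \eqref{eq:def:retrokmodmod} reduces to $\retrok_{m,n}$ of \eqref{eq:def:retrok}. Moreover, the integrand in \eqref{eq:def:ukestvar} vanishes identically, so $\ukestvar{n} \equiv 0$, and the weight functions $\wgtfuncext{n}$ and $\wgtfuncmod{n}$ introduced in \hypref{assum:bound:filter:pseudomarginal} both reduce to $\wgtfuncideal{n}$. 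Comparing \eqref{eq:as:var:decomp} with \eqref{eq:as:var:ideal:PaRIS} term by term, one then checks that $\sigma^2_n(h_n) = \bar{\sigma}^2_n(h_n)$ in this regime.

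Next I would confirm that the hypotheses of Corollary~\ref{cor:variance:bound} are in force: \hypref{assum:biased:estimate} is trivially satisfied by construction; \hypref{assum:bound:filter:pseudomarginal} reduces to the boundedness of $\wgtfuncideal{n}$ together with that of $\adjfuncforward{n}$ and $\initwgtfunc$, all of which are part of the hypotheses of Corollary~\ref{cor:variance:bound:ideal:PaRIS}; and \hypref{assum:strong:mixing} holds by assumption, with the bounds on $\udmod{n}$ automatically inherited from those on $\ud{n}$. Applying Corollary~\ref{cor:variance:bound} with $\sup_\ell \|\ukestvar{\ell}\|_\infty = 0$ and $\wgtfuncmod{\ell}$ replaced by $\wgtfuncideal{\ell}$ then yields precisely the stated bound. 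The main obstacle is essentially nil: all the substantive work---including the handling of the initialisation term via strong mixing---has already been performed in Theorem~\ref{thm:variance:bound} and Corollary~\ref{cor:variance:bound}, and the only conceptual point is the observation that the pseudo-marginal framework degenerates cleanly to the exact-density case upon taking the trivial estimator $\ukest{n}{z} = \ud{n}$.
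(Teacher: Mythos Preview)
Your proposal is correct and follows exactly the approach intended by the paper: the paper does not give a separate proof of Corollary~\ref{cor:variance:bound:ideal:PaRIS}, but the discussion preceding Corollary~\ref{cor:clt:ideal:paris} makes clear that the ideal case is obtained by taking $\ukest{n}{z}=\ud{n}$, which forces $\ukestvar{n}\equiv 0$, $\wgtfuncmod{n}=\wgtfuncideal{n}$, and $\sigma_n^2(h_n)=\bar{\sigma}_n^2(h_n)$, after which Corollary~\ref{cor:variance:bound} yields the stated bound verbatim.
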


Finally, we provide, for completeness, a time-uniform bound on the asymptotic variances of the random-weight particle filter corresponding to repeated forward sampling (Algorithm~\ref{alg:pm:SMC}). This bound is obtained more or less for free while establishing Theorem~\ref{thm:variance:bound} (see Section~\ref{sec:variance:bounds} for details).  

\begin{proposition}  \label{prop:variance:bound:filter}
Assume \hypref[assum:biased:estimate]{assum:bound:filter:pseudomarginal} and \hypref{assum:strong:mixing}. Then for every $\precpar \in \precparsp$ and $h \in \bmf{\Xfd_n}$,  
\begin{multline*}
\tilde{\sigma}^2_n(h) 
\leq \cboundtd  \|h \|_\infty^2 \sup_{\ell \in \nset} \| \adjfuncforward{\ell} \|_\infty  \left( \sup_{\ell \in \nset} \| \wgtfuncmod{\ell} \|_\infty + \sup_{\ell \in \nset} \| \ukestvar{\ell} \|_\infty \right) \\
+ \| h \|_\infty^2 \frac{\rho^{2n}}{\rho^4(1 - \rho)^2 (\chi \1_{\Xset_0})^2},   
\end{multline*}
where $(\tilde{\sigma}^2_n)_{n \in \nset}$ are given by \eqref{eq:as:var:pm:filter}. 
\end{proposition}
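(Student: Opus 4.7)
The plan is to derive Proposition~\ref{prop:variance:bound:filter} as a simpler specialisation of the machinery underlying Theorem~\ref{thm:variance:bound}: the variance expression \eqref{eq:as:var:pm:filter} has exactly the same structural shape as the single-sum part of \eqref{eq:non-recursive:as:var}, but with the centred object $\retrokmodmodnorm_{m+1,n} h_n$ replaced by $\ukmod{m+1}\cdots\ukmod{n-1}(h-\postmod{n}h)$ and, crucially, without the double-sum contribution associated with the PaRIS accumulators. Consequently, the main task reduces to showing that each summand in \eqref{eq:non-recursive:filter:as:var} decays exponentially in $n-m$, so that the total sum is bounded by a convergent geometric series uniformly in $n$.

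Writing $L_{m,n} \eqdef \ukmod{m+1}\cdots\ukmod{n-1}$ (with $L_{n-1,n} = \operatorname{id}$), I would first record two consequences of \hypref{assum:strong:mixing}. The normalised kernel $\bar L_{m,n}(x,\cdot) \eqdef L_{m,n}(x,\cdot)/L_{m,n}\1(x)$ inherits a Doeblin minorisation with constant $1-\rho$, giving the Dobrushin contraction $\beta(\bar L_{m,n}) \le \rho^{n-m-1}$; expressing $\postmod{n}h$ as a weighted average of $\bar L_{m,n} h(\cdot)$ then yields the centred-semigroup stability bound
\begin{equation*}
\left| \ukmod{m+1}\cdots\ukmod{n-1}(h-\postmod{n}h)(x) \right| \le 2\rho^{n-m-1}\|h\|_\infty \, L_{m,n}\1(x).
\end{equation*}
The second ingredient is the \emph{uniform}-in-$n$ ratio estimate $\sup_x L_{m,n}\1(x)/\inf_x L_{m,n}\1(x) \le 1/(1-\rho)$, valid for all $n>m$. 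Substituted into the generic summand of \eqref{eq:non-recursive:filter:as:var} together with the elementary bound $\postmod{m}\ukmod{m}\1 \ge \udlow$, these two facts control each term by a constant multiple of $\rho^{2(n-m-1)}\|h\|_\infty^2 \|\varphi_m\|_\infty \|\adjfuncforward{m}\|_\infty / [\udlow(1-\rho)]$; summing the resulting geometric series then produces the $1/[\rho^2(1-\rho^2)]$ factor inside $\cboundtd$ after aggregating the $\wgtfuncmod{\cdot}$ and $\ukestvar{\cdot}$ contributions. The initialisation term of \eqref{eq:as:var:pm:filter} is handled by the same centred-stability bound applied to the full product $\ukmod{0}\cdots\ukmod{n-1}$, yielding the $\rho^{2n}$-type global prefactor, while the ratio estimate converts the $\chi$-integral of $[\ukmod{0}\cdots\ukmod{n-1}\1]^2$ into the quoted $(\chi\1_{\Xset_0})^2$ in the denominator.

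The principal obstacle, and the only step requiring genuine care, is the uniform-in-$n$ ratio bound $\sup L_{m,n}\1/\inf L_{m,n}\1 \le 1/(1-\rho)$. Crude multiplicative estimates $\sup L_{m,n}\1 \le \udup^{n-m-1}$ and $\inf L_{m,n}\1 \ge \udlow^{n-m-1}$ would give only the telescoping ratio $(1-\rho)^{-(n-m-1)}$, which cancels the geometric factor $\rho^{2(n-m-1)}$ whenever $\rho \ge 1/2$ and destroys any hope of a time-uniform bound. The correct argument is that a \emph{single} application of $\ukmod{m+1}$ on the right of $L_{m+1,n}\1$ already flattens the sup/inf quotient to $\udup/\udlow = 1/(1-\rho)$, independently of the shape of $L_{m+1,n}\1$ and hence of $n-m$; it is this Doeblin-flattening, and not a multiplicative accumulation of one-step contractions, that ultimately makes the asymptotic variance of the random-weight particle filter time-uniform.
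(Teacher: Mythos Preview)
Your approach is correct and yields a valid time-uniform bound. The two core ingredients you isolate---the Dobrushin contraction $\beta(\bar L_{m,n}) \le \rho^{n-m-1}$ of the normalised forward semigroup and the one-step Doeblin flattening $\sup_x L_{m,n}\1(x)/\inf_x L_{m,n}\1(x) \le \udup/\udlow = 1/(1-\rho)$---are exactly right, and are in fact the content of Lemma~\ref{lem:geo:bound} and the estimate \eqref{eq:retrokmod:ratio:bd} in the paper. Your observation that the ratio bound must come from a \emph{single} Doeblin flattening rather than a multiplicative accumulation is the key point, and it is precisely how \eqref{eq:retrokmod:ratio:bd} is obtained.

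The paper's own route differs in packaging: instead of arguing directly on \eqref{eq:non-recursive:filter:as:var}, it observes that $x_{0:n} \mapsto h(x_n)$ is an additive functional in $\bmaf{\Xfd^n}$ with $\addf{n-1}(x_{n-1},x_n)=h(x_n)$ and $\addf{k}\equiv 0$ for $k<n-1$, and then reapplies the already-established bound \eqref{eq:first:bound:first:term} from the proof of Theorem~\ref{thm:variance:bound}. With only one non-zero $\addf{k}$, the inner sum there collapses to $\|h\|_\infty\,\rho^{|n-m-2|-1}$, and summing $\rho^{2|n-m-2|}$ over $m$ gives $1/(1-\rho^2)+\rho^2$, which combined with the prefactor $\udup^3/(\udlow^4\rho^2)=1/[(1-\rho)^3\udlow\rho^2]$ is precisely $\cboundtd$. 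This economises effort by recycling PaRIS-level estimates, at the cost of a slightly looser constant: your direct argument delivers a per-term decay $\rho^{2(n-m-1)}$ rather than $\rho^{2(n-m-2)}$ and so avoids the extra $1/\rho^2$ inside $\cboundtd$. Consequently your claim that the route ``produces the $1/[\rho^2(1-\rho^2)]$ factor inside $\cboundtd$'' is not quite accurate---you would in fact arrive at a sharper constant of order $1/[(1-\rho)(1-\rho^2)\udlow]$---but the stated bound is of course still valid.
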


\subsection{Bounds on asymptotic bias}
\label{sec:lipschitz:continuity}

In the previous section we saw that asymptotically, as $\N$ tends to infinity, the estimator produced by $n$ iterations of Algorithm~\ref{alg:pm:PaRIS} converges to the `skew' expectation $\postmod{0:n} h_n$. In this part we will study the discrepancy between $\postmod{0:n} h_n$ and $\post{0:n} h_n$ and establish an $\mathcal{O}(n \precpar)$ bound on the same. The analysis will be performed under the assumption that the precision parameter $\precpar$ controls, uniformly in $n$, the bias of the estimators provided by \hypref{assum:biased:estimate} in the following sense.    

\begin{hypH}
\label{assum:bias:bound}
There exists a constant $c > 0$ such that for all $n \in \mathbb{N}$, $\varepsilon \in \precparsp$, $h \in \bmf{\Xfd_n \tensprod \Xfd_{n + 1}}$, and $x \in \Xset_n$,   
$$
|\ukmod{n} h (x) - \uk{n} h (x)| \leq c \varepsilon \| h \|_\infty. 
$$
\end{hypH}

\begin{example}[Durham--Gallant estimator, cont.]
We check \hypref{assum:bias:bound} for the Durham--Gallant estimator in Example~\ref{eq:durham:gallant}. 

\subsubsection*{Case~1}
Assume that the emission densities of the model are uniformly bounded, \ie, there exists $\sigma_+ \in \rsetpos$ such that $\| \md{n} \|_\infty \leq \sigma_+$ for all $n \in \nset$. In the case of the Euler scheme and under the given assumptions on equation \eqref{eq:SDE}, it can be shown that there exist $c_\delta > 0$ and $d_\delta > 0$ such that for all $\precpar \in \precparsp_\delta$ and $(x_n, x_{n + 1}) \in \Xset^2$, 
\begin{equation} \label{eq:Euler:bound}
\left| q_\delta(x_n, x_{n + 1}) - \int \bar{\kernel{Q}}^{\delta / \precpar - 1}_\precpar(x_n, \rmd x) \, \bar{q}_\precpar(x, x_{n + 1}) \right|\leq c_\delta \frac{\precpar}{\delta} \exp \left( - d_\delta \| x_{n + 1} - x_n \|^2\right);  
\end{equation}
see \cite{bally:talay:1996} (see also \cite{delmoral:jacod:2001} for an application to SMC methods). Thus, using \eqref{eq:ukmod:durham:gallant}, for all $x_n$ and $h \in \bmf{\Xfd^{\tensprod 2}}$,
\begin{align*}
\lefteqn{\left|\ukmod{n} h (x_n) - \uk{n} h (x_n) \right|}\hspace{10mm} \\ 
&\leq c_\delta \frac{\precpar}{\delta} \int h(x_n, x_{n + 1}) g(x_n, x_{n + 1}, y_{n + 1}) \exp \left( - d_\delta \| x_{n + 1} - x_n \|^2\right) \, \rmd x_{n + 1} \\
&\leq c_\delta \frac{\precpar}{\delta} \left( \frac{\pi}{d_\delta} \right)^{d_x / 2} \sigma_+ \|h \|_\infty.  
\end{align*}

\subsubsection*{Case~2}

The second case can be treated straightforwardly by combining the bound \eqref{eq:Euler:bound}, applied to the bivariate process $(X_t, Y_t)_{t > 0}$, with \eqref{eq:ukmod:durham:gallant:case:2}. This provides the existence of constants $\tilde{c}_\delta > 0$ and $\tilde{d}_\delta > 0$ such that for all $\precpar \in \precparsp_\delta$ and $(x_n, y_n, y_{n + 1}) \in \Xset \times \Yset^2$ and $h \in \bmf{\Xfd^{\tensprod 2}}$,  
\begin{align*}
\lefteqn{\left|\ukmod{n} h (x_n) - \uk{n} h (x_n) \right|}\\
&\leq \tilde{c}_\delta \frac{\precpar}{\delta} \exp \left( - \tilde{d}_\delta \| y_{n + 1} -y_n \|^2\right) \int h(x_n, x_{n + 1}) \exp \left( - \tilde{d}_\delta \| x_{n + 1} - x_n \|^2\right) \, \rmd x_{n + 1} \\
&\leq \tilde{c}_\delta \frac{\precpar}{\delta} \left( \frac{\pi}{\tilde{d}_\delta} \right)^{d_x / 2} \|h \|_\infty.  
\end{align*}

\end{example}

\begin{example}[the exact algorithm, cont.] \label{ex:exact:algorithm:cont}
In this case, the estimator is unbiased; thus, \hypref{assum:bias:bound} holds true for $\precpar = 0$. 
\end{example}

\begin{example}[ABC smoothing, cont.] In \cite{martin:jasra:singh:whiteley:delmoral:maccoy:2014}, the authors carry through their theoretical analysis under the assumption that each emission density $\md{n}$ is Lipschitz in the sense that there exists some constant $L \in \rsetpos$ such that   
\begin{equation} \label{eq:lipschitz:md}
\sup_{x_n \in \Xset_n} \left| \md{n}(x_n, y_n) - \md{n}(x_n, y'_n) \right| \leq L \| y_n - y_n' \|_1
\end{equation}
for all $y_n$ and $y'_n$ in $\rset^{d_y}$. For the purpose of illustration, assume that $\kappa_\precpar$ is a zero-mean multivariate normal distribution with covariance matrix $\precpar^2 \mathbf{I}_{d_y}$ for $\precpar > 0$. It is then easily shown that the condition \eqref{eq:lipschitz:md} implies \hypref{assum:bias:bound}; indeed, in this case, for all $x_n \in \Xset_n$ and $h \in \bmf{\Xfd_n \tensprod \Xfd_{n + 1}}$, using \eqref{eq:ukmod:ABC:smoothing}, 
$$
\left| \ukmod{n} h (x_n) - \mathbf{L}_n h (x_n) \right| \leq L  \|h \|_\infty \int \kappa_\precpar(z - y_{n + 1}) \| z - y_{n + 1} \|_1 \, \rmd z \leq \precpar d_y L \|h \|_\infty. 
$$
\end{example}

Under \hypref{assum:strong:mixing} and \hypref{assum:bias:bound} we may establish the next theorem, whose proof is postponed to Section~\ref{sec:proofs}.  

\begin{theorem} \label{thm:bias:bound}
    Assume \hypref{assum:biased:estimate} and \hypref[assum:strong:mixing]{assum:bias:bound}. Then for all $n \in \nset$, $\precpar \in \precparsp$, and $h_n \in \bmaf{\Xfd^n}$, 
    \begin{align*}
        \precpar^{-1} \big| \postmod{0:n} h_n -  \post{0:n} h_n \big| &\leq 2 c \frac{\udup}{\udlow^2} \sum_{k = 0}^{n - 1} \| \addf{k} \|_\infty \left( \sum_{m = 1}^{n - 1} \rho^{|k - m| - 1} + 1 \right) \\
        &\leq 2 c n \frac{\udup}{\udlow^2} \left( 1 + \frac{1}{\rho} + \frac{2}{1 - \rho} \right) \sup_{k \in \intvect{0}{n - 1}} \| \addf{k} \|_\infty,   
    \end{align*}
where $c$ is the constant in \hypref{assum:bias:bound}. 
\end{theorem}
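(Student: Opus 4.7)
The plan is to exploit the additivity $h_n = \sum_{k = 0}^{n - 1} \addf{k}$ and then to handle each bivariate bias via a one-step-at-a-time telescoping of the product of kernels, using \hypref{assum:bias:bound} to bound the local perturbation and \hypref{assum:strong:mixing} to ensure geometric-rate propagation. First I would write
$$
\postmod{0:n} h_n - \post{0:n} h_n = \sum_{k = 0}^{n - 1} \{ \postmod{0:n} \addf{k} - \post{0:n} \addf{k} \},
$$
viewing each $\addf{k}$ as a function on $\Xset^n$ that depends only on $(x_k, x_{k + 1})$, so that the problem reduces to controlling each bivariate bias separately.

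For each fixed $k$, I would introduce the hybrid unnormalised measures
$$
\nu^{(m)}_{0:n} \eqdef \chi \ukmod{0} \cdots \ukmod{m - 1} \uk{m, n - 1}, \quad m \in \intvect{0}{n},
$$
which interpolate continuously between $\chi \uk{0, n - 1}$ and $\chi \ukmod{0, n - 1}$, and observe that, since $\chi \uk{0, n - 1}(\addf{k} - \post{0:n} \addf{k}) = 0$ by definition of $\post{0:n}$,
$$
\postmod{0:n} \addf{k} - \post{0:n} \addf{k} = \frac{1}{\chi \ukmod{0, n - 1} \1_{\Xset^n}} \sum_{m = 0}^{n - 1} (\nu^{(m + 1)}_{0:n} - \nu^{(m)}_{0:n})(\addf{k} - \post{0:n} \addf{k}).
$$
Each telescoped increment factors through the local perturbation $(\ukmod{m} - \uk{m})$, which by \hypref{assum:bias:bound} contributes a factor $c \varepsilon$ times the supremum of a test function $\psi^{(k, m)}$ that is a (signed) forecast of the centered additive $\addf{k} - \post{0:n} \addf{k}$ obtained by integrating against the remaining true kernels $\uk{m + 1}, \ldots, \uk{n - 1}$.

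The key step is then to show that $\|\psi^{(k, m)}\|_\infty$ decays geometrically in $|k - m|$ at the rate $\rho$ defined in \eqref{eq:def:rho}. For $m < k$, after the centering this amounts to applying the normalised Feynman-Kac semigroup between times $m + 1$ and $k$ to a zero-mean function, and \hypref{assum:strong:mixing} gives the Dobrushin-type contraction $\|\psi^{(k, m)}\|_\infty \leq (\udup / \udlow) \rho^{k - m - 1} \|\addf{k}\|_\infty$. For $m > k$, the analogous backward-in-time contraction (recalling that the backward Markov kernels $\bkw{\ell}$ inherit strong mixing from \hypref{assum:strong:mixing}) yields an $\rho^{m - k - 1}$ decay. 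Finally, the case $m = k$ is a direct application of \hypref{assum:bias:bound} to the forward-likelihood-weighted function $\addf{k} \cdot (\uk{k + 1, n - 1} \1_{\Xset^n})$, which produces the ``$+ 1$'' term in the stated bound. Summing the resulting factors over $m$ first and then over $k$ yields the claimed inequality.

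The main obstacle I anticipate is the careful bookkeeping of normalising constants that appear when switching between the unnormalised Feynman-Kac semigroup (for which the telescoping is natural) and its normalised counterpart (for which the Dobrushin contraction is cleanest to state). Here \hypref{assum:strong:mixing} is used both to lower-bound the denominator $\chi \ukmod{0, n - 1} \1_{\Xset^n}$ via $\udlow$ and to control the unnormalised-to-normalised conversion in each semigroup-contraction estimate by a ratio bounded by $\udup / \udlow$; combining these, the multiplicative constant $\udup / \udlow^2$ in the final bound emerges naturally.
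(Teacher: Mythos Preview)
Your overall strategy---additivity plus a one-kernel-at-a-time telescoping combined with forward/backward Dobrushin contraction---is the same as the paper's. However, the specific telescoping you propose has a genuine gap, and it is precisely in the place you flag as ``bookkeeping'': the normalising constants are not a bookkeeping nuisance but the heart of the matter.

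Your decomposition keeps a \emph{fixed} denominator $\chi \ukmod{0, n-1} \1_{\Xset^n}$ and a \emph{fixed} centering $c=\post{0:n}\addf{k}$. After applying \hypref{assum:bias:bound} to the $m$-th increment (say for $m<k$), the test function whose sup-norm you must take is, up to centering, $x_{m+1}\mapsto \uk{m+1}\cdots\uk{n-1}(\addf{k}-c)(x_{m+1})$, whose size is $\|\uk{m+1}\cdots\uk{n-1}\1\|_\infty$ times the normalised forgetting factor. You therefore end up with the ratio
\[
\frac{(\chi\ukmod{0,m-1}\1_{\Xset_m})\,\|\uk{m+1}\cdots\uk{n-1}\1_{\Xset_n}\|_\infty}{\chi\ukmod{0,n-1}\1_{\Xset_n}},
\]
in which the forward block in the numerator is built from the \emph{true} kernels $\uk{\ell}$ while the denominator is built entirely from the \emph{skew} kernels $\ukmod{\ell}$. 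Under \hypref{assum:strong:mixing} this mismatch can only be controlled by $(\udup/\udlow)^{n-m-1}$, not by a constant, so the bound explodes in $n$. A symmetric failure occurs for $m>k$: your centering constant $\post{0:n}\addf{k}$ is computed in the true model, whereas the backward contraction you invoke controls the oscillation of $\tstatmod{m}\addf{k}$ around $\postmod{0:m}\addf{k}$; the residual $|\postmod{0:m}\addf{k}-\post{0:n}\addf{k}|$ does not decay in $m$.

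The paper avoids both issues by telescoping the \emph{normalised} hybrid expectations
\[
\frac{\postmod{0:m}\uk{m,n-1}\addf[n]{k}}{\postmod{0:m}\uk{m,n-1}\1_{\Xset^n}}
=\frac{\nu^{(m)}_{0:n}\addf[n]{k}}{\nu^{(m)}_{0:n}\1_{\Xset^n}},
\]
so that at every step the numerator and denominator share the \emph{same} forward block $\uk{m}\cdots\uk{n-1}$. This is precisely what allows the ratio bound \eqref{eq:retrokmod:ratio:bd} to give the $n$-independent constant $\udup/\udlow^2$, and it also replaces your fixed centering by an $m$-dependent one that matches the backward contraction. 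Rewriting each telescoped increment via the retro-prospective kernels $\retrokmod_{m,n}$ (Lemma~\ref{lem:retro:prospective:id}) and the identities of Lemma~\ref{lemma:three:identities} is then what makes Lemmas~\ref{lem:geo:bound} and \ref{lem:diff:bound} applicable. Your plan is salvageable, but only after switching to this normalised telescoping; with a fixed denominator it does not go through.
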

In the case where $\sup_{k \in \nset} \| \addf{k} \|_\infty < \infty$, the bound provided by Theorem \ref{thm:bias:bound} is $\mathcal{O}(n)$. Moreover, by letting $\addf{k} \equiv 0$, for $k \in \intvect{0}{n - 2}$ and $\addf{n - 1}(x_{n - 1}, x_n) = h(x_n)$ for some given objective function $h \in \bmf{\Xfd_n}$, Theorem \ref{thm:bias:bound} provides, as a by-product, the following uniform error bound for the marginals (a result referred to as the \emph{filter sensitivity} in the case of parametric state-space models).  

\begin{corollary} \label{cor:filter:sensitivity}
 Assume \hypref{assum:biased:estimate} and \hypref[assum:strong:mixing]{assum:bias:bound}. Then for all $n \in \nset$, $\precpar \in \precparsp$, and $h \in \bmaf{\Xfd^n}$,
        $$
        \precpar^{-1} \big| \postmod{n} h -  \post{n} h \big| \leq 2 c \frac{\udup}{\udlow^2}  \| h \|_\infty \left( 1 + \frac{1}{\rho (1 - \rho)} \right).   
    $$
\end{corollary}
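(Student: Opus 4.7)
\textbf{Proof plan for Corollary~\ref{cor:filter:sensitivity}.} The strategy is to reduce the filter sensitivity to a special case of the bias bound for additive functionals already established in Theorem~\ref{thm:bias:bound}. Given any $h \in \bmf{\Xfd_n}$, I would define a particular additive functional $h_n \in \bmaf{\Xfd^n}$ by setting $\addf{k} \equiv 0$ for all $k \in \intvect{0}{n - 2}$ and $\addf{n - 1}(x_{n - 1}, x_n) \eqdef h(x_n)$. Under this choice, the additive functional collapses to a function of the last coordinate only, $h_n(x_{0:n}) = h(x_n)$, so that $\post{0:n} h_n = \post{n} h$ and $\postmod{0:n} h_n = \postmod{n} h$ by definition of the marginals.

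Plugging these identifications into the first (non-simplified) bound of Theorem~\ref{thm:bias:bound}, only the term $k = n - 1$ survives in the outer sum, and $\| \addf{n - 1} \|_\infty = \| h \|_\infty$. Thus
\begin{equation*}
\precpar^{-1} \big| \postmod{n} h - \post{n} h \big| \leq 2 c \frac{\udup}{\udlow^2} \| h \|_\infty \left( \sum_{m = 1}^{n - 1} \rho^{|n - 1 - m| - 1} + 1 \right).
\end{equation*}
The remaining task is to bound the geometric sum uniformly in $n$. Substituting $j = n - 1 - m$, the sum becomes $\sum_{j = 0}^{n - 2} \rho^{j - 1} = \rho^{-1} \sum_{j = 0}^{n - 2} \rho^j \leq 1/(\rho(1 - \rho))$, which yields the announced bound. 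Since this estimate is independent of $n$, we obtain the claimed time-uniform (``filter sensitivity'') bound.

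The argument is essentially mechanical once Theorem~\ref{thm:bias:bound} is in hand; the only point requiring a modicum of care is to verify that the choice of degenerate additive functional produces exactly the desired marginal expectations and that the resulting single-term sum telescopes to a geometric series controlled by $\rho < 1$. There is no genuine obstacle here, as all the heavy lifting (the strong mixing assumption \hypref{assum:strong:mixing} and the local bias control \hypref{assum:bias:bound}) has been absorbed into Theorem~\ref{thm:bias:bound}.
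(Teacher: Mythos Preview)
Your proposal is correct and matches exactly the approach the paper itself indicates just before stating the corollary: specialise Theorem~\ref{thm:bias:bound} to the degenerate additive functional with $\addf{k}\equiv 0$ for $k\le n-2$ and $\addf{n-1}(x_{n-1},x_n)=h(x_n)$, then bound the single surviving geometric sum by $1/(\rho(1-\rho))$. There is nothing to add.
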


\begin{remark}
Consider now a parameterised version of the model, where the transition densities $(\ud{n ; \theta})_{k \in \nset}$ are indexed by some parameter $\theta$ belonging to some parameter space $\Theta$ being a subset of $\rset^d$. Assume further that all $(\ud{n ; \theta})_{n \in \nset}$ are differentiable with respect to $\theta$ and such that for all $n \in \nset$, $x_n \in \Xset$, and $h \in \bmf{\Xfd_{n + 1}}$,  
\begin{multline*}
\nabla_\theta \int \ud{n ; \theta}(x_n, x_{n + 1}) h(x_{n + 1}) \, \mu_{n + 1}(\rmd x_{n + 1}) \\ 
= \int \nabla_\theta \ud{n ; \theta}(x_n, x_{n + 1}) h(x_{n + 1}) \, \mu_{n + 1}(\rmd x_{n + 1})
\end{multline*}
and 
$$
\sup_{\theta \in \Theta} \int |\nabla_\theta \ud{n ; \theta}(x_n, x_{n + 1})| \, \mu_{n + 1}(\rmd x_{n + 1}) \leq c < \infty 
$$
for some positive constant $c$, implying \hypref{assum:bias:bound} in the sense that for all $n \in \nset$, $h \in \bmf{\Xfd_{n + 1}}$, and $x_n \in \Xset$, 
$$
| \uk{n; \theta}h(x_n) - \uk{n; \theta'}h(x_n) | \leq c \| \theta - \theta' \| \| h \|_\infty
$$
(\emph{i.e.}, $\precpar = \| \theta - \theta' \|$ in this case). Finally, assume also that family satisfies \hypref{assum:strong:mixing} uniformly over the parameter space in the sense that for all $n \in \nset$, $(x_n, x_{n + 1}) \in \Xset^2$, and $\theta \in \Theta$, 
$
\udlow \leq \ud{n; \theta}(x_n, x_{n + 1}) \leq \udup. 
$  
Then Theorem~\ref{thm:bias:bound} provides a positive constant $d$ such that for all $n \in \nset$,  
$$
\big| \post{0:n ; \theta} h_n -  \post{0:n ; \theta'} h_n \big|  
\leq d n \| \theta - \theta' \| \sup_{k \in \nset} \| \addf{k} \|_\infty. 
$$
This extends previous results on the uniform continuity of the filter distribution (see, \eg, \cite{papavasiliou:2006,legland:oudjane:2004}) to smoothing of additive state functionals.  
\end{remark}

\section{A numerical example}
\label{sec:numerical:results}

An exhaustive numerical analysis of the methods and results presented above is beyond the scope of the present paper, and is left as future research. Some numerical illustrations of Algorithm~\ref{alg:pm:PaRIS} in the special case of unbiased pseudo-marginalisation ($\precpar = 0$) via the exact algorithm (see Example~\ref{ex:exact:algorithm} and Example~\ref{ex:exact:algorithm:cont}) are provided in \cite{gloaguen2018online}. In this section, we focus on biased estimation ($\precpar > 0$) and illustrate the Lipschitz continuity established by Theorem \ref{thm:bias:bound} on the basis of a simple model that allows, as a comparison, a fully analytical solution to the additive smoothing problem.
 
In the following we will consider an instance of Example~\ref{eq:durham:gallant}, Case~1, where the latent diffusion $(X_t)_{t > 0}$ is an \emph{Ornstein--Uhlenbeck process} \cite{uhlenbeck1930theory} on $\rset$ parameterised by
$$
\mu(x) = - (x - \theta), \quad \sigma(x) \equiv 1,
$$
where $\theta \in \mathbb{R}$ is a parameter.
This process is assumed to be initialised according to the standard normal distribution. Furthermore, conditionally to $(X_t)_{t > 0}$, observations $(Y_n)_{n \in \nset}$ are generated as
$$
Y_n = (1 - \precpar) \varphi(X_n) + \eta_n, \quad n \in \nsetpos,
$$
where $t_n = \delta n$ for some given observation interval $\delta > 0$, $X_n = X_{t_n}$, $(\eta_n)_{n \in \nsetpos}$ are mutually independent and standard normally distributed noise variables, $\precpar \in \precparsp \eqdef [0, 1]$ is a parameter, and $\varphi$ is a bounded measurable function on $\rset$. We let $\hk_\delta$ denote the (Gaussian) Markov transition kernel of the time-discrete chain $(X_n)_{n \in \nset}$. 

In this toy example, our aim is to illustrate Theorem~\ref{thm:bias:bound} by viewing models with $\precpar > 0$ as `skew' versions of a `true' model with $\precpar = 0$. Given simulated data $(y_n)_{n \in \nsetpos}$ from the true model, smoothed additive expectations under skew models with different $\precpar > 0$ will be compared to the same expectations under the true model. Note that the model specified above satisfies condition \hypref{assum:bias:bound}; indeed, denote by $\md{}^\precpar(x_n, y_n)$ the emission density of $Y_n$ given $X_n$, which is Gaussian with mean $(1 - \precpar) \varphi(x_n)$ and unit variance, and by $\md{}(x_n, y_n)$ the same density for the true model. Then by the mean-value theorem, for all $\precpar \in \precparsp$ and $(x_n, y_n) \in \rset^2$, 
\begin{align*}
\left \vert \md{}^\precpar(x_n, y_n) - \md{}(x_n, y_n) \right \vert &\leq \precpar \sup_{\precpar \in \precparsp} \left \vert\frac{\partial}{\partial \precpar} \md{}^\precpar(x_n, y_n) \right\vert \\
&\leq \varepsilon \left( \vert \varphi(x_n) y_n - \varphi^2(x_n) \vert + \varphi^2(x_n) \right), 
\end{align*}
implying immediately that for all bounded measurable real-valued functions $h$,  
\begin{align*}
\left \vert \ukmod{n} h(x_n) - \uk{n} h(x_n) \right \vert \leq \precpar c(y_{n + 1}) \| h\|_\infty, 
\end{align*}
where $\ukmod{n}(x_n, \rmd x_{n + 1}) \eqdef \hk_\delta(x_n, \rmd x_{n + 1}) \md{}^\precpar(x_{n + 1}, y_{n + 1})$ and $c(y_{n + 1}) \eqdef \| \varphi \|_\infty |y_{n + 1}| + 2 \| \varphi \|_\infty^2$. Thus, \hypref{assum:bias:bound} holds true under the mild assumption that $\sup_{n \in \nsetpos} |y_n| < \infty$. 

In this context, we conducted numerical experiments with $\theta = 5$ and $\varphi(x) = \min(\max(x, -10^5), 10^5)$. With this parametrisation, the model is, in practice, linear and Gaussian; thus, for linear additive state functionals in the form $h_n(x_{0:n}) = \sum_{k = 0}^n x_k$, a very good approximation of the exact solution $(\postmod{0:n} h_n)_{n \in \nset}$ to the optimal smoothing problem can, for the skewed models $(\precpar > 0)$ as well as the true model $(\precpar = 0)$, be obtained using Kalman recursions \cite{rauch1965maximum}. Figure~\ref{fig:varying:precpar} displays the discrepancy between $\postmod{0:n} h_n$ and $\post{0:n} h_n$ for varying $\precpar \in \{0, 0.05, 0.1, \ldots, 0.5 \}$ and a fixed $n = 50$ (the red line). Clearly, the bias increases---in perfect agreement with Theorem~\ref{thm:bias:bound}---at a rate that is at most linear in $\precpar$. In addition, Figure~\ref{fig:varying:precpar} shows similar biases (turquoise markers) obtained using the ideal PaRIS in Algorithm~\ref{alg:ideal:PaRIS} with $\N = 200$ particles and $\K = 2$ backward samples. In this algorithm, the particles were propagated using the \textit{optimal importance function} \cite{doucet2000sequential}, which can be computed explicitly in the linear Gaussian case. Here the PaRIS was re-run $60$ times for each value of $\precpar$. Note that the variance of the PaRIS estimates increases somewhat with $\precpar$, reflecting the fact that the proposal becomes less and less compatible with the data as the model gets increasingly skew. 

In order to illustrate further the $\mathcal{O}(n \precpar)$ bound provided by Theorem~\ref{thm:bias:bound} as well as the stochastic stability of Algorithm~\ref{alg:ideal:PaRIS} established in Corollary~\ref{cor:variance:bound:ideal:PaRIS}, Kalman smoothing was conducted for $n \in \intvect{1}{50}$ on the basis of a skew model with fixed $\varepsilon = 0.1$. Figure~\ref{fig:varying:n} shows, as expected, a linear increase of the bias with $n$ (red line). In addition, displaying also the errors (green lines) of $60$ independent ideal PaRIS replicates (obtained under the skew model with the same algorithmic parametrisation as previously), the same plot confirms the stochastic stability of the ideal PaRIS algorithm, as the variance does not grow faster than linearly with $n$. Finally, for completeness Figure~\ref{fig:varying:n} reports similar errors (blue lines) when the ideal PaRIS evolves under the dynamics of the true model, and in this case the bias is negligible as expected (whereas the variance is still increasing linearly). 

\begin{figure}
    \centering
    \begin{subfigure}[b]{0.48\textwidth}
        \includegraphics[width=\textwidth]{
        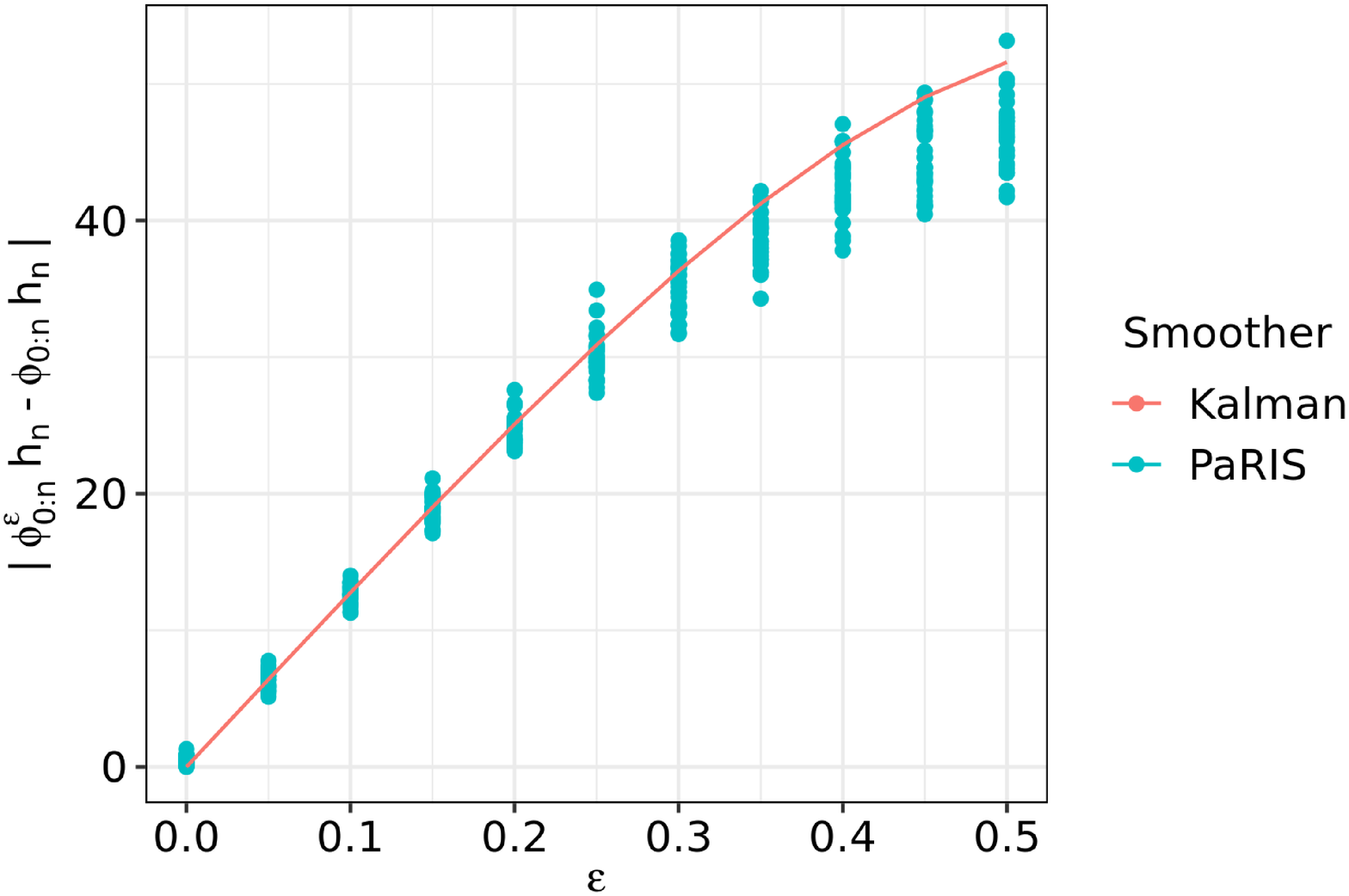}
        \caption{}
        \label{fig:varying:precpar}
    \end{subfigure}
    ~ 
    \begin{subfigure}[b]{0.48\textwidth}
        \includegraphics[width=\textwidth]{
        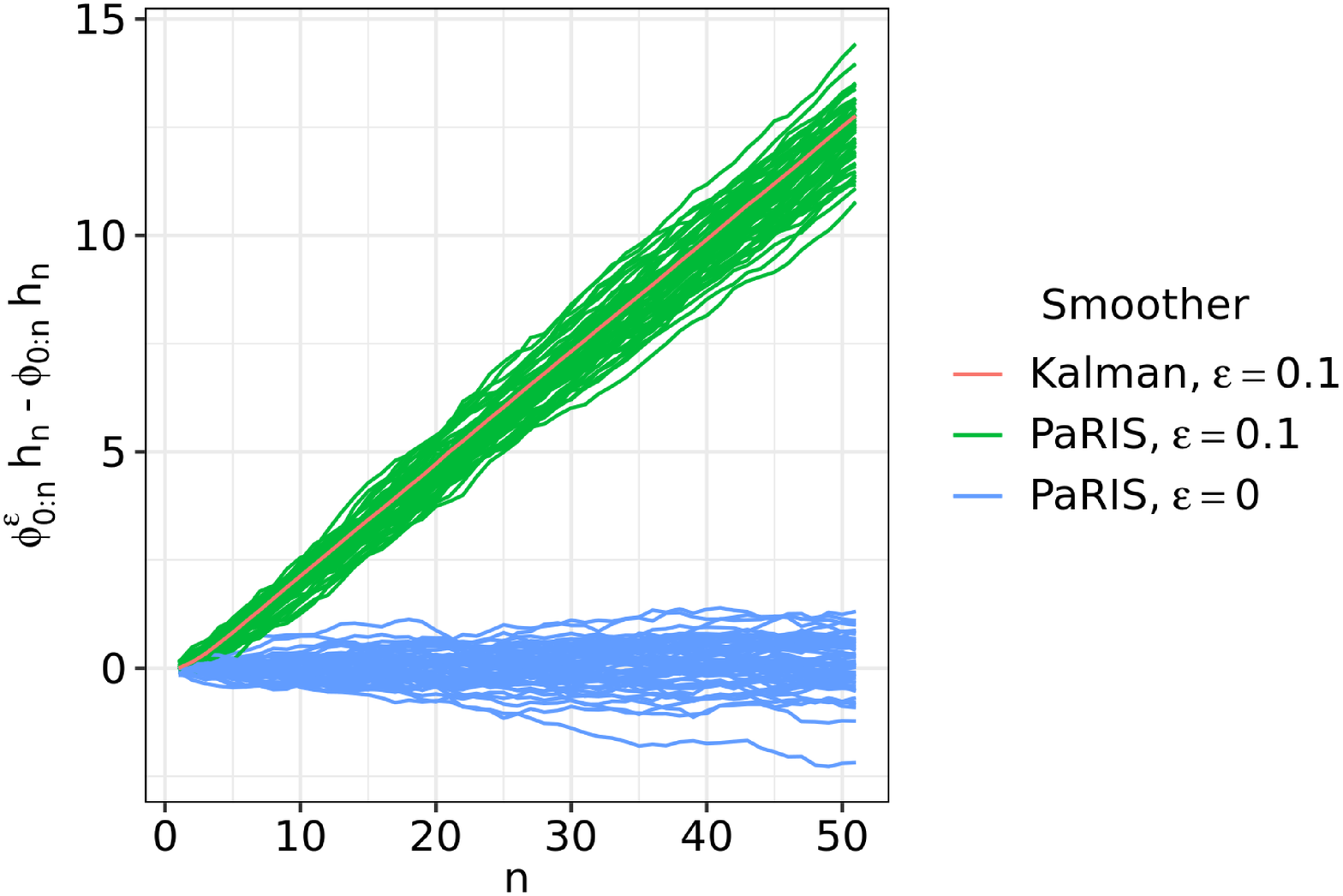}
        \caption{}
        \label{fig:varying:n}
    \end{subfigure}
    \caption{(a) The red line is the deviation of $\postmod{0:n} h_n$ from $\post{0:n} h_n$ for varying $\precpar \in \{0, 0.05, 0.1, \ldots, 0.5 \}$ and fixed $n = 50$, computed by means of Kalman smoothing. Turquoise markers are similar errors obtained on the basis of $60$ replicates of the ideal PaRIS in Algorithm~\ref{alg:ideal:PaRIS} with $\N = 200$ particles and $\K = 2$ backward samples. (b) The red line is the same deviation for increasing $n \in \intvect{1}{50}$ and fixed $\precpar = 0.1$. Then green and blue lines correspond to errors of $60$ independent ideal PaRIS replicates obtained under the skew and true models, respectively, for $\N = 200$ and $\K = 2$.} \label{fig}
\end{figure}




\section*{Acknowledgements}

The research of J.~Olsson is supported by the Swedish Research Council, Grant~2018-05230.  

\bibliographystyle{plain}
\bibliography{glo-2020}


\appendix


\section{Proof of Lemma~\ref{lem:reversibility}}
\label{sec:proof:lem:reversibility}

To show (i), write, using the definition \eqref{eq:def:backward:kernel}, the right-hand side as 
\begin{align*}
\lefteqn{\iint \post{n} \uk{n}(\rmd x_{n + 1}) \, \bkw{n}(x_{n + 1}, \rmd x_n) \, h(x_n, x_{n + 1})} \hspace{10mm} \\
&= \iint \post{n}[\ud{n}(\cdot, x_{n + 1})] \, \mu_{n + 1}(\rmd x_{n + 1}) \frac{\post{n}(\rmd x_n) [\ud{n}(\cdot, x_{n + 1})]}{\post{n}[\ud{n}(\cdot, x_{n + 1})]} \, h(x_n, x_{n + 1}) \\
&= \iint \post{n}(\rmd x_n) \, \uk{n}(x_n, \rmd x_{n + 1}) \, h(x_n, x_{n + 1}), 
\end{align*}
which was to be established. 

The statement (ii) is shown by induction. Thus, assume that the claim holds true for $n$, pick arbitrarily $h \in \bmf{\Xfd^{n + 1}}$, and write, using the induction hypothesis, 
\begin{align*}
\post{0:n} \uk{0, n} h &= \iint \post{n}(\rmd x_n) \, \tstat{n}(x_n, \rmd x_{0:n - 1}) \, \uk{n}(x_n, \rmd x_{n + 1}) \, h(x_{0:n + 1}) \\
&= \iint \post{n}(\rmd x_n) \, \uk{n}(x_n, \rmd x_{n + 1}) \, \bar{h}(x_n, x_{n + 1}), 
\end{align*}
where 
$$
\bar{h}(x_n, x_{n + 1}) \eqdef \int \tstat{n}(x_n, \rmd x_{0:n - 1}) h(x_{0:n + 1}). 
$$
Now, since $\bkw{n} \bar{h}(x_{n + 1}) = \tstat{n + 1} h(x_{n + 1})$, (i) yields
\begin{align*} 
\post{0:n} \uk{0, n} h 
&= \iint \post{n} \uk{n}(\rmd x_{n + 1}) \, \bkw{n}(x_{n + 1}, \rmd x_n) \, \bar{h}(x_n, x_{n + 1}) \\
&= \int \post{n} \uk{n}(\rmd x_{n + 1}) \, \tstat{n + 1} h(x_{n + 1}). 
\end{align*}
Thus, we obtain, by applying \eqref{eq:recursion:FK:path} and \eqref{eq:recursion:FK:marg},  
$$
\post{0:n + 1} h = \frac{\post{0:n} \uk{0, n} h}{\post{0:n} \uk{0, n} \1_{\Xset^n}} 
= \int \frac{\post{n} \uk{n}(\rmd x_{n + 1})}{\post{n} \uk{n} \1_{\Xset_{n + 1}}} \tstat{n + 1} h(x_{n + 1}) = \post{n + 1} \tstat{n + 1} h. 
$$
Finally, we note that the base case $n = 1$ follows straightforwardly by combining (i) and \eqref{eq:recursion:FK:marg}.



\section{Proof of Theorem~\ref{cor:hoeffding:tau:marginal}}
\label{sec:proof:prop:hoeffding:tau:marginal}

As explained in Section~\ref{sec:convergence:pm:PaRIS}, 
Theorem~\ref{cor:hoeffding:tau:marginal} follows as an immediate corollary of the following more general result, to whose proof we devote this section. 

\begin{proposition}
\label{prop:hoeffding:tau:marginal}
Assume \hypref[assum:biased:estimate]{assum:bound:filter:pseudomarginal}. Then for all $n \in \nset$, $h_n \in \bmaf{\Xfd^n}$, $(f_n, \ftd{n}) \in \bmf{\Xfd_n}^2$, and $\K \in \nsetpos$ there exist $(c_n, d_n) \in \rsetpos^2$ such that for all $\N \in \nsetpos$ and $\epsilon > 0$,

\begin{align*}
&\mbox{(i)} \quad \pP\left(\left| \frac{1}{\N} \sum_{i = 1}^\N \ewght{n}{i} \{\tstat[i]{n} f_n(\epart{n}{i}) + \ftd{n}(\epart{n}{i})\} - \frac{\postmod{n - 1} \ukmod{n - 1} ( \tstatmod{n} h_n f_n + \tilde{f}_n )}{\postmod{n - 1} \adjfuncforward{n - 1}} \right| \geq \epsilon \right) \\ 
&\hspace{90mm} \leq c_n \exp \left( - d_n \N \epsilon^2 \right), \\
&\mbox{(ii)} \quad \pP\left(\left| \sum_{i = 1}^\N \frac{\ewght{n}{i}}{\sumwght{n}} \{\tstat[i]{n} f_n(\epart{n}{i}) + \tilde{f}_n(\epart{n}{i})\} - \postmod{n} ( \tstatmod{n} h_n f_n + \tilde{f}_n ) \right| \geq \epsilon \right)\\ 
&\hspace{90mm} \leq c_n \exp \left( - d_n \N \epsilon^2 \right).
\end{align*}
\end{proposition}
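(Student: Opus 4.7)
The plan is to establish both parts by a joint strong induction on $n$. At each level, (i) implies (ii) by a standard self-normalisation argument: applying (i) with $f_n \equiv 0$ and $\tilde{f}_n \equiv 1$ controls $\sumwght{n}/\N$ exponentially around a strictly positive limit, and the identity $\sum_i (\ewght{n}{i}/\sumwght{n}) \psi - L = (\sumwght{n}/\N)^{-1} \{\N^{-1}\sum_i \ewght{n}{i}\psi - (\sumwght{n}/\N)L\}$, together with a union bound handling the rare event that $\sumwght{n}/\N$ drops below half its limit, transfers (i) to (ii) with adjusted constants. The base case $n = 0$ is immediate since $(\epart{0}{i})_{i = 1}^\N \sim \init$ are i.i.d.\ with bounded weights $\initwgtfunc(\epart{0}{i})$ and $\tstat[i]{0} \equiv 0$, so both assertions reduce to Hoeffding's inequality for i.i.d.\ bounded summands.

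For the inductive step, let $\mathcal{F}_n \eqdef \sigma((\epart{m}{i}, \tstat[i]{m}, \ewght{m}{i})_{i = 1}^\N : m \leq n)$ and let $\mathcal{G}_{n + 1}$ augment $\mathcal{F}_n$ with the forward-sampled tuples $(\ind{n + 1}{i}, \epart{n + 1}{i}, \zpart{n + 1}{i}, \ewght{n + 1}{i})_{i = 1}^\N$. The pivotal identity for the pseudo-marginal backward step is that integrating out the auxiliary variables $\zpart{n + 1}{(i, j)}$ in $\trmext{n}(i, \cdot)$ shows that, conditionally on $\mathcal{G}_{n + 1}$, the indices $(\bi{n + 1}{i}{j})_{j = 1}^{\K}$ are i.i.d.\ draws from $\tilde{\trmletter}_n(i, j) \propto \ewght{n}{j} \udmod{n}(\epart{n}{j}, \epart{n + 1}{i})$, whence $\pE[\tstat[i]{n + 1} \mid \mathcal{G}_{n + 1}] = \sum_j \tilde{\trmletter}_n(i, j)(\tstat[j]{n} + \addf{n}(\epart{n}{j}, \epart{n + 1}{i}))$.

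I then decompose the centred estimator into three pieces: a backward-sampling fluctuation $A_\N$ collecting the $\ewght{n + 1}{i} f_{n + 1}(\epart{n + 1}{i})/\N$-weighted differences $\tstat[i]{n + 1} - \pE[\tstat[i]{n + 1} \mid \mathcal{G}_{n + 1}]$; a forward-sampling fluctuation $B_\N$ equal to the cleaned estimator minus its conditional expectation given $\mathcal{F}_n$; and an $\mathcal{F}_n$-measurable remainder $C_\N$. A conditional Hoeffding bound given $\mathcal{G}_{n + 1}$ handles $A_\N$, since the $\N \K$ centred summands are independent and uniformly bounded---using that $\sup_i |\tstat[i]{n}| \leq \sum_{k = 0}^{n - 1}\|\addf{k}\|_\infty$, which follows by induction on the backward-sampling recursion. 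Likewise, a conditional Hoeffding bound given $\mathcal{F}_n$ handles $B_\N$, exploiting that $(\ind{n + 1}{i}, \epart{n + 1}{i}, \zpart{n + 1}{i})$ are $\N$ i.i.d.\ draws with uniformly bounded summands by \hypref{assum:bound:filter:pseudomarginal}.

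Treating $C_\N$ is the main obstacle and where the induction hypothesis enters. Computing the conditional expectation of the pseudo-marginal forward step given $\mathcal{F}_n$ and invoking the reversibility identity Lemma~\ref{lem:reversibility}(i) for the skew dynamics $(\ukmod{m}, \bkmod{m})$ to swap $\ukmod{n}$ against $\bkmod{n}$ recasts $C_\N$ as a bounded normalising ratio multiplied by $\sum_i (\ewght{n}{i}/\sumwght{n})\{\tstat[i]{n} \psi_n(\epart{n}{i}) + \tilde{\psi}_n(\epart{n}{i})\}$, where $(\psi_n, \tilde{\psi}_n) \in \bmf{\Xfd_n}^2$ are constructed explicitly from $f_{n + 1}$, $\tilde{f}_{n + 1}$, $\addf{n}$, $\ukmod{n}$, and $\bkmod{n}$. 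The induction hypothesis (ii) at level $n$ with these test functions, together with the bounds on $A_\N$ and $B_\N$ and a union bound, closes (i) at level $n + 1$ with adjusted constants $(c_{n + 1}, d_{n + 1})$. The principal difficulty is that the auxiliary-particle-filter proposals $(\prop{n}, \adjfuncforward{n})$ break the clean cancellation available in the bootstrap-filter setting of \cite{olsson:westerborn:2014b} and the two layers of pseudo-marginalisation stack additional conditional laws, so both the algebraic identification of $(\psi_n, \tilde{\psi}_n)$ and the verification of their boundedness under \hypref{assum:bound:filter:pseudomarginal} demand careful bookkeeping.
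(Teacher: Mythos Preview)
Your proposal is correct and follows essentially the same route as the paper. The only differences are cosmetic: you split the fluctuation into a backward-sampling part $A_\N$ (given a $\sigma$-field $\mathcal{G}_{n+1}$ that also carries $(\zpart{n+1}{i},\ewght{n+1}{i})$) and a forward-sampling part $B_\N$, whereas the paper lumps these into a single term and applies conditional Hoeffding directly to the deviation from $\pE[\,\cdot\mid\calF{n}{\N}]$ computed via Lemma~\ref{lem:expectation:incremental:marginal}. Two minor clean-ups: the normalising factor $(\post[\N]{n}\adjfuncforward{n})^{-1}$ in $C_\N$ is not literally bounded, so the passage to a Hoeffding bound for $C_\N$ requires the ratio device you already invoke for (ii) (this is what the paper cites as \cite[Lemma~4]{douc:garivier:moulines:olsson:2010}); and the test functions you feed to the induction hypothesis are $\psi_n=\ukmod{n}f_{n+1}$, $\tilde\psi_n=\ukmod{n}(\addf{n}f_{n+1}+\ftd{n+1})$, with the reversibility identity (equivalently Lemma~\ref{eq:critical:identity}) used only to match the resulting limit with the level-$(n{+}1)$ target, not to construct $\psi_n,\tilde\psi_n$ themselves.
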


Note that letting $f_n \equiv 1$ and $\ftd{n} \equiv 0$ in (ii) yields immediately Theorem~\ref{cor:hoeffding:tau:marginal}. 

We preface the proof of Proposition~\ref{prop:hoeffding:tau:marginal} by a technical lemma, generalising \cite[Lemma~12]{olsson:westerborn:2014b} and \cite[Lemma~2]{gloaguen2018online}, which will be instrumental in the following. For all $n \in \nset$, define the following $\sigma$-fields:
\begin{align*}
\calF{n}{\N} &\eqdef \sigma \{ (\epart{m}{i}, \ewght{m}{i}, \tstat[i]{m}) : i \in \intvect{1}{\N}, m \in \intvect{0}{n} \}, \\
\calG{n + 1}{\N} &\eqdef \sigma \{(\epart{n + 1}{i}, \ind{n + 1}{i}) : i \in \intvect{1}{\N} \} \vee \calF{n}{\N}. 
\end{align*}

\begin{lemma}
\label{lem:expectation:incremental:marginal}
Assume \hypref[assum:biased:estimate]{assum:bound:filter:pseudomarginal}. For all $n \in \nset$, $(f_{n + 1}, \ftd{n + 1}) \in \bmf{\Xfd_{n + 1}}^2$, and $(\N, \K) \in \nsetpos^2$, the random variables $\ewght{n + 1}{i}\{\tstat[i]{n + 1} f_{n + 1}(\epart{n + 1}{i}) + \ftd{n + 1}(\epart{n + 1}{i})\}$, $i \in \intvect{1}{\N}$, are independent and identically distributed (i.i.d.) conditionally to $\calF{n}{\N}$. In addition, for all $i$,  
\begin{multline*}
\pE \left[\ewght{n + 1}{i}\{\tstat[i]{n + 1}f_{n + 1}(\epart{n + 1}{i}) + \ftd{n + 1}(\epart{n + 1}{i})\} \cond  \calF{n}{\N} \right] \\
 = (\post[\N]{n} \adjfuncforward{n})^{-1} 
 \sum_{j = 1}^\N \frac{\ewght{n}{j}}{\sumwght{n}}
\{ \tstat[j]{n} \ukmod{n} f_{n + 1}(\epart{n}{j}) + \ukmod{n}(\addf{n} f_{n + 1} + \ftd{n + 1})(\epart{n}{j})\}.
\end{multline*}
\end{lemma}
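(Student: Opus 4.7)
The plan is to first establish conditional independence across $i$ and then compute the conditional expectation by a two-stage iterated conditioning argument. For the conditional i.i.d. claim, I would observe that Algorithm~\ref{alg:pm:PaRIS} executes, for each $i \in \intvect{1}{\N}$, the \emph{same} elementary sampling mechanism: drawing $\ind{n+1}{i}$ from $\cat(\{\adjfuncforward{n}(\epart{n}{\ell}) \ewght{n}{\ell}\}_{\ell = 1}^\N)$, then $\epart{n+1}{i}$ from $\prop{n}(\epart{n}{\ind{n+1}{i}}, \cdot)$, then $\zpart{n+1}{i}$ from $\ukdist{n}(\epart{n}{\ind{n+1}{i}}, \epart{n+1}{i}, \cdot)$, and finally the backward draws $(\bi{n+1}{i}{j}, \zpart{n+1}{(i,j)})_{j = 1}^{\K}$ from $\trmext{n}(i, \cdot)$, all with independent random seeds and distributions that, once $\calF{n}{\N}$ is fixed, depend on $i$ only through the individual outputs. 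Hence the vectors $(\ind{n+1}{i}, \epart{n+1}{i}, \zpart{n+1}{i}, (\bi{n+1}{i}{j})_{j = 1}^{\K})$ are i.i.d. conditionally on $\calF{n}{\N}$, and so is any measurable functional thereof.

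For the conditional expectation, I would apply the tower property, conditioning first on $\calG{n+1}{\N}$ and then on $\calF{n}{\N}$. Given $\calG{n+1}{\N}$, the weight $\ewght{n+1}{i}$ is a function of $\zpart{n+1}{i}$ alone while $\tstat[i]{n+1}$ is a function of the backward samples, so they are conditionally independent. Using \eqref{eq:def:udmod} and Algorithm~\ref{alg:pm:SMC},
$$
\pE[\ewght{n+1}{i} \mid \calG{n+1}{\N}] = \frac{\udmod{n}(\epart{n}{\ind{n+1}{i}}, \epart{n+1}{i})}{\adjfuncforward{n}(\epart{n}{\ind{n+1}{i}}) \propdens{n}(\epart{n}{\ind{n+1}{i}}, \epart{n+1}{i})},
$$
and since the $j$-marginal of $\trmext{n}(i, \cdot)$ is the probability mass function $j' \mapsto \ewght{n}{j'} \udmod{n}(\epart{n}{j'}, \epart{n+1}{i}) / \sum_{\ell'} \ewght{n}{\ell'} \udmod{n}(\epart{n}{\ell'}, \epart{n+1}{i})$, averaging \eqref{eq:PaRIS:update} gives
$$
\pE[\tstat[i]{n+1} \mid \calG{n+1}{\N}] = \frac{\sum_{j'} \ewght{n}{j'} \udmod{n}(\epart{n}{j'}, \epart{n+1}{i}) (\tstat[j']{n} + \addf{n}(\epart{n}{j'}, \epart{n+1}{i}))}{\sum_{\ell'} \ewght{n}{\ell'} \udmod{n}(\epart{n}{\ell'}, \epart{n+1}{i})}.
$$

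To take the expectation over $\calG{n+1}{\N}$ given $\calF{n}{\N}$, I would sum over $\ind{n+1}{i} = \ell$ weighted by $\adjfuncforward{n}(\epart{n}{\ell}) \ewght{n}{\ell} / (\sumwght{n} \post[\N]{n} \adjfuncforward{n})$ and integrate in $\epart{n+1}{i}$ against $\propdens{n}(\epart{n}{\ell}, x) \mu_{n+1}(\rmd x)$. After cancellation of $\adjfuncforward{n}(\epart{n}{\ell}) \propdens{n}(\epart{n}{\ell}, x)$ against the denominator of the weight expectation, the factor $\sum_\ell \ewght{n}{\ell} \udmod{n}(\epart{n}{\ell}, x)$ that remains from the index sum cancels exactly with the normalising denominator in the backward-sample expectation. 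The $f_{n+1}$-term thereby collapses to $\sum_{j'} \ewght{n}{j'} \int \ukmod{n}(\epart{n}{j'}, \rmd x) f_{n+1}(x) (\tstat[j']{n} + \addf{n}(\epart{n}{j'}, x))$, and the $\ftd{n+1}$-term reduces in a similar (but simpler) way to $\sum_{j'} \ewght{n}{j'} \ukmod{n} \ftd{n+1}(\epart{n}{j'})$. Reassembling these contributions and factoring out $(\post[\N]{n} \adjfuncforward{n})^{-1} \sumwght{n}^{-1}$ yields the stated identity. The one step that requires care is the clean bookkeeping of this double pseudo-marginalisation—in particular, invoking the conditional independence of $\zpart{n+1}{i}$ and the backward auxiliaries given $\calG{n+1}{\N}$ so that the two expectations decouple, and then tracking the telescoping cancellation of the index sum against the backward normalising constant.
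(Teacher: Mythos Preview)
Your proposal is correct and follows essentially the same approach as the paper: condition first on $\calG{n+1}{\N}$, use the conditional independence of $\ewght{n+1}{i}$ and $\tstat[i]{n+1}$ to factor the expectation, compute each piece via \eqref{eq:def:udmod} and the $j$-marginal of $\trmext{n}(i,\cdot)$, and then average over the forward-sampling distribution, whereupon the index sum $\sum_\ell \ewght{n}{\ell}\udmod{n}(\epart{n}{\ell},x)$ cancels the backward normalising constant. The paper's proof is organised identically, with the only cosmetic difference that it writes $\pE[\ewght{n+1}{i}\mid\calG{n+1}{\N}]$ as $\wgtfuncmod{n}(\epart{n}{\ind{n+1}{i}},\epart{n+1}{i})$ rather than spelling out the ratio.
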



\begin{proof}
Independence and equal distribution is immediate by construction of Algorithm~\ref{alg:pm:PaRIS}. Moreover, note that
\begin{equation} \label{eq:expected:weight:given:G}
\pE \left[\ewght{n + 1}{1} \cond \calG{n + 1}{\N} \right] = \int \wgtfunc{n}(\epart{n}{\ind{n + 1}{1}}, \epart{n + 1}{1}, z) \, \ukdist{n}(\epart{n}{\ind{n + 1}{1}}, \epart{n + 1}{1}, \rmd z) = \wgtfuncmod{n}(\epart{n}{\ind{n + 1}{1}}, \epart{n + 1}{1}), 
\end{equation}
where $\wgtfunc{n}$ and $\wgtfuncmod{n}$ are defined in \hypref{assum:bound:filter:pseudomarginal}(ii), and 
$$
\pE \left[\tstat[1]{n + 1} \cond \calG{n + 1}{\N} \right] = \sum_{i = 1}^\N \frac{\ewght{n}{i} \udmod{n} (\epart{n}{i}, \epart{n + 1}{1})}{\sum_{i' = 1}^{\N} \ewght{n}{i'} \udmod{n}(\epart{n}{i'}, \epart{n + 1}{1})} \left(\tstat[i]{n} + \addf{n}(\epart{n}{i}, \epart{n + 1}{1}) \right). 
$$
Then, since $\tstat[1]{n + 1}$ and  $\ewghthat{n + 1}{1}$ are conditionally independent given $\calG{n + 1}{\N}$,
\begin{align*}
\lefteqn{\pE \left[\ewghthat{n + 1}{1} \{ \tstat[1]{n + 1} f_{n + 1}(\epart{n + 1}{1}) + \ftd{n + 1}(\epart{n + 1}{1}) \} \cond \calF{n}{\N}\right]} 
\nonumber \\
&= \pE \left[\pE \left[ \ewghthat{n + 1}{1} \cond \calG{n + 1}{\N} \right] \pE \left[\tstat[1]{n + 1} \cond \calG{n + 1}{N}\right] f_{n + 1}(\epart{n + 1}{1}) \cond \calF{n}{\N} \right] \nonumber \\
&\hspace{4cm} + \pE \left[ \pE \left[ \ewghthat{n + 1}{1} \cond \calG{n + 1}{\N} \right] \ftd{n + 1}(\epart{n + 1}{1}) \cond \calF{n}{N} \right] \nonumber \\
&= \pE \left[ \wgtfuncmod{n}(\epart{n}{\ind{n + 1}{1}}, \epart{n + 1}{1}) \sum_{i = 1}^\N \frac{\ewght{n}{i} \udmod{n} (\epart{n}{i}, \epart{n + 1}{1})}{\sum_{i' = 1}^{\N} \ewght{n}{i'} \udmod{n} (\epart{n}{i'}, \epart{n + 1}{1})} \left(\tstat[i]{n} + \addf{n}(\epart{n}{i}, \epart{n + 1}{1}) \right) f_{n + 1}(\epart{n + 1}{1}) \cond \calF{n}{\N} \right] \nonumber \\
&\hspace{4cm} + \pE \left[\wgtfuncmod{n}(\epart{n}{\ind{n + 1}{1}}, \epart{n + 1}{1}) \ftd{n + 1}(\epart{n + 1}{1}) \cond \calF{n}{\N} \right]. \label{eq:exp:decomposition}
\end{align*}
Now, we may complete the proof by noting that, by \eqref{eq:cond:instrumental:mixture},  
\begin{align*} 
\lefteqn{\pE \left[ \wgtfuncmod{n}(\epart{n}{\ind{n + 1}{1}}, \epart{n + 1}{1}) \sum_{i = 1}^\N \frac{\ewghthat{n}{i} \udmod{n} (\epart{n}{i}, \epart{n + 1}{1})}{\sum_{i' = 1}^{\N} \ewghthat{n}{i'} \udmod{n} (\epart{n}{i'}, \epart{n + 1}{1})} \left(\tstat[i]{n} + \addf{n}(\epart{n}{i}, \epart{n + 1}{1}) \right) f_{n + 1}(\epart{n + 1}{1}) \cond \calF{n}{\N} \right]} \\
&= (\post[\N]{n} \adjfuncforward{n})^{-1} \sum_{j = 1}^\N \frac{\ewght{n}{j}}{\sumwght{n}} \int \udmod{n}(\epart{n}{j}, x) \mu(\rmd x) \sum_{i = 1}^\N \frac{\ewghthat{n}{i} \udmod{n} (\epart{n}{i}, x)}{\sum_{i' = 1}^\N \ewghthat{n}{i'} \udmod{n} (\epart{n}{i'}, x)} \left(\tstat[i]{n} + \addf{n}(\epart{n}{i}, x) \right) f_{n + 1}(x) \\
&= (\post[\N]{n} \adjfuncforward{n})^{-1} \sum_{i = 1}^\N \frac{\ewght{n}{i}}{\sumwght{n}} \{ \tstat[i]{n} \ukmod{n} f_{n + 1}(\epart{n}{i}) + \ukmod{n}(\addf{n} f_{n + 1})(\epart{n}{i}) \}
\end{align*}
and, similarly, 
\begin{equation*}
 \pE \left[\wgtfuncmod{n}(\epart{n}{\ind{n + 1}{1}}, \epart{n + 1}{1}) \ftd{n + 1}(\epart{n + 1}{1}) \cond \calF{n}{\N} \right] = (\post[\N]{n} \adjfuncforward{n})^{-1} \sum_{i = 1}^\N \frac{\ewght{n}{i}}{\sumwght{n}} \ukmod{n} \ftd{n + 1}(\epart{n}{i}). 
\end{equation*}
\end{proof}

\begin{proof}[Proof of Proposition~\ref{prop:hoeffding:tau:marginal}]
We establish (i) by induction over $n$. Write, using Lemma~\ref{lem:expectation:incremental:marginal},
\begin{multline} \label{eq:hoeffding:decomp}
\frac{1}{\N} \sum_{i = 1}^\N \ewght{n + 1}{i} \{\tstat[i]{n + 1} f_{n + 1}(\epart{n + 1}{i}) + \ftd{n + 1}(\epart{n + 1}{i})\} - \frac{\postmod{n} \ukmod{n} \{ \tstatmod{n + 1} h_{n + 1} f_{n + 1} + \ftd{n + 1} \}}{\postmod{n} \adjfuncforward{n}} \\
= \frac{1}{\N} \sum_{i = 1}^\N \ewght{n + 1}{i} \{\tstat[i]{n + 1} f_{n + 1}(\epart{n + 1}{i}) + \ftd{n + 1}(\epart{n + 1}{i})\} \hspace{40mm} \\
- \pE \left[ \ewght{n + 1}{1} \{\tstat[1]{n + 1} f_{n + 1}(\epart{n + 1}{1}) + \ftd{n + 1}(\epart{n + 1}{1})\} \cond \calF{n}{\N} \right] \\
+ \sum_{j = 1}^\N 
\frac{\ewght{n}{j} \adjfuncforward{n}(\epart{n}{j})}{\sum_{j' = 1}^\N \ewght{n}{j'} \adjfuncforward{n}(\epart{n}{j'})} \{ \tstat[j]{n} \prop{n} (\wgtfuncmod{n} f_{n + 1})(\epart{n}{j}) + \prop{n}(\wgtfuncmod{n} \addf{n} f_{n + 1} + \wgtfuncmod{n} \ftd{n + 1})(\epart{n}{j})\} \\
- \frac{\postmod{n - 1} \ukmod{n - 1}\{\tstatmod{n} \ukmod{n} f_{n + 1} + \ukmod{n} (\addf{n} f_{n + 1} + \ftd{n + 1})\}}{\postmod{n - 1} \ukmod{n - 1} \adjfuncforward{n}}. 
\end{multline}
Since for all $i$, 
$$
|\ewght{n + 1}{i} \{\tstat[i]{n + 1} f_{n + 1}(\epart{n + 1}{i}) + \ftd{n + 1}(\epart{n + 1}{i})\}| \leq \| \wgtfunc{n} \|_\infty (\| h_{n + 1} \|_\infty \| f_{n + 1} \|_\infty + \| \ftd{n + 1} \|_\infty), 
$$
the Hoeffding inequality for conditional expectations provides constants $(d, \tilde{d}) \in \rsetpos^2$ such that for all $\epsilon > 0$, 
\begin{multline*}
\pP \left( \left| \frac{1}{\N} \sum_{i = 1}^\N \ewght{n + 1}{i} \{\tstat[i]{n + 1} f_{n + 1}(\epart{n + 1}{i}) + \ftd{n + 1}(\epart{n + 1}{i})\} \right. \right. \\
\left. \left.  - \pE \left[ \ewght{n + 1}{1} \{\tstat[1]{n + 1} f_{n + 1}(\epart{n + 1}{1}) + \ftd{n + 1}(\epart{n + 1}{1})\} \cond \calF{n}{\N} \right] \vphantom{\sum_{i = 1}^\N} \right| \geq \epsilon \right) \leq d \exp \left( - \tilde{d} N \epsilon^2 \right).  
\end{multline*}
In addition, since $\adjfuncforward{n}$, $\adjfuncforward{n} \prop{n} (\wgtfuncmod{n} f_{n + 1})$, and $\adjfuncforward{n} \prop{n}(\wgtfuncmod{n} \addf{n} f_{n + 1} + \wgtfuncmod{n} \ftd{n + 1})$ all belong to $\bmf{\Xfd_n}$, there exist, by the induction hypothesis and \cite[Lemma~4]{douc:garivier:moulines:olsson:2010}, $(d', \tilde{d}') \in \rsetpos^2$ such that for all $\epsilon > 0$, 
\begin{multline*}
\pP \left( \left|  \sum_{j = 1}^\N 
\frac{\ewght{n}{j} \adjfuncforward{n}(\epart{n}{j})}{\sum_{j' = 1}^\N \ewght{n}{j'} \adjfuncforward{n}(\epart{n}{j'})} \{ \tstat[j]{n} \prop{n} (\wgtfuncmod{n} f_{n + 1})(\epart{n}{j}) + \prop{n}(\wgtfuncmod{n} \addf{n} f_{n + 1} + \wgtfuncmod{n} \ftd{n + 1})(\epart{n}{j})\} \right. \right. \\
\left. \left. - \frac{\postmod{n - 1} \ukmod{n - 1}\{\tstatmod{n} \ukmod{n} f_{n + 1} + \ukmod{n} (\addf{n} f_{n + 1} + \ftd{n + 1})\}}{\postmod{n - 1} \ukmod{n - 1} \adjfuncforward{n}} \right| \geq \epsilon \right) \leq d' \exp \left( - \tilde{d}' N \epsilon^2 \right).  
\end{multline*}
Combining the previous two inequalities completes the proof of (i) at time step $n + 1$. The base case $n = 1$ is established using again the decomposition \eqref{eq:hoeffding:decomp}, the standard Hoeffding inequality for independent and identically distributed variables, \cite[Lemma~4]{douc:garivier:moulines:olsson:2010}, and the fact that $h_0 \equiv 0$.   

Final, (ii) follows immediately from (i) and \cite[Lemma~4]{douc:garivier:moulines:olsson:2010}.
\end{proof} 


\section{Proof of Theorem~\ref{cor:clt:pseudo:marginal:paris}}
\label{sec:proof:prop:clt:pseudo:marginal:paris}

We now turn our focus to the proof of Theorem~\ref{cor:clt:pseudo:marginal:paris}, which, just like Theorem~\ref{cor:hoeffding:tau:marginal}, will be established via the following more general result. 

\begin{theorem}
\label{prop:clt:pseudo:marginal:paris}
Assume~\hypref[assum:biased:estimate]{assum:bound:filter:pseudomarginal}. Then for all $n \in \nset$, $\K \in \nsetpos$, and $(f_n, \ftd{n}) \in \bmf{\Xfd_n}^2$,
$$
 \sqrt{\N} \left( \sum_{i = 1}^\N \frac{\ewght{n}{i}}{\sumwght{n}} \{\tstat[i]{n} f_n(\epart{n}{i}) + \ftd{n}(\epart{n}{i}) \} - \postmod{n}(\tstatmod{n} h_n f_n + \ftd{n}) \right) 
  \dlim \sigma_n(h_n, f_n, \ftd{n}) Z,
$$
where $Z$ is standard normally distributed and 
\begin{multline} \label{eq:non-recursive:as:var:affine}
\sigma^2_n (h_n, f_n, \ftd{n}) \eqdef \frac{\chi \{ \initwgtfunc
\retrokmodmodnorm_{0, n} (h_n f_n + \ftd{n}) \}^2}{(\chi \ukmod{0, n - 1} \1_{\Xset^n})^2} \\
+ \sigma^2_n \langle (\wgtfuncmod{\ell})_{\ell = 0}^{n - 1} \rangle (h_n, f_n, \ftd{n}) 
+ \sigma^2_n \langle (\ukestvar{\ell})_{\ell = 0}^{n - 1} \rangle (h_n, f_n, \ftd{n}) 
\end{multline}
and
\begin{multline} \label{eq:partial:variance}
\sigma^2_n \langle (\varphi_\ell)_{\ell = 0}^{n - 1} \rangle (h_n, f_n, \ftd{n}) 
\eqdef \sum_{m = 0}^{n - 1} \frac{\postmod{m} \adjfuncforward{m} \postmod{m} \ukmod{m} \{ \varphi_m [\retrokmodmodnorm_{m + 1, n} (h_n f_n + \ftd{n})]^2
\}}{(\postmod{m} \ukmod{m, n - 1} \1_{\Xset_n})^2} \\
+ \sum_{m = 0}^{n - 1} \sum_{\ell = 0}^m \frac{\postmod{m} \adjfuncforward{m} \postmod{\ell} \ukmod{\ell} \{\bkmod{\ell}(\tstatmod{\ell} h_{\ell} + \addf{\ell} - \tstatmod{\ell + 1} h_{\ell +1})^2 \ukmod{\ell + 1, m}( \bkmod{m} \varphi_m [\ukmod{m + 1, n - 1} f_n]^2
)\}}{\K^{m - \ell + 1} (\postmod{\ell} \ukmod{\ell, m - 1} \1_{\Xset_m})(\postmod{m} \ukmod{m, n - 1} \1_{\Xset_n})^2}
\end{multline}
for any sequence $(\varphi_\ell)_{\ell \in \nset}$ of measurable functions $\varphi_n : \Xset_n \times \Xset_{n + 1} \to \rsetnn$. 
\end{theorem}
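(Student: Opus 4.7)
The plan is to proceed by induction on $n$, combining, at each inductive step, a martingale-type central limit theorem (CLT) for triangular arrays applied conditionally on $\calF{n}{\N}$ with the induction hypothesis at time $n$. The structure of the argument parallels that of \cite{olsson:westerborn:2017}, but requires non-trivial adaptations to accommodate (i) general Feynman--Kac models instead of fully dominated HMMs, (ii) auxiliary particle filters instead of the bootstrap filter, and (iii) the two levels of pseudo-marginalisation present in Algorithm~\ref{alg:pm:PaRIS}. Base case $n = 0$: since $\tstat[i]{0} \equiv 0$ and the $(\epart{0}{i})_{i = 1}^\N$ are i.i.d.\ draws from $\init$ with $\ewght{0}{i} = \initwgtfunc(\epart{0}{i})$, the standard i.i.d.\ CLT combined with Slutsky's lemma yields the asymptotic normality of the self-normalised estimator with the variance matching the first summand in \eqref{eq:non-recursive:as:var:affine} (the other two being empty sums at $n=0$).

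For the inductive step, define $U_i \eqdef \ewght{n + 1}{i}\{\tstat[i]{n + 1} f_{n + 1}(\epart{n + 1}{i}) + \ftd{n + 1}(\epart{n + 1}{i})\}$ and decompose
\begin{align*}
\sqrt{\N}\Bigl(\tfrac{1}{\N}\sum_{i = 1}^\N U_i - \lambda_{n + 1}\Bigr)
&= \underbrace{\sqrt{\N}\Bigl(\tfrac{1}{\N}\sum_{i = 1}^\N \{U_i - \pE[U_i \mid \calF{n}{\N}]\}\Bigr)}_{A_\N} \\
&\quad + \underbrace{\sqrt{\N}\Bigl(\tfrac{1}{\N}\sum_{i = 1}^\N \pE[U_i \mid \calF{n}{\N}] - \lambda_{n + 1}\Bigr)}_{B_\N},
\end{align*}
where $\lambda_{n + 1} \eqdef \postmod{n} \ukmod{n}(\tstatmod{n + 1} h_{n + 1} f_{n + 1} + \addf{n} f_{n + 1} + \ftd{n + 1})/(\postmod{n}\adjfuncforward{n})$, the conditional expectation of $U_i$ being given by Lemma~\ref{lem:expectation:incremental:marginal}. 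The term $B_\N$ will be handled by the induction hypothesis applied to an appropriately constructed pair $(f_n, \ftd{n})$ at time $n$ (specifically, $\adjfuncforward{n}\prop{n}(\wgtfuncmod{n}f_{n + 1})$ and $\adjfuncforward{n}\prop{n}\{\wgtfuncmod{n}(\addf{n}f_{n + 1} + \ftd{n + 1})\}$), after factoring out the scalar $(\post[\N]{n}\adjfuncforward{n})^{-1}$ via Slutsky's lemma together with the consistency guaranteed by Proposition~\ref{prop:hoeffding:tau:marginal}. The term $A_\N$ is a normalised sum of i.i.d.\ (given $\calF{n}{\N}$) and bounded random variables under \hypref{assum:bound:filter:pseudomarginal}, so the conditional Lindeberg condition is automatic, and the conditional CLT for triangular arrays reduces the task to identifying the limit in probability of $V_\N \eqdef \mathrm{Var}(U_1 \mid \calF{n}{\N})$.

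To compute $V_\N$ I apply the law of total variance, conditioning on the intermediate $\sigma$-field $\calG{n + 1}{\N}$:
\begin{equation*}
V_\N = \pE[\mathrm{Var}(U_1 \mid \calG{n + 1}{\N}) \mid \calF{n}{\N}] + \mathrm{Var}(\pE[U_1 \mid \calG{n + 1}{\N}] \mid \calF{n}{\N}).
\end{equation*}
The outer variance contributes the $\wgtfuncmod{\ell}$-type summand at level $m = n$ of \eqref{eq:partial:variance} (through the mismatch between $\prop{n}$ and the backward-decomposed target), following the usual auxiliary-particle-filter variance calculation; this is where the first term of \eqref{eq:partial:variance} (with $m = n$) emerges. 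For the inner piece, I further condition $\ewght{n + 1}{1}$ and $\tstat[1]{n + 1}$ on $\calG{n + 1}{\N}$; by construction, these are conditionally independent, so their contributions add. The conditional variance of the pseudo-marginal weight equals $\wgtfuncmod{n}(\epart{n}{\ind{n + 1}{1}}, \epart{n + 1}{1})^2\ukestvar{n}(\epart{n}{\ind{n + 1}{1}}, \epart{n + 1}{1})$ by \eqref{eq:def:ukestvar} and \eqref{eq:expected:weight:given:G}, producing the $\ukestvar{\ell}$ summand in \eqref{eq:non-recursive:as:var:affine} at level $m = n$. Meanwhile, the conditional variance of $\tstat[1]{n + 1} = \K^{-1}\sum_{j = 1}^\K (\tstat[\bi{n + 1}{1}{j}]{n} + \addf{n}(\cdots))$ is of order $\K^{-1}$ (averaging over i.i.d.\ conditional draws from $\trmext{n}(1, \cdot)$) and injects the factor $\K^{-1}(\bkmod{n}(\tstatmod{n} h_n + \addf{n} - \tstatmod{n + 1} h_{n + 1})^2)$ at level $\ell = m = n$ of the double sum in \eqref{eq:partial:variance}.

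Once the convergence in probability of $V_\N$ is established (using Proposition~\ref{prop:hoeffding:tau:marginal} to handle the required particle averages), the conditional CLT yields $A_\N \dlim \sigma_{n + 1}^A Z$ for some Gaussian limit, while $B_\N \dlim \sigma_{n + 1}^B Z'$ by the induction hypothesis. The conditional independence across $\calF{n}{\N}$ guarantees that the joint limit is bivariate normal with independent coordinates, so the sum $A_\N + B_\N$ is asymptotically normal with variance $(\sigma_{n + 1}^A)^2 + (\sigma_{n + 1}^B)^2$. The final step---and the main obstacle of the proof---is the algebraic verification that $(\sigma_{n + 1}^A)^2 + (\sigma_{n + 1}^B)^2$ agrees exactly with the non-recursive expression \eqref{eq:non-recursive:as:var:affine}--\eqref{eq:partial:variance} evaluated at time $n + 1$. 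The key identities needed are the reversibility relation of Lemma~\ref{lem:reversibility}(i) (which converts forward integration against $\ukmod{n}$ into a backward kernel operation), and the recursive structure \eqref{eq:forward:smoothing} of the statistics $\tstatmod{n + 1} h_{n + 1}$. These allow one to rewrite the fresh $A_\N$-contribution as the $m = n$ slice of \eqref{eq:partial:variance} and to reorganise the recursively propagated $B_\N$-contribution (whose variance already contains the triangular sum up to $m \le n - 1$ by the induction hypothesis) so that its $(m, \ell)$ slices at $m \le n - 1$ survive intact, with the new $1/\K$ factors inserted precisely where the statistic $\tstat[j]{n}$ gets resampled at step $n + 1$. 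A Slutsky argument for the denominator $\sumwght{n + 1}/\N \pplim \postmod{n}\ukmod{n}\1_{\Xset_{n + 1}}/(\postmod{n}\adjfuncforward{n})$ converts the unnormalised CLT into the self-normalised one stated in the theorem.
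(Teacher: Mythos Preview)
Your overall architecture---induction on $n$, splitting into a conditional-expectation piece $B_\N$ (handled by the hypothesis) and a centered piece $A_\N$ (handled by a triangular-array CLT), then combining via a conditional-characteristic-function argument and Slutsky---matches the paper's proof exactly. The base case, the reduction of the self-normalised statement to an unnormalised one, and the final algebraic rearrangement using Lemma~\ref{lem:reversibility}(i) and \eqref{eq:forward:smoothing} are also as in the paper.

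There is, however, one genuine gap in your variance computation for $A_\N$. You write that the convergence in probability of $V_\N$ follows ``using Proposition~\ref{prop:hoeffding:tau:marginal} to handle the required particle averages'', and that the backward-sampling variance of $\tstat[1]{n+1}$ injects only the slice $\ell = m = n$ of the double sum in \eqref{eq:partial:variance}. This is not correct. When you expand $\mathrm{Var}(U_1\mid\calF{n}{\N})$, the term coming from the conditional second moment of $\tstat[1]{n+1}$ contains, after averaging over the forward proposal, quantities of the form
\[
\sum_{j=1}^\N \frac{\ewght{n}{j}}{\sumwght{n}}\,(\tstat[j]{n})^2\, g(\epart{n}{j})
\]
for suitable bounded $g$. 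These are \emph{quadratic} in $\tstat[j]{n}$, and Proposition~\ref{prop:hoeffding:tau:marginal} (which is linear in $\tstat[j]{n}$) does not give their limit. The paper needs a separate result, Lemma~\ref{lem:hoeffding:tau:square}, which shows that such sums converge to $\postmod{n}\bigl((\tstatmod{n} h_n)^2 g\bigr) + \eta_n g$, where $\eta_n$ is an explicit ``accumulated backward-sampling variance'' measure given recursively (and non-recursively by \eqref{eq:def:etak}). It is precisely this $\eta_n$ contribution that, after one more forward step, produces the \emph{entire} column $\ell \in \intvect{0}{n}$ at $m=n$ of the double sum in \eqref{eq:partial:variance}, not just the diagonal entry $\ell = m = n$. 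Without this lemma, the inductive variance identity does not close.

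A second, more technical point you underplay is flagged in the paper as Remark~\ref{rem:non-trivial:extension}: with a general auxiliary proposal $\prop{n}$ (as opposed to the bootstrap filter), the centered increment $\tilde{\upsilon}_\N$ depends on the forward ancestor index $i$, and after integrating out the proposal one is left with a random ratio $\varphi_\N(x) = \sum_i \ewght{n}{i}\,\lwd{n}(\epart{n}{i},x)\big/\sum_{j}\ewght{n}{j}\,\udmod{n}(\epart{n}{j},x)$ inside an integral. Passing to the limit here requires a dominated-convergence-type argument for random integrands (Lemma~\ref{lem:generalized:lebesgue}); your plan does not mention this. Relatedly, the $\ukestvar{n}$ contribution does not arise from a direct ``conditional variance of the weight given $\calG{n+1}{\N}$'' as you describe (incidentally, that variance is $\wgtfuncmod{n}\,\ukestvar{n}$, not $\wgtfuncmod{n}^2\,\ukestvar{n}$), but through the identity $\varphi - \bkmod{n}\wgtfuncmod{n} = \bkmod{n}\ukestvar{n}$, where $\varphi$ is the limit of $\varphi_\N$.
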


As previously, applying Theorem~\ref{prop:clt:pseudo:marginal:paris} with $f_n \equiv 1$ and $\ftd{n} \equiv 0$ yields immediately Theorem~\ref{cor:hoeffding:tau:marginal}. Before proving Theorem~\ref{prop:clt:pseudo:marginal:paris} we establish some preparatory lemmas, where the first is easily obtained by combining Lemma~\ref{lem:reversibility}(i) (applied to the skew modeled formed by $\chi$ and $(\ukmod{n})_{n \in \nset}$) and \eqref{eq:forward:smoothing}. 

\begin{lemma} \label{eq:critical:identity}
Assume \hypref{assum:biased:estimate}. Then for all $n \in \nset$ and $(f_{n + 1}, \ftd{n + 1}) \in \bmf{\Xfd_{n + 1}}^2$, 
$$
\postmod{n} \ukmod{n}(\tstatmod{n + 1} h_{n + 1} f_{n + 1} + \ftd{n + 1}) = \postmod{n}\{\tstatmod{n} \ukmod{n} f_{n + 1} + \ukmod{n} (\addf{n} f_{n + 1} + \ftd{n + 1})\}. 
$$
\end{lemma}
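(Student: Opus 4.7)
The plan is to start from the right-hand side and transport it to the left-hand side in two moves: first invoke the skew version of Lemma~\ref{lem:reversibility}(i) to swap the order of $\ukmod{n}$ and $\bkmod{n}$, then recognise the integrand as exactly the forward-smoothing recursion \eqref{eq:forward:smoothing} written for the skew model. Nothing beyond bookkeeping is required, because both of the ingredients already hold for any bounded, dominated kernel system, and hence in particular for $(\ukmod{n})_{n \in \nset}$ together with the associated marginal flow $(\postmod{n})_{n \in \nset}$ and backward kernels $(\bkmod{n})_{n \in \nset}$ defined in Section~\ref{sec:convergence:pm:PaRIS}.

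Concretely, I would first expand the right-hand side, using $\tstatmod{n} h_n \in \bmf{\Xfd_n}$, as the single double integral
$$
\int \postmod{n}(\rmd x_n) \, \ukmod{n}(x_n, \rmd x_{n+1}) \bigl\{ [\tstatmod{n} h_n(x_n) + \addf{n}(x_n, x_{n+1})] f_{n+1}(x_{n+1}) + \ftd{n+1}(x_{n+1}) \bigr\}.
$$
Next I would apply Lemma~\ref{lem:reversibility}(i) to the skew model, with test function $h(x_n, x_{n+1}) \eqdef [\tstatmod{n} h_n(x_n) + \addf{n}(x_n, x_{n+1})] f_{n+1}(x_{n+1}) + \ftd{n+1}(x_{n+1})$, rewriting the display as
$$
\int \postmod{n} \ukmod{n}(\rmd x_{n+1}) \, \bkmod{n}(x_{n+1}, \rmd x_n) \bigl\{ [\tstatmod{n} h_n(x_n) + \addf{n}(x_n, x_{n+1})] f_{n+1}(x_{n+1}) + \ftd{n+1}(x_{n+1}) \bigr\}.
$$
Since $f_{n+1}(x_{n+1})$ and $\ftd{n+1}(x_{n+1})$ do not depend on $x_n$, the inner integral against $\bkmod{n}(x_{n+1}, \cdot)$ reduces to
$$
f_{n+1}(x_{n+1}) \int [\tstatmod{n} h_n(x_n) + \addf{n}(x_n, x_{n+1})] \, \bkmod{n}(x_{n+1}, \rmd x_n) + \ftd{n+1}(x_{n+1}),
$$
and the first term is precisely $f_{n+1}(x_{n+1}) \tstatmod{n+1} h_{n+1}(x_{n+1})$ by the skew analogue of \eqref{eq:forward:smoothing}. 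Integrating against $\postmod{n} \ukmod{n}$ yields the left-hand side.

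The only point requiring a brief justification is that both \eqref{eq:forward:smoothing} and Lemma~\ref{lem:reversibility}(i) transfer to the skew model. This is immediate: their proofs use only the definitions of the marginal flow and the backward kernel together with the fact that $\ukmod{n}$ is bounded and dominated by $\mu_{n+1}$ (cf.\ \eqref{eq:def:ukmod}), all of which hold verbatim here. I therefore do not anticipate any real obstacle; the lemma is a direct consequence of the two structural identities.
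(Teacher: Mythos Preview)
Your proof is correct and follows exactly the paper's own approach: the paper states that the lemma is obtained by combining Lemma~\ref{lem:reversibility}(i) applied to the skew model with the forward-smoothing recursion \eqref{eq:forward:smoothing}, which is precisely what you do.
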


\begin{lemma}
\label{lem:hoeffding:tau:square}
Assume \hypref[assum:biased:estimate]{assum:bound:filter:pseudomarginal}. Then for all $n \in \nset$, $f \in \bmf{\Xfd_n}$, and $\K \in \nsetpos$, 
\[
\sum_{i = 1}^\N \frac{\ewght{n}{i}}{\sumwght{n}}(\tstat[i]{n})^2 f(\epart{n}{i}) \pplim \postmod{n}([\tstatmod{n} h_n]^2 f) + \eta_n f, 
\]
where the measures $(\eta_m)_{m \in \nsetpos}$ are defined recursively as 
\[
\eta_{m + 1} f = \K^{-1} \frac{\eta_m \ukmod{m} f +  \postmod{m} \ukmod{m} \{ \bkmod{m}(\tstatmod{m} h_m + \addf{m} - \tstatmod{m + 1} h_{m + 1})^2 f \}}{\postmod{m} \ukmod{m} \1_{\Xset_{m + 1}}}, \quad m \in \nset, 
\]
with $\eta_0 \equiv 0$. 
\end{lemma}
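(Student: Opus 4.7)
The plan is to proceed by induction on $n$. The base case $n = 0$ is immediate, since $\tstat[i]{0} = 0$, $h_0 \equiv 0$, and $\eta_0 \equiv 0$ by convention, so both sides vanish. For the induction step, I will exploit the two nested layers of conditional i.i.d.\ structure built into Algorithm~\ref{alg:pm:PaRIS}, coupled with the Hoeffding control supplied by Proposition~\ref{prop:hoeffding:tau:marginal}.

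First, by the construction of Algorithm~\ref{alg:pm:PaRIS} and \hypref{assum:bound:filter:pseudomarginal}, the variables $\{\ewght{n + 1}{i}(\tstat[i]{n + 1})^2 f(\epart{n + 1}{i})\}_{i = 1}^{\N}$ are i.i.d.\ and uniformly bounded conditionally on $\calF{n}{\N}$. A conditional Hoeffding argument (mirroring the proof of Proposition~\ref{prop:hoeffding:tau:marginal}(i)) thus reduces the problem to identifying the $\pP$-limit of $\pE[\ewght{n + 1}{1}(\tstat[1]{n + 1})^2 f(\epart{n + 1}{1}) \mid \calF{n}{\N}]$ and pairing it with the limit of the normaliser $\sumwght{n + 1}/\N \pplim \postmod{n} \ukmod{n} \1_{\Xset_{n + 1}}/\postmod{n}\adjfuncforward{n}$, the latter following from Proposition~\ref{prop:hoeffding:tau:marginal}(i) applied with appropriate test functions.

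I would then evaluate this conditional expectation in two stages, first conditioning on the finer filtration $\calG{n + 1}{\N}$. Under $\calG{n + 1}{\N}$, the forward weight $\ewght{n + 1}{1}$ (a function of $\zpart{n + 1}{1}$) and $\tstat[1]{n + 1}$ (a function of the independent backward samples) are independent, so the conditional expectation factorises. Identity~\eqref{eq:expected:weight:given:G} handles the weight factor, while the conditional i.i.d.\ structure of the $\K$ backward draws yields
\[
\pE[(\tstat[1]{n + 1})^2 \mid \calG{n + 1}{\N}] = \hat{\Psi}_n^2(\epart{n + 1}{1}) + \K^{-1}\bigl(\hat{W}_n - \hat{\Psi}_n^2\bigr)(\epart{n + 1}{1}),
\]
where $\hat{\Psi}_n(x)$ and $\hat{W}_n(x)$ denote the first and second moments of $\tstat[k]{n} + \addf{n}(\epart{n}{k}, x)$ under the discrete probability $k \mapsto \ewght{n}{k}\udmod{n}(\epart{n}{k}, x) / \sum_{k'}\ewght{n}{k'}\udmod{n}(\epart{n}{k'}, x)$. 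Averaging further over the forward step re-expresses the conditional expectation as $(\post[\N]{n}\adjfuncforward{n})^{-1} \sum_j \frac{\ewght{n}{j}}{\sumwght{n}} \ukmod{n}\bigl[\{\hat{\Psi}_n^2 + \K^{-1}(\hat{W}_n - \hat{\Psi}_n^2)\}f\bigr](\epart{n}{j})$.

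The last and most delicate step is to identify the limit of this expression. Pointwise in $x$, Proposition~\ref{prop:hoeffding:tau:marginal} shows $\hat{\Psi}_n(x) \pplim \bkmod{n}(\tstatmod{n} h_n + \addf{n})(x) = \tstatmod{n + 1} h_{n + 1}(x)$, while the induction hypothesis applied with test function $\udmod{n}(\cdot, x)$ handles the $(\tstat[k]{n})^2$ piece of $\hat{W}_n(x)$ and yields $\hat{W}_n(x) \pplim \bkmod{n}[(\tstatmod{n} h_n + \addf{n})^2](x) + \tilde{\eta}_n(x)$, with $\tilde{\eta}_n(x) \eqdef \eta_n[\udmod{n}(\cdot, x)]/\postmod{n}[\udmod{n}(\cdot, x)]$. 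The main obstacle will be transferring these pointwise-in-$x$ convergences into convergence of the $\ukmod{n}$-integrals; I plan to circumvent any uniform-in-$x$ control by exchanging the order of summation and integration (so that the $\hat{W}_n$ term collapses into a single particle average of $(\tstat[k]{n})^2 \ukmod{n} f(\epart{n}{k})$ plus bounded-function contributions to which the induction hypothesis and Proposition~\ref{prop:hoeffding:tau:marginal}(ii) apply directly) and by invoking a dominated-convergence argument driven by the $L^1$-convergence inherited from Proposition~\ref{prop:hoeffding:tau:marginal} to deal with the remaining $\hat{\Psi}_n^2$ term. Combining these limits with that of the normaliser and invoking the Fubini-type identity $\postmod{n}\ukmod{n}[\tilde{\eta}_n f] = \eta_n \ukmod{n} f$ (immediate from the definitions of $\tilde{\eta}_n$ and $\bkmod{n}$) produces precisely $\postmod{n + 1}([\tstatmod{n + 1} h_{n + 1}]^2 f) + \eta_{n + 1} f$ with $\eta_{n + 1}$ as given in the lemma statement.
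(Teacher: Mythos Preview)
Your proposal is correct and follows essentially the same route as the paper: induction on $n$, conditional Hoeffding to reduce to $\pE[a_\N \mid \calF{n}{\N}]$, factorisation under $\calG{n+1}{\N}$, the same $\K^{-1}$/$(\K-1)\K^{-1}$ split into a second-moment and a squared-mean part, the exchange-of-summation trick that collapses the second-moment part into a single particle average (to which the induction hypothesis applies with test function $\ukmod{n}f$), and a dominated-convergence argument for the squared-mean part. The paper packages this last step as Lemma~\ref{lem:generalized:lebesgue}, which is exactly the ``pointwise a.s.\ convergence plus uniform boundedness $\Rightarrow$ convergence of particle integrals'' statement you describe; note also that once you use the exchange trick for $\hat{W}_n$ you obtain $\eta_n\ukmod{n}f$ directly and do not actually need the auxiliary identity for $\tilde{\eta}_n$.
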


Following \cite[Lemma~13]{olsson:westerborn:2017}, the measures $(\eta_m)_{m \in \nsetpos}$ may be expressed non-recursively as
\begin{equation}
\label{eq:def:etak}
\eta_n f = \sum_{m = 0}^{n - 1} \K^{m - n} \frac{\postmod{m} \ukmod{m} \{ \bkmod{m}(\tstatmod{m} h_m + \addf{m} - \tstatmod{m + 1} h_{m + 1})^2 \ukmod{m + 1, n - 1} f \}}{\postmod{m} \ukmod{m, n - 1} \1_{\Xset_n}}.
\end{equation}

\begin{proof}[Proof of Lemma~\ref{lem:hoeffding:tau:square}]
We proceed by induction over $n$. The base case $n = 0$ is a trivial consequence of the fact that $\tstat{0} h_0 = 0$ and $\tstat[i]{0} = 0$ for all $i \in \intvect{1}{\N}$. Thus, we assume that the result holds true for some $n \in \nset$ and write
\[
\sum_{i=1}^\N \frac{\ewght{n+1}{i}}{\sumwght{n+1}}(\tstat[i]{n+1})^2 f(\epart{n+1}{i}) = \frac{a_\N}{b_\N},
\]
where 
$$
a_\N \eqdef \frac{1}{\N}\sum_{i=1}^N\ewght{n+1}{i}(\tstat[i]{n+1})^2 f(\epart{n+1}{i}), \quad b_\N \eqdef \frac{1}{\N} \sum_{i=1}^N \ewght{n+1}{i}
$$

We first establish the convergence in probability of $(a_\N)_{\N \in \nsetpos}$. Using again Hoeffding's inequality for conditional expectations and the fact that the variables $\ewght{n+1}{i}(\tstat[i]{n + 1})^2 f_{n + 1}(\epart{n + 1}{i})$, $i \in \intvect{1}{\N}$, are 
bounded by $\| \wgtfunc{n} \|_\infty \| h_{n + 1} \|^2_\infty \| f_{n + 1}\|_\infty$ and conditionally i.i.d. given $\calF{n}{\N}$ for all $i$, we obtain for all $\epsilon > 0$,  
\[
\pP\left(\left| a_\N - \pE[a_\N \mid \calF{n}{\N}] \right| \geq \epsilon \right) \leq \exp \left( - \frac{2 \N \epsilon^2}{\| \wgtfunc{n} \|_\infty \| h_{n + 1} \|^2_\infty \| f_{n+1} \|_\infty} \right). 
\]
It is hence enough to consider the limit in probability of $\pE[a_\N \mid \calF{n}{\N}]$ as $\N$ tends to infinity. For this purpose, write, using \eqref{eq:expected:weight:given:G}, 
\begin{align*}
\pE[a_\N \mid \calF{n}{\N}] &= \pE \left[ \ewght{n + 1}{1}(\tstat[1]{n + 1})^2 f_{n+1}(\epart{n + 1}{1}) \mid  \calF{n}{\N} \right] \\ 
&= \pE\left[\wgtfuncmod{n}(\epart{n}{\ind{n + 1}{1}}, \epart{n + 1}{1}) \pE \left[(\tstat[1]{n + 1})^2 \cond \calG{n + 1}{\N}\right] f_{n + 1}(\epart{n + 1}{1}) \cond \calF{n}{\N} \right] \\
&= a_\N' + a_\N'',
\end{align*}
where
\begin{align*}
a_\N' &\eqdef \K^{-1} \\ 
&\times \pE\left[ \wgtfuncmod{n}(\epart{n}{\ind{n + 1}{1}}, \epart{n + 1}{1}) f_{n + 1}(\epart{n + 1}{1}) \pE \left[ \left( \tstat[J_{n + 1}^{(1,1)}]{n} + \addf{n}(\epart{n}{J_{n + 1}^{(1,1)}}, \epart{n + 1}{1}) \right)^2 \cond \calG{n + 1}{\N}\right] \cond \calF{n}{\N} \right], \\
a_\N'' &\eqdef (\K - 1)\K^{-1} \\ 
&\times \pE \left[ \wgtfuncmod{n}(\epart{n}{\ind{n + 1}{1}}, \epart{n + 1}{1}) f_{n + 1}(\epart{n + 1}{1}) \pE \left[\tstat[J_{n + 1}^{(1,1)}]{n} + \addf{n}(\epart{n}{J_{n + 1}^{(1,1)}}, \epart{n + 1}{1}) \cond \calG{n + 1}{\N}\right]^2 \cond \calF{n}{\N} \right]. 
\end{align*}
Here the first term is given by
\begin{align}
a_\N' &= \K^{-1} \sum_{i = 1}^\N \frac{\ewght{n}{i}\adjfuncforward{n}(\epart{n}{i})}{\sum_{i' = 1}^{\N} \ewght{n}{i'} \adjfuncforward{n}(\epart{n}{i'})} \int \wgtfuncmod{n}(\epart{n}{i}, x) f_{n + 1}(x) \nonumber \\
&\hspace{3cm}\times \sum_{j = 1}^\N \frac{\ewght{n}{j} \udmod{n}(\epart{n}{j}, x)}{\sum_{j' = 1}^{\N} \ewght{n}{j'} \udmod{n}(\epart{n}{j'}, x)} \left(\tstat[j]{n} + \addf{n}(\epart{n}{j}, x) \right)^2  \kissforward{n}{n}(\epart{n}{i},x) \, \mu(\rmd x), \nonumber \\
&= (\K \post[\N]{n}\adjfuncforward{n})^{-1} \sum_{j = 1}^\N \frac{\ewght{n}{j}}{\sumwght{n}} \int f_{n + 1}(x) \left( \tstat[j]{n} + \addf{n}(\epart{n}{j}, x) \right)^2 \, \ukmod{n}(\epart{n}{j}, \rmd x). \label{eq:a':1}
\end{align}
Thus, by the induction hypothesis and Proposition~\ref{prop:hoeffding:tau:marginal}, $a_\N'$ tends in probability to 
\begin{multline*}
(\K \postmod{n}\adjfuncforward{n})^{-1} \left(\eta_n \ukmod{n} f_{n + 1} + \postmod{n}\{(\tstatmod{n} h_n)^2 \ukmod{n} f_{n+1}\} \right. \\ 
\left. + \postmod{n} \ukmod{n}(f_{n + 1} \addf{n}^2)  + 2 \postmod{n}\{\tstatmod{n} h_n \ukmod{n}(f_{n + 1} \addf{n})\} \right) \\
= (\K \postmod{n} \adjfuncforward{n})^{-1} \left(\eta_n \ukmod{n} f_{n + 1} + \postmod{n} \ukmod{n} \{(\tstatmod{n} h_n +\addf{n})^2 f_{n + 1} \} \right). 
\end{multline*}

We turn to the second term. Along the lines of \eqref{eq:a':1} we may write  
$$
a_\N'' = (\K - 1) \K^{-1} \frac{\post[\N]{n}(\ukmod{n} \varphi_N)}{\post[\N]{n} \adjfuncforward{n}}
$$
with
\[
\varphi_\N(x) \eqdef f_{n+1}(x) \left\{ \sum_{i = 1}^\N \frac{\ewght{n}{i} \udmod{n}(\epart{n}{i}, x)}{\sum_{i' = 1}^{\N} \ewght{n}{i'} \udmod{n} (\epart{n}{i'}, x)} \left( \tstat[i]{n} + \addf{n}(\epart{n}{i}, x) \right) \right\}^2, \quad x \in \Xset_{n + 1}. 
\]
Now, Proposition~\ref{prop:hoeffding:tau:marginal} implies pointwise convergence of $\varphi_\N$ in the sense that for all $x \in \Xset_{n + 1}$, $\pP$-a.s., 
\begin{align*}
\lim_{\N \to \infty} \varphi_\N(x) &= f_{n + 1}(x) \left(\frac{\postmod{n}[\tstatmod{n} h_n \udmod{n}(\cdot, x) + \addf{n}(\cdot, x) \udmod{n}(\cdot, x)]}{\postmod{n}[\udmod{n}(\cdot, x)]} \right)^2 \\
&= f_{n + 1}(x) (\bkmod{n})^2(\tstatmod{n} h_n + \addf{n})(x) \\
&= f_{n + 1}(x) (\tstatmod{n + 1})^2 h_{n + 1}(x). 
\end{align*}
Therefore, since $\| \varphi_N \|_\infty \leq \| f_{n + 1} \|_\infty \| h_{n + 1} \|^2_\infty$ for all $\N$, we may apply Lemma~\ref{lem:generalized:lebesgue} in order to obtain the limit  
\[
a_\N'' \pplim (\K - 1) \K^{-1} \frac{\postmod{n} \ukmod{n}\{f_{n + 1} (\tstatmod{n + 1})^2 h_{n + 1}\}}{\postmod{n}\adjfuncforward{n}}. 
\]
Finally, by Proposition~\ref{prop:hoeffding:tau:marginal}(i), $\pP$-a.s., 
\begin{equation} \label{eq:b:lim}
\lim_{\N \to \infty} b_\N = \frac{\postmod{n} \ukmod{n} \1_{\Xset_{n + 1}}}{\postmod{n} \adjfuncforward{n}}, 
\end{equation}
and using \eqref{eq:recursion:FK:marg} yields the limit
\begin{multline}
\frac{a_\N}{b_\N} \pplim \postmod{n + 1}\{f_{n + 1} (\tstatmod{n + 1})^2 h_{n + 1}\} \\
+ \K^{-1} \frac{\eta_n \ukmod{n} f_{n + 1} + \postmod{n} \ukmod{n}\{(\tstatmod{n} h_n +\addf{n})^2 f_{n + 1}\} - \postmod{n} \ukmod{n}\{f_{n + 1} (\tstatmod{n+1})^2 h_{n + 1}\}}{\postmod{n} \ukmod{n} \1_{\Xset_{n + 1}}}.
\end{multline}
Thus, we may complete the proof by noting, using Lemma~\ref{lem:reversibility}(i), that
\begin{multline*}
\postmod{n} \ukmod{n}\{(\tstatmod{n} h_n +\addf{n})^2 f_{n + 1}\} - \postmod{n} \ukmod{n}\{f_{n + 1} (\tstatmod{n + 1})^2 h_{n + 1}\} \\
= \postmod{n} \ukmod{n} \{ \bkmod{n}(\tstatmod{n} h_n + \addf{n} - \tstat{n + 1} h_{n + 1})^2 f_{n+1}\}.  
\end{multline*}
\end{proof}

\begin{proof}[Proof of Theorem~\ref{prop:clt:pseudo:marginal:paris}]
We proceed by induction over $n$ and assume that the result holds for some $n \in \nset$. In addition, we first assume that $\postmod{n + 1}(\tstatmod{n + 1} h_{n + 1} f_{n + 1} + \ftd{n + 1}) = 0$ (this assumption will be removed in the end of the proof). Write
\begin{equation} \label{eq:main:decomp:CLT}
\sqrt{N} \sum_{i = 1}^\N \frac{\ewght{n + 1}{i}}{\sumwght{n + 1}}\{\tstat[i]{n + 1} f_{n + 1}(\epart{n + 1}{i}) +  \ftd{n + 1}(\epart{n + 1}{i})\} = (\sumwght{n + 1} / \N)^{-1} (\termprime + \termbis) / \N,
\end{equation}
where 
\begin{align*}
\termprime &\eqdef \sqrt{\N} \sum_{i = 1}^\N \pE \left[ \ewght{n + 1}{i}\{\tstat[i]{n + 1}f_{n + 1}(\epart{n + 1}{i}) + \ftd{n + 1}(\epart{n + 1}{i})\} \cond \calF{n}{\N} \right],\\
 \termbis &\eqdef \sqrt{\N} \sum_{i = 1}^N \left( \ewght{n+1}{i}\{\tstat[i]{n+1}f_{n+1}(\epart{n+1}{i}) + \ftd{n+1}(\epart{n+1}{i})\}\right.\\
&\left.\hspace{4cm} - \pE \left[\ewght{n+1}{i} \{ \tstat[i]{n+1} f_{n+1}(\epart{n+1}{i}) + \ftd{n+1}(\epart{n+1}{i})\} \cond \calF{n}{\N} \right] \right).
\end{align*}
In the following we establish the weak limit of $(\termprime + \termbis) / \N$, from which the weak limit of \eqref{eq:main:decomp:CLT} follows by Slutsky's lemma. In order to treat the first term $\termprime / \N$ of the decomposition \eqref{eq:main:decomp:CLT}, write, using Lemma~\ref{lem:expectation:incremental:marginal},
\[
\termprime / \N = \sqrt{\N} (\post[\N]{n} \adjfuncforward{n})^{-1} \sum_{i = 1}^N \frac{\ewght{n}{i}}{\sumwght{n}} \{ \tstat[i]{n} \ukmod{n} f_{n + 1}(\epart{n}{i}) + \ukmod{n}(\addf{n} f_{n + 1} + \ftd{n + 1})(\epart{n}{i}) \}.
\]
In the previous expression, $\lim_{\N \to \infty} \post[\N]{n} \adjfuncforward{n} = \postmod{n} \adjfuncforward{n} > 0$, $\pP$-a.s., by proposition~\ref{prop:hoeffding:tau:marginal}. In addition, by Lemma~\ref{lem:reversibility}(i), 
\begin{align*}
\lefteqn{\postmod{n} \{ \tstatmod{n} h_n \ukmod{n} f_{n + 1} + \ukmod{n}(\addf{n} f_{n + 1} + \ftd{n + 1}) \}} \hspace{10mm} \\
&= \iint \postmod{n} \ukmod{n}(\rmd x_{n + 1}) \, \bkmod{n}(x_{n + 1}, \rmd x_n) \\
&\hspace{10mm} \times \{ \tstatmod{n} h_n(x_n) f_{n + 1}(x_{n + 1}) + \addf{n}(x_n, x_{n + 1}) f_{n + 1}(x_{n + 1}) + \ftd{n + 1}(x_{n + 1}) \} \\
&= \postmod{n} \ukmod{n} \1_{\Xset_{n + 1}} \times \postmod{n + 1}(\tstatmod{n + 1} h_{n + 1} f_{n + 1} + \ftd{n + 1}) = 0,
\end{align*}
where the last equality holds by assumption. Thus, applying the induction hypothesis and Slutsky's lemma yields 
\begin{equation} \label{eq:weak:limit:termprime}
\lim_{N \to \infty} \pE \left[ \exp \left(\im u \termprime / \N \right) \right] = \exp \left( - u^2 \frac{\sigma_n^2 \langle \ukmod{n} f_{n + 1}; \ukmod{n}(\addf{n} f_{n + 1} + \ftd{n + 1}) \rangle}{2 (\postmod{n} \adjfuncforward{n})^2} \right).  
\end{equation}

We turn to the second term of \eqref{eq:main:decomp:CLT}. By Lemma~\ref{lem:expectation:incremental:marginal}, $\termbis / \N = \sum_{i = 1}^\N \upsilon^i_\N$, where
$$
\upsilon^i_\N \eqdef \frac{1}{\K \sqrt{\N}} \sum_{j = 1}^{\K} \tilde{\upsilon}_\N(I_{n + 1}^i, J_{n + 1}^{(i, j)}, \epart{n + 1}{i}, \zpart{n + 1}{i}),
$$
with
\begin{multline*}
\tilde{\upsilon}_\N(i, j, x, z) \eqdef \wgtfunc{n}(\epart{n}{i}, x, z)
\left( \{\tstat[j]{n} + \addf{n}(\epart{n}{j}, x)\} f_{n + 1}(x) + \ftd{n + 1}(x) \right) \\
 -  (\post[\N]{n} \adjfuncforward{n})^{-1} \sum_{\ell = 1}^\N \frac{\ewght{n}{\ell}}{\sumwght{n}} \{ \tstat[\ell]{n} \ukmod{n} f_{n + 1}(\epart{n}{\ell}) + \ukmod{n}(\addf{n} f_{n + 1} + \ftd{n + 1})(\epart{n}{\ell}) \}, \\
 \quad (i, j, x, z) \in \intvect{1}{\N}^2 \times \Xset_{n + 1} \times \Zset_{n + 1}.
\end{multline*}
\begin{remark} \label{rem:non-trivial:extension}
In the framework of a fully dominated HMM and a PaRIS driven by the standard bootstrap particle filter, which was the setting considered in \cite{olsson:westerborn:2014b}, the function $\tilde{\upsilon}_\N$ does not, on the contrary to the general case considered here, depend on $i$. As we will see in the next derivations, this dependence calls for a non-trivial refinement of the proof of \cite[Theorem~3]{olsson:westerborn:2014b}. 
\end{remark}

In the following we establish the weak limit of $\sum_{i = 1}^\N \upsilon^i_\N$ using \cite[Theorem~A.3]{douc:moulines:2008}. By construction, $\pE[\upsilon^i_\N \mid \calF{n}{\N}] = 0$; thus, the condition (31) in the mentioned theorem can be checked by establishing that 
\begin{multline}
\label{eq:decomp:sum:square:upsilon}
\sum_{i = 1}^\N \pE[(\upsilon^i_\N)^2 \mid \calF{n}{\N}] = \K^{-1} \pE \left[  \pE \left[ \tilde{\upsilon}_\N^2(I_{n + 1}^1, J_{n + 1}^{(1,1)}, \epart{n + 1}{1}, \zpart{n + 1}{1}) \mid \calG{n + 1}{\N} \right] \mid \calF{n}{\N} \right] \\
 + (\K - 1) \K^{-1} \pE \left[ \pE^2 \left[ \tilde{\upsilon}_\N(I_{n + 1}^1, J_{n + 1}^{(1,1)}, \epart{n + 1}{1}, \zpart{n + 1}{1}) \mid \calG{n + 1}{\N} \right] \mid \calF{n}{\N} \right] 
\end{multline}
converges in probability as $\N \to \infty$. The first term of \eqref{eq:decomp:sum:square:upsilon} is given by
\begin{align}
\lefteqn{\pE\left[ \pE\left[ \tilde{\upsilon}^2_\N(I_{n + 1}^1,J_{n + 1}^{(1,1)}, \epart{n + 1}{1}, \zpart{n + 1}{1}) \cond \calG{n + 1}{\N} \right] \cond \calF{n}{\N}\right]} \nonumber \\
&= \pE\left[\sum_{j = 1}^\N \frac{\ewght{n}{j} \udmod{n} (\epart{n}{j}, \epart{n + 1}{1})}{\sum_{j' = 1}^\N \ewght{n}{j'} \udmod{n} (\epart{n}{j'}, \epart{n + 1}{1})} \int \tilde{\upsilon}^2_N(I_{n + 1}^1, j, \epart{n + 1}{1}, z) \, \ukdist{n}(\epart{n}{\ind{n + 1}{1}}, \epart{n + 1}{1}, \rmd z) \cond \calF{n}{\N} \right], \nonumber \\
&= \sum_{i = 1}^\N \frac{\ewght{n}{i} \adjfuncforward{n}(\epart{n}{i})}{\sum_{i' = 1}^\N \ewght{n}{i'} \adjfuncforward{n}(\epart{n}{i'})} \nonumber \\ 
&\hspace{10mm} \times \int \kissforward{n}{n}(\epart{n}{i}, x)
\sum_{j = 1}^\N \frac{\ewght{n}{j} \udmod{n} (\epart{n}{j}, x)}{\sum_{j' = 1}^{\N} \ewght{n}{j'} \udmod{n} (\epart{n}{j'}, x)} \int \tilde{\upsilon}^2_\N(i, j,x, z) \, \ukdist{n}(\epart{n}{i}, x, \rmd z) \, \mu(\rmd x) \nonumber \\
&= (\post[\N]{n} \adjfuncforward{n})^{-1} (a_\N + b_\N + c_\N) \nonumber \\
&\hspace{10mm} - (\post[\N]{n} \adjfuncforward{n})^{-2} \left(\sum_{\ell=1}^\N \frac{\ewght{n}{\ell}}{\sumwght{n}} \{ \tstat[\ell]{n} \ukmod{n} f_{n + 1}(\epart{n}{\ell}) + \ukmod{n}(\addf{n} f_{n + 1} + \ftd{n + 1})(\epart{n}{\ell}) \} \right)^2, \label{eq:upsilon:2:decomp} 
\end{align}
where
\begin{align*}
a_\N &\eqdef \sum_{j = 1}^\N \frac{\ewght{n}{j}}{\sumwght{n}}(\tstat[j]{n})^2  \int \udmod{n} (\epart{n}{j}, x) \varphi_\N(x) f_{n + 1}^2(x) \, \mu(\rmd x), \\
b_\N &\eqdef \sum_{i = 1}^\N \frac{\ewght{n}{i}}{\sumwght{n}} \int \lwd{n}(\epart{n}{i}, x) \sum_{j = 1}^\N \frac{\ewght{n}{j} \udmod{n} (\epart{n}{j}, x)}{\sum_{j' = 1}^{\N} \ewght{n}{j'} \udmod{n} (\epart{n}{j'}, x) }\left( \addf{n}(\epart{n}{j}, x) f_{n + 1}(x) + \ftd{n + 1}(x) \right)^2 \, \mu(\rmd x), \\
c_\N &\eqdef  2 \sum_{j = 1}^\N \frac{\ewght{n}{j}}{\sumwght{n}} \tstat[j]{n} \int  \udmod{n}(\epart{n}{j}, x) \varphi_N(x) f_{n + 1}(x) \left( \addf{n}(\epart{n}{j}, x) f_{n + 1}(x) + \ftd{n + 1}(x) \right) \, \mu(\rmd x),
\end{align*}
with 
$$
\varphi_\N(x) \eqdef \frac{\sum_{i = 1}^\N \ewght{n}{i} \lwd{n}(\epart{n}{i}, x)}{\sum_{j' = 1}^\N \ewght{n}{j'} \udmod{n}(\epart{n}{j'}, x)}, \quad x \in \Xset_{n + 1},   
$$
being a random function and 
\begin{multline*}
\lwd{n}(x_n, x_{n + 1}) \eqdef \int \ukest{n}{z}(x_n, x_{n + 1}) \wgtfunc{n}(x_n, x_{n + 1}, z) \, \ukdist{n}(x_n, x_{n + 1}, \rmd z), \\ 
(x_n, x_{n + 1}) \in \Xset_n \times \Xset_{n + 1},  
\end{multline*}
an unnormalised transition density. Define the deterministic function $\varphi(x) \eqdef \post{n}[\lwd{n}(\cdot, x)] / \post{n}[\udmod{n}(\cdot, x)]$, $x \in \Xset_{n + 1}$; then by Proposition~\ref{prop:hoeffding:tau:marginal}, $\pP$-a.s., $\lim_{\N \to \infty} \varphi_\N(x) = \varphi(x)$ for all $x$. Thus, since 
\begin{multline} \label{eq:a:bound}
\left| a_\N -  \sum_{j = 1}^\N \frac{\ewght{n}{j}}{\sumwght{n}}(\tstat[j]{n})^2 \int \udmod{n}(\epart{n}{j}, x) f_{n + 1}^2(x) \varphi(x) \, \mu(\rmd x) \right| \\
\leq \| h_n \|^2_\infty \sum_{j = 1}^\N \frac{\ewght{n}{j}}{\sumwght{n}} \int \udmod{n}(\epart{n}{j}, x) f_{n + 1}^2(x) |\varphi_\N(x) - \varphi(x)| \, \mu(\rmd x)   
\end{multline}
and $f_{n + 1}^2(x) |\varphi_\N(x) - \varphi(x)| \leq 2 \| \wgtfunc{n} \|_\infty \| f_{n + 1} \|_\infty^2$ for all $x$, Lemma~\ref{lem:generalized:lebesgue} implies that the right-hand side of \eqref{eq:a:bound} tends to zero as $\N \to \infty$. Thus, 
by Lemma~\ref{lem:hoeffding:tau:square},
\[
a_\N \pplim \eta_n \ukmod{n}(f^2_{n + 1} \varphi)  + \postmod{n} \{ (\tstatmod{n})^2 h_n \ukmod{n}(f^2_{n + 1} \varphi) \}.
\]

We compute the limit of $b_\N$. For this purpose, define the mapping 
\[
 \psi_\N(x) \eqdef \sum_{j = 1}^\N \frac{\ewght{n}{j} \udmod{n}(\epart{n}{j}, x)}{\sum_{j' = 1}^\N \ewght{n}{j'} \udmod{n} (\epart{n}{j'}, x)} \left( \addf{n}(\epart{n}{j}, x) f_{n + 1}(x) + \ftd{n + 1}(x) \right)^2, \quad x \in \Xset_{n + 1}. 
\]
which tends, for all $x$, by Proposition~\ref{prop:hoeffding:tau:marginal}, $\pP$-a.s. to 
$$
\psi(x) \eqdef \bkmod{n}(\addf{n} f_{n + 1} + \ftd{n + 1})^2(x). 
$$
Moreover, since $\psi_\N(x) \leq (\| \addf{n} \|_\infty \|f_{n + 1} \|_\infty + \| \ftd{n + 1} \|_\infty)^2$ for all $x$, Lemma~\ref{lem:generalized:lebesgue} implies that 
\begin{align*}
\lefteqn{b_\N \pplim \int \post{n} [\lwd{n}(\cdot, x)] \psi(x) \, \mu(\rmd x)} \hspace{15mm} \\
&= \iint \lwd{n}(x_n, x_{n + 1}) \bkmod{n}(\addf{n} f_{n + 1} + \ftd{n + 1})^2(x_{n + 1}) \, \postmod{n}(\rmd x_n) \, \mu(\rmd x_{n + 1}) \\
&= \postmod{n} \ukmod{n}\{ \varphi (\addf{n} f_{n + 1} + \ftd{n + 1})^2 \},  
\end{align*}
where the last equality is obtained using Lemma~\ref{lem:reversibility}(i).
 
We turn to the last term, $c_\N$. Since 
\begin{multline*}
\left| c_\N -  2 \sum_{j = 1}^\N \frac{\ewght{n}{j}}{\sumwght{n}} \tstat[j]{n} \int \udmod{n} (\epart{n}{j}, x) f_{n + 1}(x) \varphi(x) \left( \addf{n}(\epart{n}{j} ,x) f_{n + 1}(x) + \ftd{n + 1}(x)\right) \, \mu(\rmd x) \right| \\
\leq \| h_n \|_{\infty}( \| \addf{n} \|_\infty \| f_{n + 1} \|_\infty + \| \ftd{n + 1} \|_\infty) \\ 
\times \sum_{j = 1}^\N \frac{\ewght{n}{j}}{\sumwght{n}} \int \udmod{n}(\epart{n}{j}, x) f_{n + 1}(x) |\varphi_\N(x) - \varphi(x)| \, \mu(\rmd x),   
\end{multline*}
where the right-hand side tends, by Lemma~\ref{lem:generalized:lebesgue}, to zero in probability as $\N \to \infty$, using again Proposition~\ref{prop:hoeffding:tau:marginal} yields
\[
c_\N \pplim 2 \postmod{n} \{ \tstatmod{n} h_n \ukmod{n} [\varphi f_{n + 1}( \addf{n} f_{n + 1} + \ftd{n +1})] \}. 
\]

Finally, by Proposition~\ref{prop:hoeffding:tau:marginal}, $\pP$-a.s., 
\begin{multline} \label{eq:zero:limit}
\lim_{\N \to \infty} \sum_{\ell = 1}^N \frac{\ewght{n}{\ell}}{\sumwght{n}} \{ \tstat[\ell]{n} \ukmod{n} f_{n + 1}(\epart{n}{\ell}) + \ukmod{n}(\addf{n} f_{n + 1} + \ftd{n + 1})(\epart{n}{\ell}) \} \\
=  \postmod{n}\{ \tstatmod{n} h_n \ukmod{n} f_{n + 1} + \ukmod{n}(\addf{n} f_{n + 1} + \ftd{n + 1})\} = 0,
\end{multline}
where the last equality follows by Lemma~\ref{eq:critical:identity} and assumption, and since also $\lim_{\N \to \infty}\post[\N]{n} \adjfuncforward{n} = \postmod{n} \adjfuncforward{n}$, $\pP$-a.s., the second term of \eqref{eq:upsilon:2:decomp} tends $\pP$-a.s. to zero.   

To sum up, as $\N \to \infty$, the first term of \eqref{eq:decomp:sum:square:upsilon} satisfies the limit 
\begin{multline} \label{eq:limit:first:term:i}
\K^{-1} \pE \left[ \pE \left[\tilde{\upsilon}_\N^2(I_{n + 1}^1, J_{n + 1}^{(1,1)}, \epart{n + 1}{1}, \zpart{n + 1}{1}) \cond \calG{n + 1}{\N} \right] \cond \calF{n}{\N} \right] \\
\pplim (\K \postmod{n} \adjfuncforward{n})^{-1} \left( \eta_n \ukmod{n}(f^2_{n + 1} \varphi) + \postmod{n} \ukmod{n} \{ \varphi [(\tstatmod{n} h_n + \addf{n}) f_{n+1} + \ftd{n + 1}]^2 \} \right). 
\end{multline}

We turn to the second term of \eqref{eq:decomp:sum:square:upsilon} and write 
\begin{align*}
\lefteqn{\pE \left[ \pE^2 \left[ \tilde{\upsilon}_\N(I_{n + 1}^1, J_{n + 1}^{(1,1)}, \epart{n + 1}{1}, \zpart{n + 1}{1}) \cond \calG{n + 1}{\N} \right] \cond \calF{n}{\N} \right]} \\
&= \pE\left[ \left( \sum_{j = 1}^\N \frac{\ewght{n}{j} \udmod{n}(\epart{n}{j}, \epart{n + 1}{1})}{\sum_{j' = 1}^\N \ewght{n}{j'} \udmod{n}(\epart{n}{j'}, \epart{n + 1}{1})} \right. \right. \\ 
& \left. \left. \hspace{40mm} \times \int \tilde{\upsilon}_\N(\ind{n + 1}{1}, j, \epart{n + 1}{1}, z) \, \ukdist{n}(\epart{n}{\ind{n + 1}{1}}, \epart{n + 1}{1}, \rmd z) \right)^2 \cond \calF{n}{\N}\right] \\
&= \sum_{i = 1}^{\N}  \frac{\ewght{n}{i} \adjfuncforward{n}(\epart{n}{i})}{\sum_{i' = 1}^\N \ewght{n}{i'} \adjfuncforward{n}(\epart{n}{i'})} \\ 
& \hspace{10mm} \times \int \left( \sum_{j = 1}^\N \frac{\ewght{n}{j} \udmod{n} (\epart{n}{j}, x)}{\sum_{j' = 1}^{\N} \ewght{n}{j'} \udmod{n}(\epart{n}{j'}, x)} \int \tilde{\upsilon}_\N(i, j, x, z) \, \ukdist{n}(\epart{n}{i}, x, \rmd z)\right)^2 
\prop{n}(\epart{n}{i}, \rmd x) \\
&= (\post[\N]{n} \adjfuncforward{n})^{-1} \sum_{i = 1}^\N \frac{\ewght{n}{i}}{\sumwght{n}} \int \wgtfuncmod{n}(\epart{n}{i}, x) \gamma_\N^2(x) \, \ukmod{n}(\epart{n}{i}, \rmd x) \\ 
&\hspace{10mm} - 2 (\post[\N]{n} \adjfuncforward{n})^{-1} (\post[\N]{n} \ukmod{n} \gamma_\N) \sum_{\ell = 1}^\N \frac{\ewght{n}{\ell}}{\sumwght{n}} \{ \tstat[\ell]{n} \ukmod{n} f_{n + 1}(\epart{n}{\ell}) + \ukmod{n}(\addf{n} f_{n + 1} + \ftd{n + 1})(\epart{n}{\ell}) \} \\
&\hspace{20mm} + (\post[\N]{n} \adjfuncforward{n})^{-1} \left( \sum_{\ell = 1}^\N \frac{\ewght{n}{\ell}}{\sumwght{n}} \{ \tstat[\ell]{n} \ukmod{n} f_{n + 1}(\epart{n}{\ell}) + \ukmod{n}(\addf{n} f_{n + 1} + \ftd{n + 1})(\epart{n}{\ell})\} \right)^2, 
\end{align*}
where 
$$
\gamma_\N(x) \eqdef f_{n + 1}(x) \sum_{j = 1}^\N \frac{\ewght{n}{j} \udmod{n} (\epart{n}{j}, x)}{\sum_{j' = 1}^\N \ewght{n}{j'} \udmod{n}(\epart{n}{j'}, x)} \left(\tstat[j]{n} + \addf{n}(\epart{n}{j}, x) \right) + \ftd{n + 1}(x), \quad x \in \Xset_{n + 1}. 
$$
Since, by Proposition~\ref{prop:hoeffding:tau:marginal}, for all $x \in \Xset_{n + 1}$, $\lim_{\N \to \infty} \gamma_\N(x) = \gamma(x)$ $\pP$-a.s., where 
$$
\gamma(x) \eqdef f_{n + 1}(x) \bkmod{n}(\tstatmod{n} h_n + \addf{n})(x) + \ftd{n + 1}(x) = f_{n + 1}(x) \tstatmod{n + 1} h_{n + 1}(x) + \ftd{n + 1}(x),  
$$
Lemma~\ref{lem:generalized:lebesgue} implies that  
\begin{multline} \label{eq:limit:second:term:i}
 \sum_{i = 1}^\N \frac{\ewght{n}{i}}{\sumwght{n}} \int \wgtfuncmod{n}(\epart{n}{i}, x) \gamma_\N^2(x) \, \ukmod{n}(\epart{n}{i}, \rmd x) \\
\pplim \iint \wgtfuncmod{n}(x_n, x_{n + 1}) \gamma^2(x_{n + 1}) \, \postmod{n}(\rmd x_n) \, \ukmod{n}(x_n, \rmd x_{n + 1})
\end{multline}
and $\post[\N]{n} \ukmod{n} \gamma_\N \pplim \postmod{n} \ukmod{n} \gamma$. 
Thus, by (\ref{eq:zero:limit}--\ref{eq:limit:second:term:i}), 
\begin{multline*}
\sum_{i = 1}^\N \pE[(\upsilon^i_\N)^2 \mid \calF{n}{\N}] \pplim 
\incrementalvar_n^2 \langle f_{n + 1}, \ftd{n + 1} \rangle(h_n) \\ 
\eqdef (\K \postmod{n} \adjfuncforward{n})^{-1} \left( \eta_n \ukmod{n}(f^2_{n + 1} \varphi) + \postmod{n} \ukmod{n} \{ \varphi [(\tstatmod{n} h_n + \addf{n}) f_{n + 1} + \ftd{n + 1}]^2 \} \right.  \\
\left. \vphantom{\eta_n \ukmod{n}(f^2_{n + 1} \varphi + \post{n} \uk{n} \{(\bkw{\post{n}} \wgtfunc{n}) [(\tstat{n} h_n + \addf{n}) f_{n+1} + \ftd{n + 1}]^2 \}} + (\K - 1) \postmod{n} \ukmod{n}\{\wgtfuncmod{n}(f_{n + 1} \tstatmod{n + 1} h_{n + 1} + \ftd{n + 1})^2 \} \right), 
\end{multline*}
which verifies condition (i) in \cite[Theorem~A.3]{douc:moulines:2008}. Now, with the conditional relative weight variance $\ukestvar{n}$ defined as in \eqref{eq:def:ukestvar} it holds that  
$$
\varphi(x) - \bkmod{n} \wgtfuncmod{n}(x) = \frac{\postmod{n}[\udmod{n}(\cdot, x) \ukestvar{n}(\cdot, x)]}{\postmod{n}[\udmod{n}(\cdot, x)]} = \bkmod{n} \ukestvar{n}(x). 
$$
Then, since, applying Lemma~\ref{lem:reversibility}(i) (twice),
\begin{multline*}
\lefteqn{\postmod{n} \ukmod{n} \{\varphi [(\tstatmod{n} h_n + \addf{n}) f_{n + 1} + \ftd{n + 1}]^2 \} -  \postmod{n} \ukmod{n}\{\wgtfuncmod{n}(f_{n + 1} \tstatmod{n + 1} h_{n + 1} + \ftd{n + 1})^2 \}} \\
= \postmod{n} \ukmod{n} \{ \wgtfuncmod{n} f_{n + 1}^2 \bkmod{n} (\tstatmod{n} h_n + \addf{n} - \tstatmod{n + 1} h_{n + 1})^2 \} \\ 
+ \postmod{n} \ukmod{n} \{ \bkmod{n} \ukestvar{n} [(\tstatmod{n} h_n + \addf{n}) f_{n + 1} + \ftd{n + 1}]^2 \}, 
\end{multline*}
we may express the previous limit as 
\begin{multline*}
\incrementalvar_n^2 \langle f_{n + 1}, \ftd{n + 1} \rangle =  (\K \postmod{n} \adjfuncforward{n})^{-1} \left( \vphantom{\postmod{n} \ukmod{n} \{ \wgtfuncmod{n} f_{n + 1}^2 \bkmod{\postmod{n}} (\tstatmod{n} h_n + \addf{n} - \tstatmod{n + 1} h_{n + 1})^2 \} + \K \postmod{n} \ukmod{n} \{\wgtfuncmod{n}(f_{n + 1} \tstat{n + 1} h_{n + 1} + \ftd{n + 1})^2} \eta_n \ukmod{n}(f^2_{n + 1} \bkmod{n} \wgtfuncmod{n}) \right. \\
+ \postmod{n} \ukmod{n} \{ \wgtfuncmod{n} f_{n + 1}^2 \bkmod{n} (\tstatmod{n} h_n + \addf{n} - \tstatmod{n + 1} h_{n + 1})^2 \} \\
+ \K \postmod{n} \ukmod{n} \{ \wgtfuncmod{n}(f_{n + 1} \tstatmod{n + 1} h_{n + 1} + \ftd{n + 1})^2 \} \\
\left. \vphantom{\post{n} \uk{n} \{ \wgtfunc{n} f_{n + 1}^2 \bkw{\post{n}} (\tstat{n} h_n + \addf{n} - \tstat{n + 1} h_{n + 1})^2 \} + \K \post{n} \uk{n} \{\wgtfunc{n}(f_{n + 1} \tstat{n + 1} h_{n + 1} + \ftd{n + 1})^2}
+ \eta_n \ukmod{n}(f^2_{n + 1} \bkmod{n} \ukestvar{n}) + \postmod{n} \ukmod{n} \{ \bkmod{n} \ukestvar{n} [(\tstatmod{n} h_n + \addf{n}) f_{n + 1} + \ftd{n + 1}]^2 \} \right).  
\end{multline*}

Condition~(ii) in \cite[Theorem~A.3]{douc:moulines:2008} is satisfied as an immediate consequence of  \hypref{assum:bound:filter:pseudomarginal}; indeed, letting $d_n \eqdef 2 \| \wgtfunc{n} \|_\infty (\| h_{n + 1} \|_\infty \| f_{n + 1} \|_\infty + \| \ftd{n + 1} \|_\infty)$ it holds that $| \upsilon^i_\N | \leq d_n / \sqrt{\N}$ for all $i \in \intvect{1}{\N}$, and consequently, for all $\epsilon>0$, $\pP$-a.s., 
\[
\sum_{i = 1}^\N \pE \left[ (\upsilon^i_\N)^2 \1_{\{|\upsilon^i_\N| \geq \epsilon\}} \cond \calF{n}{\N} \right] \leq d_n^2 \1_{\{ d_n \geq \epsilon \sqrt{\N} \}}, 
\]
where the right-hand side tends to zero as $\N \to \infty$. Thus, \cite[Theorem~A.3]{douc:moulines:2008} provides the limit 
$$
\pE \left[ \exp \left(\im u \termbis / \N \right) \cond \calF{n}{\N} \right] \pplim \exp \left( - u^2 \incrementalvar_n^2 \langle f_{n + 1}, \ftd{n + 1} \rangle / 2 \right).    
$$
Now, since $\lim_{\N \to \infty} \sumwght{n + 1} / \N = \postmod{n} \ukmod{n} \1_{\Xset_{n + 1}} / \postmod{n} \adjfuncforward{n}$ $\pP$-a.s. by Proposition~\ref{prop:hoeffding:tau:marginal}, we may combine the previous limit with \eqref{eq:weak:limit:termprime}, \cite[Lemma~A.5]{delmoral:moulines:olsson:verge:2016}, and Slutsky's lemma in order to obtain the weak convergence  
\begin{multline} \label{eq:first:weak:limit}
(\termprime + \termbis) / \sumwght{n + 1} \\ 
\dlim N \left( 0, \frac{\sigma_n^2(h_n, \ukmod{n} f_{n + 1}, \ukmod{n}(\addf{n} f_{n + 1} + \ftd{n + 1})) + (\postmod{n} \adjfuncforward{n})^2 \incrementalvar_n^2 \langle f_{n + 1}, \ftd{n + 1} \rangle}{(\postmod{n} \ukmod{n} \1_{\Xset_{n + 1}})^2} \right).    
\end{multline}
In order to treat the general case where $\postmod{n + 1}(\tstatmod{n + 1} h_{n + 1} f_{n + 1} + \ftd{n + 1})$ is non-zero, define 
\[
\bar{f}_{n + 1} \eqdef \ftd{n + 1} - \postmod{n + 1}(\tstatmod{n + 1} h_{n + 1} f_{n + 1} + \ftd{n + 1}). 
\]
Since the functions $f_{n + 1}$ and $\bar{f}_{n + 1}$ satisfy $\postmod{n + 1}(\tstatmod{n + 1} h_{n+1} f_{n + 1} + \bar{f}_{n + 1}) = 0$, \eqref{eq:first:weak:limit} provides    
\begin{multline*}
\sqrt{\N} \left( \sum_{i = 1}^\N \frac{\ewght{n + 1}{i}}{\sumwght{n + 1}}\{\tstat[i]{n + 1} f_{n + 1}(\epart{n + 1}{i}) +  \ftd{n + 1}(\epart{n + 1}{i})\} - \postmod{n + 1}(\tstatmod{n + 1} h_{n + 1} f_{n + 1} + \ftd{n + 1}) \right) \\
= \sqrt{\N} \left( \sum_{i = 1}^\N \frac{\ewght{n + 1}{i}}{\sumwght{n + 1}}\{\tstat[i]{n + 1} f_{n + 1}(\epart{n + 1}{i}) +  \bar{f}_{n + 1}(\epart{n + 1}{i})\} \right) \\ 
\dlim N \left(0, \sigma_{n + 1}^2 (h_{n + 1}, f_{n + 1}, \ftd{n + 1}) \right), 
\end{multline*}
where 
\begin{multline} \label{eq:as:var:first:expression}
\sigma^2_{n + 1}(h_{n + 1}, f_{n + 1}, \ftd{n + 1}) \\
\eqdef \frac{\sigma_n^2(h_n, \ukmod{n} f_{n + 1}, \ukmod{n}(\addf{n} f_{n + 1} + \bar{f}_{n + 1}))
+ (\postmod{n} \adjfuncforward{n})^2 \incrementalvar_n^2 \langle f_{n + 1}, \bar{f}_{n + 1} \rangle}{(\postmod{n} \ukmod{n} \1_{\Xset_{n + 1}})^2} \\
= \frac{\sigma^2_n(h_n, \ukmod{n} f_{n + 1}, \ukmod{n}(\addf{n} f_{n + 1} + \bar{f}_{n + 1}))}{(\postmod{n} \ukmod{n} \1_{\Xset_{n + 1}})^2} +  \frac{\postmod{n} \adjfuncforward{n} \eta_n \ukmod{n}(f^2_{n + 1} \bkmod{n} \wgtfuncmod{n})}{\M (\postmod{n} \ukmod{n} \1_{\Xset_{n + 1}})^2} \\
+ \frac{\postmod{n} \adjfuncforward{n} \postmod{n} \ukmod{n} \{\wgtfuncmod{n}(f_{n + 1} \tstatmod{n + 1} h_{n + 1} + \bar{f}_{n + 1} )^2 \}}{(\postmod{n} \ukmod{n} \1_{\Xset_{n + 1}})^2} \\
+ \frac{\postmod{n} \adjfuncforward{n} \postmod{n} \ukmod{n} \{\wgtfuncmod{n} f^2_{n + 1} \bkmod{n} (\tstatmod{n} h_n + \addf{n} - \tstatmod{n + 1} h_{n + 1})^2\}}{\M (\postmod{n} \ukmod{n} \1_{\Xset_{n + 1}})^2} \\
+ \frac{\postmod{n} \adjfuncforward{n} \eta_n \ukmod{n}(f^2_{n + 1} \bkmod{n} \ukestvar{n})}{\M (\postmod{n} \ukmod{n} \1_{\Xset_{n + 1}})^2} + \frac{\postmod{n} \adjfuncforward{n} \postmod{n} \ukmod{n} \{ \bkmod{n} \ukestvar{n} [(\tstatmod{n} h_n + \addf{n}) f_{n + 1} + \bar{f}_{n + 1}]^2 \}}{\M (\postmod{n} \ukmod{n} \1_{\Xset_{n + 1}})^2}. 
\end{multline}
Since by Lemma~\ref{lem:reversibility}(i),
\begin{multline*}
\postmod{n} \ukmod{n} \{\wgtfuncmod{n} f^2_{n + 1} \bkmod{n} (\tstatmod{n} h_n + \addf{n} - \tstatmod{n + 1} h_{n + 1})^2\} \\
= \postmod{n} \ukmod{n} \{ \bkmod{n} (\tstatmod{n} h_n + \addf{n} - \tstatmod{n + 1} h_{n + 1})^2 f^2_{n + 1} \bkmod{n} \wgtfuncmod{n} \}, 
\end{multline*}
and 
\begin{multline*}
\postmod{n} \ukmod{n} \{ \bkmod{n} \ukestvar{n} [(\tstatmod{n} h_n + \addf{n}) f_{n + 1} + \bar{f}_{n + 1}]^2 \} \\
= \postmod{n} \ukmod{n} \{ \ukestvar{n} (f_{n + 1} \tstatmod{n + 1} h_{n + 1} + \bar{f}_{n + 1})^2 \} \\
+ \postmod{n} \ukmod{n} \{ \bkmod{n} (\tstatmod{n} h_n + \addf{n} - \tstatmod{n + 1} h_{n + 1})^2 f_{n + 1}^2 \bkmod{n} \ukestvar{n} \},  
\end{multline*}
inserting the expression of $\eta_n$ given in \eqref{eq:def:etak} into \eqref{eq:as:var:first:expression} yields,   
\begin{multline} \label{eq:as:var:second:expression}
\sigma^2_{n + 1}(h_{n + 1}, f_{n + 1}, \ftd{n + 1}) \\
= \frac{\sigma^2_n(h_n, \ukmod{n} f_{n + 1}, \ukmod{n}(\addf{n} f_{n + 1} + \bar{f}_{n + 1}))}{(\postmod{n} \ukmod{n} \1_{\Xset_{n + 1}})^2}
+ \frac{\postmod{n} \adjfuncforward{n} \postmod{n} \ukmod{n} \{\wgtfuncmod{n}(f_{n + 1} \tstatmod{n + 1} h_{n + 1} + \bar{f}_{n + 1})^2 \}}{(\postmod{n} \ukmod{n} \1_{\Xset_{n + 1}})^2} \\
+ \frac{\postmod{n} \adjfuncforward{n}}{(\postmod{n} \ukmod{n} \1_{\Xset_{n + 1}})^2} \sum_{m = 0}^n \frac{\postmod{m} \ukmod{m}\{\bkmod{m}(\tstatmod{m} h_m + \addf{m} - \tstatmod{m + 1} h_{m + 1})^2 \ukmod{m + 1, n}(f_{n + 1}^2 \bkmod{n} \wgtfuncmod{n})\}}{\K^{n - m + 1} \postmod{m} \ukmod{m, n - 1} \1_{\Xset_n}} \\
+ \frac{\postmod{n} \adjfuncforward{n} \postmod{n} \ukmod{n} \{\ukestvar{n}(f_{n + 1} \tstatmod{n + 1} h_{n + 1} + \bar{f}_{n + 1})^2 \}}{(\postmod{n} \ukmod{n} \1_{\Xset_{n + 1}})^2} \\
+ \frac{\postmod{n} \adjfuncforward{n}}{(\postmod{n} \ukmod{n} \1_{\Xset_{n + 1}})^2} \sum_{m = 0}^n \frac{\postmod{m} \ukmod{m}\{\bkmod{m}(\tstatmod{m} h_m + \addf{m} - \tstatmod{m + 1} h_{m + 1})^2 \ukmod{m + 1, n}(f_{n + 1}^2 \bkmod{n} \ukestvar{n})\}}{\K^{n - m + 1} \postmod{m} \ukmod{m, n - 1} \1_{\Xset_n}}. 
\end{multline}

It remains to establish the non-recursive expression \eqref{eq:non-recursive:as:var:affine} of the asymptotic variance. We proceed by induction and assume that \eqref{eq:non-recursive:as:var:affine} holds true at time $n$. Recall the definitions of the kernels $\retrokmodmod_{m, n}$ and $\retrokmodmodnorm_{m, n}$ in \eqref{eq:def:retrokmodmod} and \eqref{eq:def:retrokmodmodnorm}, respectively. First, note that since by Lemma~\ref{eq:critical:identity}, 
$$
\postmod{0:n} \{h_n \ukmod{n} f_{n + 1} + \ukmod{n}(\addf{n} f_{n + 1} + \bar{f}_{n + 1})\} = 0, 
$$
it holds for all $m \in \intvect{0}{n - 1}$,  
\begin{align} 
\lefteqn{\retrokmodmodnorm_{m, n} \{h_n \ukmod{n} f_{n + 1} + \ukmod{n}(\addf{n} f_{n + 1} + \bar{f}_{n + 1})\}} \nonumber \hspace{10mm} \\
&= \retrokmodmod_{m, n} \{h_n \ukmod{n} f_{n + 1} + \ukmod{n}(\addf{n} f_{n + 1} + \bar{f}_{n + 1})\} \nonumber \\ 
&= \retrokmodmod_{m, n + 1} \{ h_{n + 1} f_{n + 1} + \ftd{n + 1} - \postmod{n + 1}(\tstatmod{n + 1} h_{n + 1} f_{n + 1} + \ftd{n + 1}) \} \nonumber \\
&= \retrokmodmodnorm_{m, n + 1}(h_{n + 1} f_{n + 1} + \ftd{n + 1}). \label{eq:retrokmodmodnorm:one:step:forward}
\end{align}
Further, by definition, 
\begin{equation} \label{eq:tstatmod:vs:retrokmodmodnorm}
f_{n + 1} \tstatmod{n + 1} h_{n + 1} + \bar{f}_{n + 1} = \retrokmodmodnorm_{n + 1, n + 1} (h_{n + 1} f_{n + 1} + \ftd{n + 1}).
\end{equation}
Since by \eqref{eq:retrokmodmodnorm:one:step:forward}, 
\begin{multline*}
\frac{\chi (\initwgtfunc
\retrokmodmodnorm_{0, n}
\{ h_n \ukmod{n} f_{n + 1} + \ukmod{n}(\addf{n} f_{n + 1} + \bar{f}_{n + 1})\} )^2}{(\chi \ukmod{0, n - 1} \1_{\Xset^n})^2 (\postmod{n} \ukmod{n} \1_{\Xset_{n + 1}})^2} \\
= \frac{\chi \{ \initwgtfunc
\retrokmodmodnorm_{0, n + 1}
( h_{n + 1} f_{n + 1} + \ftd{n + 1}) \}^2}{(\chi \ukmod{0, n} \1_{\Xset^{n + 1}})^2}, 
\end{multline*}
using the induction hypothesis yields   
\begin{multline*}
 \frac{\sigma^2_n(h_n, \ukmod{n} f_{n + 1}, \ukmod{n}(\addf{n} f_{n + 1} + \bar{f}_{n + 1}))}{(\postmod{n} \ukmod{n} \1_{\Xset_{n + 1}})^2} = 
  \frac{\chi \{ \initwgtfunc
  \retrokmodmodnorm_{0, n + 1}
  ( h_{n + 1} f_{n + 1} + \ftd{n + 1}) \}^2}{(\chi \ukmod{0, n} \1_{\Xset^{n + 1}})^2} \\
+ \frac{\sigma^2_n \langle (\wgtfuncmod{\ell})_{\ell = 0}^{n - 1} \rangle(h_n, \ukmod{n} f_{n + 1}, \ukmod{n}(\addf{n} f_{n + 1} + \bar{f}_{n + 1}))}{(\postmod{n} \ukmod{n} \1_{\Xset_{n + 1}})^2} \\
+ \frac{\sigma^2_n \langle (\ukestvar{\ell})_{\ell = 0}^{n - 1} \rangle(h_n, \ukmod{n} f_{n + 1}, \ukmod{n}(\addf{n} f_{n + 1} + \bar{f}_{n + 1}))}{(\postmod{n} \ukmod{n} \1_{\Xset_{n + 1}})^2}. 
\end{multline*}
Now, for every $(\varphi_\ell)_{\ell \in \nset}$, using \eqref{eq:partial:variance}, \eqref{eq:retrokmodmodnorm:one:step:forward}, and \eqref{eq:tstatmod:vs:retrokmodmodnorm}, 
\begin{multline*}
\frac{\sigma^2_n \langle (\varphi_\ell)_{\ell = 0}^{n - 1} \rangle(h_n, \ukmod{n} f_{n + 1}, \ukmod{n}(\addf{n} f_{n + 1} + \bar{f}_{n + 1}))}{(\postmod{n} \ukmod{n} \1_{\Xset_{n + 1}})^2} \\
+ \frac{\postmod{n} \adjfuncforward{n} \postmod{n} \ukmod{n} \{\varphi_n(f_{n + 1} \tstatmod{n + 1} h_{n + 1} + \bar{f}_{n + 1})^2 \}}{(\postmod{n} \ukmod{n} \1_{\Xset_{n + 1}})^2} \\
+ \frac{\postmod{n} \adjfuncforward{n}}{(\postmod{n} \ukmod{n} \1_{\Xset_{n + 1}})^2} \sum_{m = 0}^n \frac{\postmod{m} \ukmod{m}\{\bkmod{m}(\tstatmod{m} h_m + \addf{m} - \tstatmod{m + 1} h_{m + 1})^2 \ukmod{m + 1, n}(f_{n + 1}^2 \bkmod{n} \varphi_n)\}}{\K^{n - m + 1} \postmod{m} \ukmod{m, n - 1} \1_{\Xset_n}} \\
= \sum_{m = 0}^{n - 1} \frac{\postmod{m} \adjfuncforward{m} \postmod{m} \ukmod{m} \{ \varphi_m [\retrokmodmodnorm_{m + 1, n + 1} (h_{n + 1} f_{n + 1} + \ftd{n + 1})]^2\}}{(\postmod{m} \ukmod{m, n} \1_{\Xset_{n + 1}})^2} \\
+ \sum_{m = 0}^{n - 1} \sum_{\ell = 0}^m \frac{\postmod{m} \adjfuncforward{m} \postmod{\ell} \ukmod{\ell} \{\bkmod{\ell}(\tstatmod{\ell} h_{\ell} + \addf{\ell} - \tstatmod{\ell + 1} h_{\ell +1})^2 \ukmod{\ell + 1, m}( \bkmod{m} \varphi_m 
[\ukmod{m + 1, n} f_{n + 1}]^2 
)\}}{\K^{m - \ell + 1} (\postmod{\ell} \ukmod{\ell, m - 1} \1_{\Xset_m})(\postmod{m} \ukmod{m, n} \1_{\Xset_{n + 1}})^2} \\
+ \frac{\postmod{n} \adjfuncforward{n} \postmod{n} \ukmod{n} \{ \varphi_n  
[\retrokmodmodnorm_{n + 1, n + 1} (h_{n + 1} f_{n + 1} + \ftd{n + 1})]^2 
\}}{(\postmod{n} \ukmod{n} \1_{\Xset_{n + 1}})^2} \\
+ \postmod{n} \adjfuncforward{n} \sum_{\ell = 0}^n \frac{\postmod{\ell} \ukmod{\ell} \{\bkmod{\ell}(\tstatmod{\ell} h_\ell + \addf{\ell} - \tstatmod{\ell + 1} h_{\ell + 1})^2 \ukmod{\ell + 1, n}(f_{n + 1}^2 \bkmod{n} \varphi_n)\}}{\K^{n - \ell + 1} (\postmod{\ell} \ukmod{\ell, n - 1} \1_{\Xset_n}) (\postmod{n} \ukmod{n} \1_{\Xset_{n + 1}})^2}, 
\end{multline*}
which indeed equals $\sigma^2_{n + 1} \langle (\varphi_\ell)_{\ell = 0}^n \rangle(h_{n + 1}, f_{n + 1}, \ftd{n + 1})$. Consequently, \eqref{eq:as:var:second:expression} can be expressed as 
\begin{multline*}
\sigma^2_{n + 1}(h_{n + 1}, f_{n + 1}, \ftd{n + 1}) = 
\frac{\chi \{ \initwgtfunc 
\retrokmodmodnorm_{0, n + 1}( h_{n + 1} f_{n + 1} + \ftd{n + 1}) \}^2}{(\chi \ukmod{0, n} \1_{\Xset^{n + 1}})^2} \\
+ \sigma^2_{n + 1} \langle (\wgtfuncmod{\ell})_{\ell = 0}^n \rangle(h_{n + 1}, f_{n + 1}, \ftd{n + 1}) + \sigma^2_{n + 1} \langle (\ukestvar{\ell})_{\ell = 0}^n \rangle(h_{n + 1}, f_{n + 1}, \ftd{n + 1}),  
\end{multline*}
which was to be established. 

Finally, it remains to check the base case $n = 1$. Indeed, since the initial particles are drawn independently from $\init$ (and $h_0 \equiv 0$, $\tstat{0} h_0 \equiv 0$, and $\eta_0 \equiv 0$), $\sigma^2_0(h_0, f_0, \ftd{0}) = \chi \{ \initwgtfunc (\ftd{0} - \chi \ftd{0}) \}^2 / (\chi \1_{\Xset_0})^2$ for all $(f_0, \ftd{0}) \in \bmf{\Xfd_0}^2$; thus, \eqref{eq:as:var:second:expression} provides, since $\chi \ukmod{0}(\addf{0} f_1 + \bar{f}_1) = 0$ and $\ukmod{0} (\addf{0} f_1 + \bar{f}_1) = \retrokmodmodnorm_{0, 1}(h_1 f_1 + \ftd{1})$,  
\begin{multline} 
\sigma^2_1(h_1, f_1, \ftd{1}) \eqdef \frac{\chi \{  \initwgtfunc \retrokmodmodnorm_{0, 1}(h_1 f_1 + \ftd{1}) \}^2}{(\chi \ukmod{0} \1_{\Xset_1})^2} + 
\frac{\post{0} \adjfuncforward{0} \post{0} \ukmod{0} \{\wgtfuncmod{0}(f_1 \tstatmod{1} h_1 + \bar{f}_1)^2 \}}{(\post{0} \ukmod{0} \1_{\Xset_1})^2} \\
+ \K^{-1} \frac{\post{0} \adjfuncforward{0} \post{0} \ukmod{0} \{\bkmod{0}(\addf{0} - \tstatmod{1} h_1)^2 f_1^2 \bkmod{0} \wgtfuncmod{0} \}}{(\post{0} \ukmod{0} \1_{\Xset_1})^2} 
+ \frac{\post{0} \adjfuncforward{0} \post{0} \ukmod{0} \{ \ukestvar{0} (f_1 \tstatmod{1} h_1 + \bar{f}_1)^2 \}}{(\post{0} \ukmod{0} \1_{\Xset_1})^2} \\
+ \K^{-1} \frac{\post{0} \adjfuncforward{0} \post{0} \ukmod{0} \{\bkmod{0}(\addf{0} - \tstatmod{1} h_1)^2 f_1^2 \bkmod{0} \ukestvar{0} \}}{(\post{0} \ukmod{0} \1_{\Xset_1})^2},  
\end{multline}
and by noting that $f_1 \tstatmod{1} h_1 + \bar{f}_1 = \retrokmodmodnorm_{1, 1} (h_1 f_1 + \ftd{1})$ we may conclude that the previous quantity coincides with \eqref{eq:non-recursive:as:var:affine} for $n = 1$. This completes the proof. 
\end{proof}


\section{Proof of Theorem~\ref{thm:bias:bound}}
\label{sec:proofs}
We preface the proof of Theorem~\ref{thm:bias:bound} by a few technical lemmas. For each $n \in \nset$ and $m \in \intvect{0}{n}$, define the kernel  
\begin{equation} \label{eq:def:retrokmod}
    \retrokmod_{m, n}(x_m', \rmd x_{0:n}) \eqdef \delta_{x_m'}(\rmd x_m) \, \tstatmod{m}(x_m, \rmd x_{0:m - 1}) \prod_{\ell = m}^{n - 1} \uk{\ell}(x_\ell, \rmd x_{\ell + 1}), 
\end{equation}
on $\Xset_m \times \Xfd^n$ (where $\tstatmod{m}$ is defined in \eqref{eq:skew:backward:law}). 

\begin{lemma} \label{lem:retro:prospective:id}
For all $m \in \intvect{0}{n}$ and $h \in \bmf{\Xfd^n}$, 
\begin{equation} \label{eq:retro:prospective:id}
\postmod{0:m} \uk{m, n - 1} h = \postmod{m} \retrokmod_{m, n} h.  
\end{equation}
\end{lemma}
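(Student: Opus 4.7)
The plan is to derive the identity directly from the backward decomposition of the joint-smoothing distribution in Lemma~\ref{lem:reversibility}(ii), applied not to the original model but to the \emph{skew} Feynman--Kac model formed by $\chi$ and $(\ukmod{\ell})_{\ell \in \nset}$. The proof in Section~\ref{sec:proof:lem:reversibility} uses only the abstract structure of \eqref{eq:def:post}, so it carries over verbatim and yields $\postmod{0:m} g = \postmod{m} \tstatmod{m} g$ for every $g \in \bmf{\Xfd^m}$; this is the one ingredient I will invoke.

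Given $h \in \bmf{\Xfd^n}$, I would first introduce the auxiliary function
\[
\bar h : \Xset^m \ni x_{0:m} \mapsto \uk{m, n - 1} h(x_{0:m}) = \int \prod_{\ell = m}^{n - 1} \uk{\ell}(x_\ell, \rmd x_{\ell + 1}) \, h(x_{0:n}),
\]
where the second equality follows from the Dirac factor $\delta_{x_{0:m}'}(\rmd x_{0:m})$ in the definition \eqref{eq:def:uk:products}. Since each $\uk{\ell}$ is a bounded kernel, $\bar h \in \bmf{\Xfd^m}$, and the left-hand side of \eqref{eq:retro:prospective:id} is by construction equal to $\postmod{0:m} \bar h$.

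Applying the skew backward decomposition to $\bar h$ then gives $\postmod{0:m} \bar h = \postmod{m}(\tstatmod{m} \bar h)$. Substituting the definition of $\bar h$ and reordering integrals (by Fubini for kernels, legitimate in view of boundedness) yields
\[
\tstatmod{m} \bar h(x_m) = \int \tstatmod{m}(x_m, \rmd x_{0:m - 1}) \int \prod_{\ell = m}^{n - 1} \uk{\ell}(x_\ell, \rmd x_{\ell + 1}) \, h(x_{0:n}) = \retrokmod_{m, n} h(x_m),
\]
the last equality being exactly the definition \eqref{eq:def:retrokmod} once the outer Dirac mass $\delta_{x_m'}(\rmd x_m)$ is absorbed into the argument $x_m$. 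Integrating against $\postmod{m}$ then delivers \eqref{eq:retro:prospective:id}.

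The only genuine subtlety is bookkeeping: the coordinate $x_m$ fixed by the Dirac in $\retrokmod_{m, n}$ plays a double role, serving simultaneously as the starting point of the backward kernel $\tstatmod{m}$ \emph{and} as the initial state of the forward product $\prod_{\ell = m}^{n - 1} \uk{\ell}$, so one must track the labels carefully to see that the two sides agree as measures on $\Xfd^n$. Apart from this, no induction, mixing assumption, or further measure-theoretic machinery is needed --- the lemma is a one-line consequence of the skew backward decomposition combined with the observation that the forward kernels in $\retrokmod_{m, n}$ are the \emph{true} kernels $(\uk{\ell})_\ell$, which are precisely the ones appearing in $\uk{m, n - 1}$ on the left-hand side.
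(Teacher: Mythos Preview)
Your proposal is correct and follows exactly the approach the paper itself takes: the paper's proof is the one-liner ``apply Lemma~\ref{lem:reversibility}(ii) to the skew model and use definition \eqref{eq:def:retrokmod},'' and you have simply unpacked this by introducing the intermediate function $\bar h$ and verifying that $\tstatmod{m}\bar h = \retrokmod_{m,n} h$. There is no difference in strategy, only in the level of detail.
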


\begin{proof}
The result is obtained easily by applying Lemma~\ref{lem:reversibility}(ii) to the skew model and using definition \eqref{eq:def:retrokmod}.
\end{proof}

The following probability measures play a key role in the following: for $h \in \bmf{\Xfd_m}$, 
\begin{align}
\noshift_{m, n} h &\eqdef \frac{\postmod{m} (h \times \retrokmod_{m, n} \1_{\Xset^n})}{\postmod{m} \retrokmod_{m, n} \1_{\Xset^n}}, \nonumber \\
\shiftfwd_{m, n} h &\eqdef \frac{\postmod{m} (h \times \ukmod{m} \retrokmod_{m + 1, n} \1_{\Xset^n})}{\postmod{m} \ukmod{m} \retrokmod_{m + 1, n} \1_{\Xset^n}}, \nonumber \\
\shiftbwd_{m, n} h &\eqdef \frac{\postmod{m - 1} \uk{m - 1}(h \times \retrokmod_{m, n} \1_{\Xset^n})}{\postmod{m - 1} \retrokmod_{m - 1, n} \1_{\Xset^n}}.  \nonumber 
\end{align}
In addition, for each $k \in \intvect{0}{n - 1}$, let 
\begin{equation} \label{eq:def:addf}
\addf[n]{k} : \Xset^n \ni x_{0:n} \mapsto \addf{k}(x_k, x_{k + 1}).  
\end{equation}
denote the extension of $\addf{k}$ to $\Xset^n$. 

\begin{lemma} \label{lemma:three:identities}
Let $n \in \nset$ and $m \in \intvect{1}{n}$. Then the following holds true for all $k \in \intvect{0}{n - 1}$. 
\begin{itemize}
\item[(i)]  
$
\displaystyle \noshift_{m, n} \left( \frac{\retrokmod_{m, n} \addf[n]{k}}{\retrokmod_{m, n} \1_{\Xset^n}} \right) = \frac{\postmod{m} \retrokmod_{m, n} \addf[n]{k}}{\postmod{m} \retrokmod_{m, n} \1_{\Xset^n}}$ \quad for $m \in \intvect{1}{n}$, 
\item[(ii)]  
$
\displaystyle \shiftfwd_{m, n} \left( \frac{\retrokmod_{m, n} \addf[n]{k}}{\retrokmod_{m, n} \1_{\Xset^n}} \right) = \frac{\postmod{m + 1} \retrokmod_{m + 1, n} \addf[n]{k}}{\postmod{m + 1} \retrokmod_{m + 1, n} \1_{\Xset^n}}
$ \quad for $m \in \intvect{k + 1}{n}$, and 
\item[(iii)] 
$
\displaystyle \shiftbwd_{m, n} \left( \frac{\retrokmod_{m, n} \addf[n]{k}}{\retrokmod_{m, n} \1_{\Xset^n}} \right) = \frac{\postmod{m - 1} \retrokmod_{m - 1, n} \addf[n]{k}}{\postmod{m - 1} \retrokmod_{m - 1, n} \1_{\Xset^n}}
$ \quad for all $m \in \intvect{1}{k}$. 
\end{itemize}
\end{lemma}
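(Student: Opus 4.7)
The plan is to unpack the definitions of $\noshift_{m, n}$, $\shiftfwd_{m, n}$, and $\shiftbwd_{m, n}$ and to exploit the following structural fact about the hybrid kernel $\retrokmod_{m, n}$: its backward half $\tstatmod{m}(x_m, \rmd x_{0:m - 1})$ is a Markov kernel (so it integrates constants to one), while its forward half $\prod_{\ell = m}^{n - 1} \uk{\ell}(x_\ell, \rmd x_{\ell + 1})$ acts only on $x_m, \ldots, x_n$. Two elementary consequences will be used throughout. First, $\retrokmod_{m, n} \1_{\Xset^n}(x_m) = \uk{m, n - 1} \1_{\Xset^n}(x_m)$, a function of $x_m$ alone. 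Second, the increment $\addf[n]{k}(x_{0:n}) = \addf{k}(x_k, x_{k + 1})$ lies entirely in the backward half when $k \leq m - 1$ or entirely in the forward half when $k \geq m$, giving the factorisations $\retrokmod_{m, n} \addf[n]{k}(x_m) = \tstatmod{m} \addf[m]{k}(x_m) \cdot \uk{m, n - 1} \1_{\Xset^n}(x_m)$ in the first case and $\retrokmod_{m, n} \addf[n]{k}(x_m) = \uk{m, n - 1} \addf[n]{k}(x_m)$ in the second.

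Identity (i) then follows immediately from the definition of $\noshift_{m, n}$: plugging in $h = \retrokmod_{m, n} \addf[n]{k} / \retrokmod_{m, n} \1_{\Xset^n}$ causes $\retrokmod_{m, n} \1_{\Xset^n}$ to cancel inside the numerator and produces exactly the stated ratio. Identity (iii), where $m \leq k$, follows in a similar spirit from the second factorisation: both $\retrokmod_{m, n} \addf[n]{k}$ and $\retrokmod_{m - 1, n} \addf[n]{k}$ collapse onto $\uk{m, n - 1} \addf[n]{k}$ and $\uk{m - 1, n - 1} \addf[n]{k}$, respectively. Inserting these into the definition of $\shiftbwd_{m, n}$ and using the semigroup identity $\uk{m - 1} \uk{m, n - 1} = \uk{m - 1, n - 1}$, the numerator becomes $\postmod{m - 1}(\uk{m - 1, n - 1} \addf[n]{k}) = \postmod{m - 1} \retrokmod_{m - 1, n} \addf[n]{k}$, while the denominators agree by the backward-collapses-to-$\1$ observation.

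The substantive work lies in (ii). Here $k \leq m - 1$, and the first factorisation collapses the ratio $\retrokmod_{m, n} \addf[n]{k} / \retrokmod_{m, n} \1_{\Xset^n}$ to the $x_m$-only function $\tstatmod{m} \addf[m]{k}$, so the left-hand side becomes $\shiftfwd_{m, n}(\tstatmod{m} \addf[m]{k})$. On the right-hand side, the definition \eqref{eq:skew:backward:law} of $\tstatmod{m + 1}$ gives the one-step recursion $\tstatmod{m + 1} \addf[m + 1]{k}(x_{m + 1}) = \int \bkmod{m}(x_{m + 1}, \rmd x_m) \tstatmod{m} \addf[m]{k}(x_m)$, which I will use to rewrite $\postmod{m + 1} \retrokmod_{m + 1, n} \addf[n]{k}$. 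The key step is then Lemma~\ref{lem:reversibility}(i) applied to the skew model $(\postmod{m}, \ukmod{m}, \bkmod{m})$, which swaps $\postmod{m} \ukmod{m}(\rmd x_{m + 1}) \bkmod{m}(x_{m + 1}, \rmd x_m)$ for $\postmod{m}(\rmd x_m) \ukmod{m}(x_m, \rmd x_{m + 1})$. Combined with $\postmod{m + 1} \propto \postmod{m} \ukmod{m}$, this brings the right-hand side into precisely the form $\shiftfwd_{m, n}(\tstatmod{m} \addf[m]{k})$. The main obstacle is pure bookkeeping: tracking which coordinates are integrated by which half of $\retrokmod_{m, n}$ and orienting the skew-reversibility identity correctly. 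No new analytical input is required beyond the skew version of Lemma~\ref{lem:reversibility}(i).
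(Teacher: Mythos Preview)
Your proposal is correct and follows essentially the same route as the paper. For (i) and (iii) your argument is identical to the paper's; for (ii) the paper also collapses the ratio $\retrokmod_{m, n} \addf[n]{k} / \retrokmod_{m, n} \1_{\Xset^n}$ to the backward-chain expression $\bkmod{m-1}\cdots\bkmod{k}\addf{k}$ and then applies the skew version of Lemma~\ref{lem:reversibility}(i), the only cosmetic difference being that the paper runs the reversibility swap starting from the definition of $\shiftfwd_{m,n}$ rather than from the right-hand side.
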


\begin{proof}
The identity (i) follows straightforwardly by the definition of $\noshift_{m, n}$. 

We hence turn to (ii), which is established by first noting that for all $m \in \intvect{k + 1}{n}$, 
$$
\frac{\retrokmod_{m, n} \addf[n]{k}}{\retrokmod_{m, n} \1_{\Xset^n}} 
= \bkmod[\postmod{m - 1}]{m - 1} \cdots \bkmod[\postmod{k + 1}]{k + 1} (\bkmod[\postmod{k}]{k} \addf{k}).  
$$
Now, since, by applying Lemma~\ref{lem:reversibility}(i) to the skew model,   
\begin{align}
\shiftfwd_{m, n} h 
&= \iint \frac{\postmod{m}(\rmd x_m) \, \ukmod{m}(x_m, \rmd x_{m + 1}) \, h(x_m) \retrokmod_{m + 1, n} \1_{\Xset^n}(x_{m + 1})}{\postmod{m} \ukmod{m} \retrokmod_{m + 1, n} \1_{\Xset^n}} \nonumber \\
&= \iint \frac{\postmod{m} \ukmod{m}(\rmd x_{m + 1}) \, \bkmod[\postmod{m}]{m}(x_{m + 1}, \rmd x_m) \, h(x_m) \retrokmod_{m + 1, n} \1_{\Xset^n}(x_{m + 1})}{\postmod{m} \ukmod{m} \retrokmod_{m + 1, n} \1_{\Xset^n}} \nonumber \\
&= \frac{\postmod{m + 1} (\bkmod[\postmod{m}]{m} h \times \retrokmod_{m + 1, n} \1_{\Xset^n})}{\postmod{m + 1} \retrokmod_{m + 1, n} \1_{\Xset^n}}, \nonumber 
\end{align}
we may establish the identity by proceeding like  
\begin{align*}
\lefteqn{\shiftfwd_{m, n} \left( \frac{\retrokmod_{m, n} \addf[n]{k}}{\retrokmod_{m, n} \1_{\Xset^n}} \right)} \\
&= \iint \frac{\postmod{m + 1}(\rmd x_{m + 1}) \, \bkmod[\postmod{m}]{m}(x_{m + 1}, \rmd x_m) \, \bkmod[\postmod{m - 1}]{m - 1}  
\cdots \bkmod[\postmod{k + 1}]{k + 1}(\bkmod[\postmod{k}]{k} \addf{k})(x_m) 
\retrokmod_{m + 1, n} \1_{\Xset^n}(x_{m + 1})}{\postmod{m + 1} \retrokmod_{m + 1, n} \1_{\Xset^n}} \\
&= \frac{\postmod{m + 1} \retrokmod_{m + 1, n} \addf[n]{k}}{\postmod{m + 1} \retrokmod_{m + 1, n} \1_{\Xset^n}}.  
\end{align*}

Finally, to check (iii), note that for all $m \in \intvect{1}{k}$, 
$$
\uk{m - 1} \retrokmod_{m, n} \addf[n]{k} = \retrokmod_{m - 1, n} \addf[n]{k}. 
$$
Thus, in this case 
$$
\shiftbwd_{m, n} \left( \frac{\retrokmod_{m, n} \addf[n]{k}}{\retrokmod_{m, n} \1_{\Xset^n}} \right) = \frac{\postmod{m - 1} \uk{m - 1} \retrokmod_{m, n} \addf[n]{k}}{\postmod{m - 1} \retrokmod_{m - 1, n} \1_{\Xset^n}} = \frac{\postmod{m - 1} \retrokmod_{m - 1, n} \addf[n]{k}}{\postmod{m - 1} \retrokmod_{m - 1, n} \1_{\Xset^n}}. 
$$
\end{proof}

\begin{lemma} \label{lem:geo:bound}
Assume \hypref{assum:strong:mixing}. Then for all $n \in \nset$, $m \in \intvect{0}{n}$, $k \in \intvect{0}{n - 1}$, and $(\lambda, \lambda') \in \probmeas{\Xfd_m}^2$, 
$$
\left|\frac{\lambda \retrokmod_{m, n} \addf[n]{k}}{\lambda \retrokmod_{m, n} \1_{\Xset^n}} - \frac{\lambda' \retrokmod_{m, n} \addf[n]{k}}{\lambda' \retrokmod_{m, n} \1_{\Xset^n}} \right| \leq \| \addf{k} \|_\infty \rho^{|k - m| - 1}. 
$$
\begin{proof}
First, assume that $m \leq k$; then note that for all $x_m \in \Xset_m$, 
\begin{multline} \label{eq:retrokmodnorm:vs:forward:kernel}
\frac{\retrokmod_{m, n} \addf[n]{k}(x_m)}{\retrokmod_{m, n} \1_{\Xset^n}(x_m)} - \frac{\retrokmod_{m, n} \addf[n]{k}(x_m)}{\retrokmod_{m, n} \1_{\Xset^n}(x_m)}
 = \fk{m}{n} \cdots \fk{k - 1}{n} (\fk{k}{n} \addf{k})(x_m) \\ 
- \fk{m}{n} \cdots \fk{k - 1}{n} (\fk{k}{n} \addf{k})(x_m),    
\end{multline}

where we have introduced the \emph{forward kernels}
$$
\fk{m}{n} h (x_m) \eqdef \frac{\uk{m}(h \times \retrokmod_{m  + 1, n} \1_{\Xset^n})(x_m)}{\retrokmod_{m, n} \1_{\Xset^n} (x_m)}, \quad x_m \in \Xset_m, \quad h \in \bmf{\Xfd_{m + 1}}. 
$$
Under \hypref{assum:strong:mixing}, each forward kernel satisfies a global Doeblin condition in the form of the uniform lower bound 
$$
\fk{m}{n} h (x_m) \geq \frac{\udlow}{\udup} \mu_{m, n} h,  
$$
where we have defined the probability measure 
$$
\mu_{m, n} h \eqdef \frac{\mu_{m + 1}(h \times \retrokmod_{m  + 1, n} \1_{\Xset^n})}{\mu_{m + 1} \retrokmod_{m  + 1, n}  \1_{\Xset^n}}, \quad h \in \bmf{\Xfd_{m + 1}} 
$$
(where $\mu_{m + 1}$ is the reference measure introduced in Section~\ref{sec:model}). Thus, by standard results for uniformly minorised Markov chains (see, \emph{e.g.}, \cite[Lemma~4.3.13]{Cappe:2005:IHM:1088883}), the Dobrushin coefficient of each $\fk{m}{n}$ is bounded by $\rho = 1 - \udlow / \udup$. Thus, \eqref{eq:retrokmodnorm:vs:forward:kernel} implies that 
\begin{align*}
\left| \frac{\lambda \retrokmod_{m, n} \addf[n]{k}}{\lambda \retrokmod_{m, n} \1_{\Xset^n}} - \frac{\lambda' \retrokmod_{m, n} \addf[n]{k}}{\lambda' \retrokmod_{m, n} \1_{\Xset^n}} \right| &= \left| (\lambda_{m, n} - \lambda_{m, n}') \fk{m}{n} \cdots \fk{k - 1}{n} (\fk{k}{n} \addf{k})
\right| \\ 
&\leq \rho^{k - m} \| \fk{k}{n} \addf{k} \|_\infty \leq \rho^{k - m} \| \addf{k} \|_\infty, 
\end{align*}
where for $h \in \bmf{\Xfd_m}$, 
$$
\lambda_{m, n} h \eqdef \frac{\lambda(h \times \retrokmod_{m, n} \1_{\Xset^n})}{\lambda \retrokmod_{m, n} \1_{\Xset^n}}, \quad 
\lambda_{m, n}' h \eqdef \frac{\lambda'(h \times \retrokmod_{m, n} \1_{\Xset^n})}{\lambda' \retrokmod_{m, n} \1_{\Xset^n}}. 
$$
Now, assume that $m > k$; then note that for all $x_m \in \Xset_m$, 
\begin{multline} \label{eq:retrokmodnorm:vs:backward:kernel}
\frac{\retrokmod_{m, n} \addf[n]{k}(x_m)}{\retrokmod_{m, n} \1_{\Xset^n}(x_m)} - \frac{\retrokmod_{m, n} \addf[n]{k}(x_m)}{\retrokmod_{m, n} \1_{\Xset^n}(x_m)} \\
 = \bkmod{m - 1} \cdots \bkmod{k + 1}(\bkmod{k} \addf{k})(x_m) - \bkmod{m - 1} \cdots \bkmod{k + 1}(\bkmod{k} \addf{k})(x_m).     
\end{multline}
Under \hypref{assum:strong:mixing}, also each backward kernel satisfies a Doeblin condition, namely   
$$
\bkmod{m} h (x_{m + 1}) \geq \frac{\udlow}{\udup} \postmod{m} h,  
$$
with the marginal $\postmod{m}$ playing the role as minorising measure. Thus, the backward kernel Dobrushin coefficients are bounded by the same constant $\rho$, implying, via \eqref{eq:retrokmodnorm:vs:backward:kernel}, that 
\begin{align*}
\left| \frac{\lambda \retrokmod_{m, n} \addf[n]{k}}{\lambda \retrokmod_{m, n} \1_{\Xset^n}} - \frac{\lambda' \retrokmod_{m, n} \addf[n]{k}}{\lambda' \retrokmod_{m, n} \1_{\Xset^n}} \right| 
&= \left| (\lambda_{m, n} - \lambda_{m, n}') \bkmod{m - 1} \cdots \bkmod{k + 1}(\bkmod{k} \addf{k}) \right| \\
&\leq \rho^{m - k - 1} \| \addf{k} \|_\infty. 
\end{align*}
This completes the proof. 
\end{proof}
\end{lemma}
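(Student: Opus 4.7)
My approach splits into the cases $m \leq k$ and $m > k$, exploiting that $\retrokmod_{m, n}$ in \eqref{eq:def:retrokmod} combines a backward chain $\tstatmod{m}$ on indices $0, \dots, m - 1$ with the forward product $\uk{m} \cdots \uk{n - 1}$ on indices $m, \dots, n - 1$, and that $\addf[n]{k}(x_{0:n}) = \addf{k}(x_k, x_{k + 1})$ depends only on two consecutive coordinates. In each case the normalized ratio collapses to an iterated Markov-kernel expression acting on $\addf{k}$, and \hypref{assum:strong:mixing} supplies a uniform Doeblin minorization that yields geometric contraction via the Dobrushin coefficient.

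For $m \leq k$, I introduce the forward kernels
$$\fk{\ell}{n} h(x_\ell) \eqdef \uk{\ell}(h \times \retrokmod_{\ell + 1, n} \1_{\Xset^n})(x_\ell)/\retrokmod_{\ell, n} \1_{\Xset^n}(x_\ell).$$
These are Markov kernels because $\retrokmod_{\ell, n} \1_{\Xset^n} = \uk{\ell}(\retrokmod_{\ell + 1, n} \1_{\Xset^n})$, and a direct downward induction yields the factorization $\retrokmod_{m, n} \addf[n]{k}/\retrokmod_{m, n} \1_{\Xset^n} = \fk{m}{n} \cdots \fk{k - 1}{n}(\fk{k}{n} \addf{k})$. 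The bounds $\ud{\ell} \in [\udlow, \udup]$ give the Doeblin inequality $\fk{\ell}{n}(x_\ell, \cdot) \geq (\udlow/\udup) \mu_{\ell, n}$ for a suitable probability measure $\mu_{\ell, n}$, so each $\fk{\ell}{n}$ has Dobrushin coefficient at most $\rho = 1 - \udlow/\udup$. For $m > k$ the forward block integrates to the constant factor $\retrokmod_{m, n} \1_{\Xset^n}(x_m)$ (cancelling the normalization), and marginalizing $\tstatmod{m}$ down to $(x_k, x_{k + 1})$ gives $\retrokmod_{m, n} \addf[n]{k}/\retrokmod_{m, n} \1_{\Xset^n} = \bkmod{m - 1} \cdots \bkmod{k + 1}(\bkmod{k} \addf{k})$, where $\bkmod{k} \addf{k}(x_{k + 1}) \eqdef \int \bkmod{k}(x_{k + 1}, \rmd x_k) \addf{k}(x_k, x_{k + 1})$. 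The skew bounds $\udmod{\ell} \in [\udlow, \udup]$ analogously yield $\bkmod{\ell}(x_{\ell + 1}, \cdot) \geq (\udlow/\udup) \postmod{\ell}$, so these backward kernels also contract with Dobrushin coefficient at most $\rho$.

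To finish, I reweight $\lambda$ and $\lambda'$ into probability measures $\lambda_{m, n}$ and $\lambda'_{m, n}$ proportional to $\retrokmod_{m, n} \1_{\Xset^n} \cdot \lambda$ and $\retrokmod_{m, n} \1_{\Xset^n} \cdot \lambda'$, respectively, so that the difference of the two ratios in the lemma equals $(\lambda_{m, n} - \lambda'_{m, n})$ applied to the iterated forward (resp.\ backward) operator. Standard contraction of the Dobrushin coefficient under composition gives decay at rate $\rho^{k - m}$ for $m \leq k$ (from composing $k - m$ forward kernels) and $\rho^{m - k - 1}$ for $m > k$ (from composing $m - k - 1$ backward kernels), both dominated by $\rho^{|k - m| - 1}$ since $\rho < 1$. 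Combined with the trivial estimates $\|\fk{k}{n} \addf{k}\|_\infty \leq \|\addf{k}\|_\infty$ and $\|\bkmod{k} \addf{k}\|_\infty \leq \|\addf{k}\|_\infty$, this yields the bound stated in the lemma.

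The main obstacle I anticipate is the careful derivation of the two factorization identities. Because $\addf{k}$ depends on the pair $(x_k, x_{k + 1})$ rather than a single coordinate, the last step of the iteration ($\fk{k}{n} \addf{k}$ respectively $\bkmod{k} \addf{k}$) must be defined and handled slightly differently from the preceding Markov steps, and the induction has to accommodate this asymmetry; once this is in place, the Doeblin minorization and the geometric contraction are routine applications of classical results for uniformly minorized Markov chains, as cited in \cite[Lemma~4.3.13]{Cappe:2005:IHM:1088883}.
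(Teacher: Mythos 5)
Your proposal follows essentially the same route as the paper: the same forward kernels $\fk{m}{n}$ and factorization for $m \leq k$, the same backward-kernel composition for $m > k$, the same Doeblin minorizations derived from \hypref{assum:strong:mixing}, and the same reweighted measures $\lambda_{m,n}, \lambda'_{m,n}$ combined with the Dobrushin contraction. The argument is correct, and your closing remark about the asymmetric last step $\fk{k}{n}\addf{k}$ (resp.\ $\bkmod{k}\addf{k}$) is precisely the point the paper handles implicitly.
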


\begin{lemma} \label{lem:diff:bound} 
Assume \hypref{assum:strong:mixing} and \hypref{assum:bias:bound}. Then the following holds. 
\begin{itemize}
\item[(i)] For all $n \in \nset$, $m \in \intvect{0}{n}$, $\precpar \in \precparsp$, and $h \in \bmf{\Xfd^n}$,   
$$
\left| \noshift_{m, n} h - \shiftbwd_{m, n} h \right| \vee \left| \shiftfwd_{m, n} h - \noshift_{m, n} h \right| \leq 2 c \precpar \frac{\udup}{\udlow^2} \| h \|_\infty.   
$$
\item[(ii)] For all $n \in \nset$, $m \in \intvect{0}{n - 1}$, and $\precpar \in \precparsp$,
$$
\left| \frac{\postmod{m + 1} \retrokmod_{m + 1, n} \addf[n]{m}}{\postmod{m + 1} \retrokmod_{m + 1, n} \1_{\Xset^n}} - \frac{\postmod{m} \retrokmod_{m, n} \addf[n]{m}}{\postmod{m} \retrokmod_{m, n} \1_{\Xset^n}} \right| \leq 2 c \precpar \frac{\udup}{\udlow^2} \| h_m \|_\infty. 
$$
\end{itemize}
In both cases, $c$ is the constant in \hypref{assum:bias:bound}. 
\end{lemma}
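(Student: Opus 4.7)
The plan is to reduce each of the three differences in (i) and (ii) to a single swap of one $\uk{\ell}$ for its skew counterpart $\ukmod{\ell}$, after which \hypref{assum:bias:bound} applies directly. The starting observation is that, because the forward part of $\retrokmod_{m,n}$ is built from the \emph{true} kernels,
$$
\retrokmod_{m,n}\1_{\Xset^n} = \uk{m} \retrokmod_{m+1,n}\1_{\Xset^n}, \qquad \retrokmod_{m-1,n}\1_{\Xset^n} = \uk{m-1} \retrokmod_{m,n}\1_{\Xset^n},
$$
whereas the marginal flow $(\postmod{m})_{m \in \nset}$ is driven by the skew kernels, $\postmod{m} = \postmod{m-1}\ukmod{m-1} / (\postmod{m-1}\ukmod{m-1}\1_{\Xset_m})$. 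Writing $R \eqdef \retrokmod_{m,n}\1_{\Xset^n}$ and $R' \eqdef \retrokmod_{m+1,n}\1_{\Xset^n}$, the three quantities of interest then take the parallel form
\begin{align*}
\noshift_{m,n} h &= \frac{\postmod{m-1}\ukmod{m-1}(hR)}{\postmod{m-1}\ukmod{m-1} R} = \frac{\postmod{m}(h \cdot \uk{m} R')}{\postmod{m}\uk{m} R'}, \\
\shiftfwd_{m,n} h &= \frac{\postmod{m}(h \cdot \ukmod{m} R')}{\postmod{m}\ukmod{m} R'}, \qquad \shiftbwd_{m,n} h = \frac{\postmod{m-1}\uk{m-1}(hR)}{\postmod{m-1}\uk{m-1} R},
\end{align*}
so that $\shiftfwd - \noshift$ and $\noshift - \shiftbwd$ each amount to interchanging exactly one $\uk{\ell}$ and $\ukmod{\ell}$.

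To control such a swap I will use the elementary identity $A/B - C/D = (A-C)/B - (C/D)(B-D)/B$, which combined with $|C/D| \leq \|h\|_\infty$ gives $|A/B - C/D| \leq (|A-C| + \|h\|_\infty |B-D|)/B$. In our setting $A-C$ and $B-D$ are of the form $\nu(\ukmod{\ell} - \uk{\ell})\varphi$ for a probability measure $\nu$ and $\varphi \in \{hR, R, R', hR'\}$, so \hypref{assum:bias:bound} bounds them by $c\precpar\|h\|_\infty\|\varphi\|_\infty$ and $c\precpar\|\varphi\|_\infty$ respectively; the whole gap then collapses to $2c\precpar\|h\|_\infty \|\varphi\|_\infty / B$. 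Under \hypref{assum:strong:mixing}, each $\mu_\ell$ is a probability measure and the pointwise inequalities $\udlow \mu_{\ell+1} g \leq \uk{\ell} g \leq \udup \mu_{\ell+1} g$ hold (and likewise with $\ukmod{\ell}$ in place of $\uk{\ell}$); iterating these along the product defining $R$ or $R'$, and applying the lower bound once more at the outermost kernel sitting inside $B$, yields the uniform quotient estimate $\|\varphi\|_\infty / B \leq \udup / \udlow^2$, which proves (i).

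For (ii), the same mechanism applies once both ratios are written on a common footing. Unrolling $\retrokmod_{m,n}$, using that $\tstatmod{m}$ is a Markov kernel and $\addf[n]{m}$ depends only on $(x_m, x_{m+1})$, one directly obtains
$$
\frac{\postmod{m}\retrokmod_{m,n}\addf[n]{m}}{\postmod{m}\retrokmod_{m,n}\1_{\Xset^n}} = \frac{\postmod{m}\uk{m}(\addf{m} \cdot G)}{\postmod{m}\uk{m} G}, \qquad G \eqdef \uk{m+1, n-1}\1_{\Xset^n}.
$$
For the other ratio, combining $\postmod{m+1} = \postmod{m}\ukmod{m} / (\postmod{m}\ukmod{m}\1)$ with Lemma~\ref{lem:reversibility}(i) applied to the skew model (used to move $\addf{m}(x_m, x_{m+1})$ from the $\bkmod{m}$ side onto the $\ukmod{m}$ side inside a joint integral against $\postmod{m}\ukmod{m}(\rmd x_{m+1})\bkmod{m}(x_{m+1}, \rmd x_m)$) yields
$$
\frac{\postmod{m+1}\retrokmod_{m+1,n}\addf[n]{m}}{\postmod{m+1}\retrokmod_{m+1,n}\1_{\Xset^n}} = \frac{\postmod{m}\ukmod{m}(\addf{m} \cdot G)}{\postmod{m}\ukmod{m} G}.
$$
Applying the swap estimate from the previous paragraph with $(h, \varphi) \leftrightarrow (\addf{m}, G)$ then produces the claimed bound. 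The main technical subtlety throughout is purely bookkeeping: $\retrokmod_{m,n}$ is a hybrid object, built from skew backward kernels $\tstatmod{m}$ and true forward kernels $\uk{\ell}$, so one must carefully track which family of transitions appears at each position in order to identify the unique $\uk{\ell}$ that is being replaced by $\ukmod{\ell}$ at each step.
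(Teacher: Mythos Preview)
Your proof is correct and follows essentially the same approach as the paper: both arguments identify each of the three differences as a single swap of one $\uk{\ell}$ for $\ukmod{\ell}$, apply \hypref{assum:bias:bound} to bound the two resulting numerator/denominator perturbations, and invoke the strong-mixing bounds of \hypref{assum:strong:mixing} to control the remaining ratio $\|R\|_\infty / B$ (this is exactly the paper's estimate \eqref{eq:retrokmod:ratio:bd}). The only cosmetic difference is that your decomposition $A/B - C/D = (A-C)/B - (C/D)(B-D)/B$ places $B$ in the denominator, whereas the paper's version places the other denominator there; under \hypref{assum:strong:mixing} both satisfy the same lower bound $\udlow^2 \mu_{m+1}\retrokmod_{m+1,n}\1_{\Xset^n}$, so this is immaterial. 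Your treatment of (ii) is likewise identical to the paper's, including the use of Lemma~\ref{lem:reversibility}(i) for the skew model to rewrite the $(m+1)$-ratio with $\ukmod{m}$ in place of $\bkmod{m}$.
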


\begin{proof}
We start with (i). Combining the decomposition 
\begin{multline*}
 \noshift_{m, n} h - \shiftbwd_{m, n} h = \varphi_{m, n}^\precpar h \left( \frac{\postmod{m - 1} \uk{m - 1} \retrokmod_{m, n} \1_{\Xset^n} - \postmod{m - 1} \ukmod{m - 1} \retrokmod_{m, n} \1_{\Xset^n}}{\postmod{m - 1} \retrokmod_{m - 1, n} \1_{\Xset^n}} \right) \\
+ \frac{\postmod{m - 1} \ukmod{m - 1}(h \times \retrokmod_{m, n} \1_{\Xset^n}) - \postmod{m - 1} \uk{m - 1}(h \times \retrokmod_{m, n} \1_{\Xset^n})}{\postmod{m - 1} \retrokmod_{m - 1, n} \1_{\Xset^n}} 
\end{multline*}
with \hypref{assum:bias:bound} provides the bound 
$$
\left| \noshift_{m, n} h - \shiftbwd_{m, n} h \right| \leq 2 c \precpar \frac{\| \retrokmod_{m, n} \1_{\Xset^n} \|_\infty  \| h \|_\infty}{\postmod{m - 1} \retrokmod_{m - 1, n} \1_{\Xset^n}}
$$
(where $c$ is the constant in \hypref{assum:bias:bound}). Now, since for all $x_m \in \Xset_m$,  
\begin{align*}
\retrokmod_{m, n} \1_{\Xset^n}(x_m) &= \int \ud{m}(x_m, x_{m + 1}) \retrokmod_{m + 1, n} \1_{\Xset^n}(x_{m + 1}) \, \mu_{m + 1}(\rmd x_{m + 1}) \\
&\leq \udup \mu_{m + 1} \retrokmod_{m + 1, n} \1_{\Xset^n} 
\end{align*}
and 
\begin{align*}
\postmod{m - 1} \retrokmod_{m - 1, n} \1_{\Xset^n} 
&= \iint \postmod{m - 1}(\rmd x_{m - 1}) \, \ud{m - 1}(x_{m - 1}, x_m) \ud{m}(x_m, x_{m + 1}) \\ 
&\hspace{30mm} \times \retrokmod_{m + 1, n} \1_{\Xset^n}(x_{m + 1}) \, \mu_m \tensprod \mu_{m + 1}(\rmd x_{m:m + 1}) \\
&\geq \udlow^2 \mu_{m + 1} \retrokmod_{m + 1, n} \1_{\Xset^n}, 
\end{align*}
implying the bound
\begin{equation} \label{eq:retrokmod:ratio:bd}
 \frac{\| \retrokmod_{m, n} \1_{\Xset^n} \|_\infty}{\postmod{m - 1} \uk{m - 1} \retrokmod_{m, n} \1_{\Xset^n}} \leq \frac{\udup}{\udlow^2}, 
\end{equation}
we may conclude that 
$| \noshift_{m, n} h - \shiftbwd_{m, n} h| \leq 2 c \precpar \udup \| h \|_\infty / \udlow^2
$. 
Along similar lines, the second difference can be bounded by the same quantity by writing   
\begin{multline*}
\shiftfwd_{m, n} h - \noshift_{m, n} h = \shiftfwd_{m, n} h \left( \frac{\postmod{m} \uk{m} \retrokmod_{m + 1, n} \1_{\Xset^n} - \postmod{m} \ukmod{m} \retrokmod_{m + 1, n} \1_{\Xset_n}}{\postmod{m} \uk{m} \retrokmod_{m + 1, n} \1_{\Xset^n}} \right) \\
+ \frac{\postmod{m} (h \times \ukmod{m} \retrokmod_{m + 1, n} \1_{\Xset^n}) - \postmod{m} (h \times \uk{m} \retrokmod_{m + 1, n} \1_{\Xset_n})}{\postmod{m} \uk{m} \retrokmod_{m + 1, n} \1_{\Xset^n}} 
\end{multline*}
and reapplying \hypref{assum:bias:bound} and \eqref{eq:retrokmod:ratio:bd}.  

We turn to (ii). By definition \eqref{eq:def:retrokmod},  
$$
\retrokmod_{m + 1, n} \addf[n]{m}(x_{m + 1}) = \int \addf{m}(x_m, x_{m + 1}) \, \bkmod{m}(x_{m + 1}, \rmd x_m) \, \retrokmod_{m + 1, n} \1_{\Xset^n}(x_{m + 1});
$$
thus, by applying Lemma~\ref{lem:reversibility}(i) to the skew model, 
\begin{align}
\lefteqn{\frac{\postmod{m + 1} \retrokmod_{m + 1, n} \addf[n]{m}}{\postmod{m + 1} \retrokmod_{m + 1, n} \1_{\Xset^n}}} \\
&= \iint \frac{\postmod{m} \ukmod{m}(\rmd x_{m + 1}) \, \addf{m}(x_m, x_{m + 1}) \, \bkmod{m}(x_{m + 1}, \rmd x_m) \, \retrokmod_{m + 1, n} \1_{\Xset^n}(x_{m + 1})}{\postmod{m} \ukmod{m} \retrokmod_{m + 1, n} \1_{\Xset^n}} \nonumber \\
&= \iint \frac{\postmod{m}(\rmd x_m) \, \ukmod{m}(x_m, \rmd x_{m + 1}) \, \addf{m}(x_m, x_{m + 1}) \retrokmod_{m + 1, n} \1_{\Xset^n}(x_{m + 1})}{\postmod{m} \ukmod{m} \retrokmod_{m + 1, n} \1_{\Xset^n}}. \nonumber
\end{align}
We may thus decompose the quantity under consideration as  
\begin{multline*}
\frac{\postmod{m + 1} \retrokmod_{m + 1, n} \addf[n]{m}}{\postmod{m + 1} \retrokmod_{m + 1, n} \1_{\Xset^n}} - \frac{\postmod{m} \retrokmod_{m, n} \addf[n]{m}}{\postmod{m} \retrokmod_{m, n} \1_{\Xset^n}} \\
= \frac{\postmod{m + 1} \retrokmod_{m + 1, n} \addf[n]{m}}{\postmod{m + 1} \retrokmod_{m + 1, n} \1_{\Xset^n}} \left( \frac{\postmod{m} \uk{m} \retrokmod_{m + 1, n} \1_{\Xset^n} - \postmod{m} \ukmod{m} \retrokmod_{m + 1, n} \1_{\Xset^n}}{\postmod{m} \retrokmod_{m, n} \1_{\Xset^n}} \right) \\
+ \int \frac{\postmod{m}(\rmd x_m) \{ \ukmod{m}(\addf{m} \retrokmod_{m + 1, n} \1_{\Xset^n})(x_m) - \uk{m} (\addf{m} \retrokmod_{m + 1, n} \1_{\Xset^n})(x_m)\}}{\postmod{m} \retrokmod_{m, n} \1_{\Xset^n}} ,  
\end{multline*}
from which (ii) follows, as before, by a combination of \hypref{assum:bias:bound} and \eqref{eq:retrokmod:ratio:bd}.  
\end{proof}

\begin{proof}[Proof of Theorem~\ref{thm:bias:bound}]
Write 
$$
\postmod{0:n} h_n - \post{0:n} h_n = \sum_{k = 0}^{n - 1} \left( \postmod{0:n} \addf[n]{k} - \post{0:n} \addf[n]{k} \right), 
$$
where each term can be decomposed according to   
\begin{equation*}
\postmod{0:n} \addf[n]{k} - \post{0:n} \addf[n]{k} = 
\sum_{m = 1}^n \left( \frac{\postmod{0:m} \uk{m, n - 1} \addf[n]{k}}{\postmod{0:m} \uk{m, n - 1} \1_{\Xset^n}} - \frac{\postmod{0:m - 1} \uk{m - 1, n - 1} \addf[n]{k}}{\postmod{0:m - 1} \uk{m - 1, n - 1} \1_{\Xset^n}} \right)  
\end{equation*}
(recall that $\postmod{0} \propto \chi$). In order to bound each term of this decomposition, write, using Lemma~\ref{lem:retro:prospective:id},  
$$
\frac{\postmod{0:m} \uk{m, n - 1} \addf[n]{k}}{\postmod{0:m} \uk{m, n - 1} \1_{\Xset^n}} - \frac{\postmod{0:m - 1} \uk{m - 1, n - 1} \addf[n]{k}}{\postmod{0:m - 1} \uk{m - 1, n - 1} \1_{\Xset^n}} 
= \frac{\postmod{m} \retrokmod_{m, n} \addf[n]{k}}{\postmod{m} \retrokmod_{m, n} \1_{\Xset^n}} - \frac{\postmod{m - 1} \retrokmod_{m - 1, n} \addf[n]{k}}{\postmod{m - 1} \retrokmod_{m - 1, n} \1_{\Xset^n}}. 
$$
Now, for all $m \in \intvect{1}{n}$, pick an arbitrary element $\xarb_m \in \Xset_m$ and define the kernel 
\begin{equation} \label{eq:def:norm:objective:func}
\retrokmodnorm_{m, n} h(x_m) \eqdef \frac{\retrokmod_{m, n} h(x_m)}{\retrokmod_{m, n} \1_{\Xset^n}(x_m)} - \frac{\retrokmod_{m, n} h (\xarb_m)}{\retrokmod_{m, n} \1_{\Xset^n} (\xarb_m)}, \quad x_m \in \Xset_m, \quad h \in \bmf{\Xfd^n}. 
\end{equation}
Combining this definition with Lemma~\ref{lemma:three:identities}, we may express the quantity of interest as 
\begin{multline*}
\postmod{0:n} \addf[n]{k} - \post{0:n} \addf[n]{k} = \sum_{m = 1}^k \left( \noshift_{m, n} \retrokmodnorm_{m, n} \addf[n]{k} - \shiftbwd_{m, n}  \retrokmodnorm_{m, n} \addf[n]{k} \right) \\ 
+ \frac{\postmod{k + 1} \retrokmod_{k + 1, n} \addf[n]{k}}{\postmod{k + 1} \retrokmod_{k + 1, n} \1_{\Xset^n}} - \frac{\postmod{k} \retrokmod_{k, n} \addf[n]{k}}{\postmod{k} \retrokmod_{k, n} \1_{\Xset^n}} + \sum_{m = k + 1}^{n - 1} \left( \shiftfwd_{m, n} \retrokmodnorm_{m, n} \addf[n]{k} - \noshift_{m, n} \retrokmodnorm_{m, n} \addf[n]{k} \right). 
\end{multline*}
Now, applying Lemmas~\ref{lem:geo:bound} and \ref{lem:diff:bound} to the previous decomposition yields
$$
\left| \postmod{0:n} \addf[n]{k} - \post{0:n} \addf[n]{k} \right| \leq 2 c \precpar \frac{\udup }{\udlow^2} \sum_{k = 0}^{n - 1} \| \addf{k} \|_\infty \left( \sum_{m = 1}^{n - 1} \rho^{|k - m| - 1} + 1 \right).  
$$
which was to be established. 

The second inequality follows straightforwardly according to 
\begin{align*}
\sum_{k = 0}^{n - 1} \| \addf{k} \|_\infty \left( \sum_{m = 1}^{n - 1} \rho^{|k - m| - 1} + 1 \right) 
&\leq \left( n + \frac{1}{\rho} \left( n + 2 \sum_{\ell = 1}^{n - 1} (n - \ell) \rho^\ell \right) \right) \sup_{k \in \intvect{0}{n - 1}} \| \addf{k} \|_\infty \\ 
&\leq n \left( 1 + \frac{1}{\rho} + \frac{2}{1 - \rho} \right) \sup_{k \in \intvect{0}{n - 1}} \| \addf{k} \|_\infty.  
\end{align*}

\end{proof}

\section{Proofs of Theorem~\ref{thm:variance:bound}, Corollary~\ref{cor:variance:bound}, and Proposition~\ref{prop:variance:bound:filter}}
\label{sec:variance:bounds}
\begin{proof}[Proof of Theorem~\ref{thm:variance:bound}]
The following is a refinement of the proof of \cite[Proposition~7]{olsson:westerborn:2017}. 

We start bounding the first term of \eqref{eq:non-recursive:as:var}. For this purpose, we first prove that for all $m \in \intvect{0}{n - 1}$, 
\begin{equation} \label{eq:retrokmodmodnorm:bound}
\|\retrokmodmodnorm_{m, n} h_n \|_\infty \leq \| \ukmod{m, n - 1} \1_{\Xset^n} \|_\infty \sum_{k = 0}^{n - 1} \|\addf{k} \|_\infty \rho^{|k - m| - 1}. 
\end{equation}
In order to establish \eqref{eq:retrokmodmodnorm:bound}, first note that  
\begin{equation} \label{eq:post:retrok:id}
\postmod{0:n} h_n = \frac{\postmod{m} \retrokmodmod_{m, n} h_n}{\postmod{m} \retrokmodmod_{m, n} \1_{\Xset^n}}.  
\end{equation}
Thus, using the extensions \eqref{eq:def:addf} we may write, for $x_m \in \Xset_m$, 
$$
\frac{\retrokmodmodnorm_{m, n} h_n(x_m)}{\ukmod{m, n - 1} \1_{\Xset^n}(x_m)} = \sum_{k = 0}^{n - 1} \left( \frac{\retrokmodmod_{m, n} \addf[n]{k}(x_m)}{\retrokmodmod_{m, n} \1_{\Xset^n}(x_m)} - \frac{\postmod{m} \retrokmodmod_{m, n} \addf[n]{k}}{\postmod{m} \retrokmodmod_{m, n} \1_{\Xset^n}} \right). 
$$
Thus, the bound \eqref{eq:retrokmodmodnorm:bound} is obtained by applying Lemma~\ref{lem:geo:bound} (twice: first, with $\precpar = 0$, allowing $\retrokmod_{m, n}$ in Lemma~\ref{lem:geo:bound} to be replaced by $\retrok_{m, n}$; second, with the skew model playing the role of the original model, allowing $\retrok_{m, n}$ to be replaced by $\retrokmodmod_{m, n}$), with $\lambda = \delta_{x_m}$ and $\lambda' = \postmod{m}$, to each term on the right-hand side of the previous identity. 

Now, using \eqref{eq:retrokmodmodnorm:bound} and the bound \eqref{eq:retrokmod:ratio:bd}, 
\begin{multline} \label{eq:first:bound:first:term}
\sum_{m = 0}^{n - 1} \frac{\postmod{m} \adjfuncforward{m} \postmod{m} \ukmod{m}(\varphi_m [\retrokmodmodnorm_{m + 1, n} h_n ]^2)}{(\postmod{m} \ukmod{m, n - 1} \1_{\Xset^n})^2} \\ 
\leq \sup_{\ell \in \nset} \| \adjfuncforward{\ell} \|_\infty \sup_{\ell \in \nset} \| \varphi_\ell \|_\infty \frac{\udup^3}{\udlow^4}
\sum_{m = 0}^{n - 1} \left( \sum_{k = 0}^{n - 1} \| \addf{k} \|_\infty \rho^{|k - m - 1| - 1} \right)^2. 
\end{multline}
Then, since 
$$
\sum_{m = 0}^{n - 1} \left( \sum_{k = 0}^{n - 1} \rho^{|k - m - 1| - 1} \right)^2 \leq \frac{4 n}{\rho^2 (1 - \rho)^2}, 
$$
we obtain the linear (in $n$) bound
\begin{multline} \label{eq:linear:bound:first:bound}
\sum_{m = 0}^{n - 1} \frac{\postmod{m} \adjfuncforward{m} \postmod{m} \ukmod{m}(\varphi_m [\retrokmodmodnorm_{m + 1, n} h_n ]^2)}{(\postmod{m} \ukmod{m, n - 1} \1_{\Xset^n})^2} \\ 
\leq 
\sup_{\ell \in \nset} \| \addf{\ell} \|_\infty^2 \sup_{\ell \in \nset} \| \adjfuncforward{\ell} \|_\infty \sup_{\ell \in \nset} \| \varphi_\ell \|_\infty  \frac{4 n}{\rho^2 (1 - \rho)^5 \udlow }. 
\end{multline}

We turn to the second term of \eqref{eq:non-recursive:as:var} and reusing \eqref{eq:retrokmod:ratio:bd} yields 
\begin{align} 
\frac{\ukmod{\ell + 1, m}( \bkmod{m} \varphi_m [\ukmod{m + 1, n - 1} \1_{\Xset^n}]^2)}{(\postmod{\ell} \ukmod{\ell, m - 1} \1_{\Xset^m})(\postmod{m} \ukmod{m, n - 1} \1_{\Xset^n})^2} 
&\leq \frac{\| \ukmod{\ell + 1, m - 1} \1_{\Xset^m} \|_\infty \|\ukmod{m + 1, n - 1} \1_{\Xset^n} \|_\infty^2 \|\varphi_m \|_\infty \udup}{(\postmod{\ell} \ukmod{\ell, m - 1} \1_{\Xset^m})(\postmod{m} \ukmod{m, n - 1} \1_{\Xset^n})^2} \nonumber \\ 
&\leq \frac{1}{(1 - \rho)^4 \udlow^2} 
\|\varphi_m \|_\infty. \label{eq:linear:bound:second:bound}
\end{align}
By \eqref{eq:forward:smoothing} it holds that  
\begin{align*}
\lefteqn{\tstatmod{\ell} h_{\ell}(x_\ell) + \addf{\ell}(x_\ell, x_{\ell + 1}) - \tstatmod{\ell + 1} h_{\ell +1}(x_{\ell + 1})} \hspace{10mm} \\ 
&= \tstatmod{\ell} h_\ell(x_\ell) - \bkmod{\ell}\tstatmod{\ell} h_\ell(x_{\ell + 1}) + \addf{\ell}(x_\ell, x_{\ell + 1}) - \bkmod{\ell} \addf{\ell}(x_{\ell + 1}) \\
&= \sum_{k = 0}^{\ell - 1} \left( \retrokmodmod_{\ell, \ell} \addf[\ell]{k}(x_\ell) - \bkmod{\ell} \retrokmodmod_{\ell, \ell} \addf[\ell]{k}(x_{\ell + 1}) \right) + \addf{\ell}(x_\ell, x_{\ell + 1}) - \bkmod{\ell} \addf{\ell}(x_{\ell + 1}), 
\end{align*}
and by applying Lemma~\ref{lem:geo:bound} (with $\lambda = \delta_{x_\ell}$ and $\lambda' = \bkmod{\ell}(x_{\ell + 1}, \cdot)$) to each term in the sum yields the bound 
\begin{equation} \label{eq:linear:bound:third:bound}
\| \tstatmod{\ell} h_{\ell} + \addf{\ell} - \tstatmod{\ell + 1} h_{\ell +1} \|_\infty \leq \left( \frac{1}{1 - \rho} + 2 \right) \sup_{\ell \in \nset} \|\addf{\ell} \|_\infty. 
\end{equation}
By combining \eqref{eq:linear:bound:second:bound} and \eqref{eq:linear:bound:third:bound} we obtain 
\begin{multline} 
 \sum_{m = 0}^{n - 1} \sum_{\ell = 0}^m \frac{\postmod{m} \adjfuncforward{m} \postmod{\ell} \ukmod{\ell} \{\bkmod{\ell}(\tstatmod{\ell} h_{\ell} + \addf{\ell} - \tstatmod{\ell + 1} h_{\ell +1})^2 \ukmod{\ell + 1, m}( \bkmod{m} \varphi_m [\ukmod{m + 1, n - 1} \1_{\Xset^n}]^2
)\}}{\K^{m - \ell + 1} (\postmod{\ell} \ukmod{\ell, m - 1} \1_{\Xset^m})(\postmod{m} \ukmod{m, n - 1} \1_{\Xset^n})^2} \\
\leq \frac{1}{(1 - \rho)^4 \udlow^2}  \left( \frac{1}{1 - \rho} + 2 \right)^2 \sup_{\ell \in \nset} \| \addf{\ell} \|_\infty^2 \sup_{\ell \in \nset} \| \adjfuncforward{\ell} \|_\infty \sup_{\ell \in \nset} \| \varphi_\ell \|_\infty \sum_{m = 0}^{n - 1} \sum_{\ell = 0}^m \K^{\ell - (m + 1)}. 
\end{multline}
Finally, since, for $\K \geq 2$, 
$$
\lim_{m \to \infty} \sum_{\ell = 0}^m \K^{\ell - (m + 1)} = \frac{1}{\K - 1}, 
$$
taking the Ces\'{a}ro mean provides  
\begin{multline} \label{eq:eq:linear:bound:fourth:bound}
\hspace{-10mm} \limsup_{n \to \infty} \frac{1}{n}  \sum_{m = 0}^{n - 1} \sum_{\ell = 0}^m \frac{\postmod{m} \adjfuncforward{m} \postmod{\ell} \ukmod{\ell} \{\bkmod{\ell}(\tstatmod{\ell} h_{\ell} + \addf{\ell} - \tstatmod{\ell + 1} h_{\ell +1})^2 \ukmod{\ell + 1, m}( \bkmod{m} \varphi_m [\ukmod{m + 1, n - 1} \1_{\Xset^n}]^2
)\}}{\K^{m - \ell + 1} (\postmod{\ell} \ukmod{\ell, m - 1} \1_{\Xset^m})(\postmod{m} \ukmod{m, n - 1} \1_{\Xset^n})^2} \\ 
\leq \frac{1}{(\K - 1) (1 - \rho)^4 \udlow^2}  \left( \frac{1}{1 - \rho} + 2 \right)^2 \sup_{\ell \in \nset} \| \addf{\ell} \|_\infty^2 \sup_{\ell \in \nset} \| \adjfuncforward{\ell} \|_\infty \sup_{\ell \in \nset} \| \varphi_\ell \|_\infty. 
\end{multline}
Finally, we complete the proof by combining \eqref{eq:linear:bound:first:bound} and \eqref{eq:eq:linear:bound:fourth:bound}. 
\end{proof}

\begin{proof}[Proof of Corollary~\ref{cor:variance:bound}]
First, note that by \eqref{eq:retrokmodmodnorm:bound}, 
$$
\|\retrokmodmodnorm_{0, n} h_n \|_\infty \leq \| \ukmod{0, n - 1} \1_{\Xset^n} \|_\infty \sup_{\ell \in \nset} \|\addf{\ell} \|_\infty  \frac{1}{\rho(1 - \rho)}. 
$$
Moreover, since for all $x_0 \in \Xset_0$, 
\begin{equation} \label{eq:ukmod:zero:upper:lower:bound}
\udlow \mu_1 \ukmod{1, n - 1} \1_{\Xset^n} \leq \ukmod{0, n - 1} \1_{\Xset^n}(x_0) \leq \udup \mu_1 \ukmod{1, n - 1} \1_{\Xset^n}, 
\end{equation}
it holds that 
\begin{equation}\label{eq:bd:inital:term}
\frac{\|\retrokmodmodnorm_{0, n} h_n \|_\infty^2}{(\chi \ukmod{0, n - 1} \1_{\Xset^n})^2} \leq \sup_{\ell \in \nset} \|\addf{\ell} \|_\infty^2  \frac{1}{\rho^2(1 - \rho)^4 (\chi \1_{\Xset_0})^2}. 
\end{equation}
The desired bound is now completed by applying Theorem~\ref{thm:variance:bound} and \eqref{eq:bd:inital:term} to the two first and the last terms of $\sigma_n^2(h_n)$, respectively.
\end{proof}

\begin{proof}[Proof of Proposition~\ref{prop:variance:bound:filter}]
The result follows straightforwardly from the bound \eqref{eq:first:bound:first:term}; indeed, as the mapping $\Xset^n \ni x_{0:n} \mapsto f(x_n)$ belongs to $\bmaf{\Xfd^n}$, \eqref{eq:first:bound:first:term} implies that 
\begin{multline*}
\sum_{m = 0}^{n - 1} \frac{\postmod{m} \adjfuncforward{m} \postmod{m} \ukmod{m}(\varphi_m [\ukmod{m + 1} \cdots \ukmod{n - 1} (f - \postmod{n} f)
]^2)}{(\postmod{m} \ukmod{m} \cdots \ukmod{n - 1} \1_{\Xset_n})^2} \\ 
\leq \sup_{\ell \in \nset} \| \adjfuncforward{\ell} \|_\infty \sup_{\ell \in \nset} \| \varphi_\ell \|_\infty \|f \|_\infty^2 \frac{\udup^3}{\udlow^4} \rho^{-2}
\sum_{m = 0}^{n - 1} \rho^{2 |n - m - 2|},  
\end{multline*}
where the sum on the right-hand side can be bounded by $1 / (1 - \rho^2) + \rho^2$. In addition, by \eqref{eq:retrokmodmodnorm:bound}, 
$$
\| \ukmod{0} \cdots \ukmod{n - 1} (f - \postmod{n} f) \|_\infty \leq \| \ukmod{0} \cdots \ukmod{n - 1} \1_{\Xset^n} \|_\infty \|f \|_\infty \rho^{n - 2}, 
$$
and reusing \eqref{eq:ukmod:zero:upper:lower:bound} yields 
$$
\frac{\| \ukmod{0} \cdots \ukmod{n - 1} (f - \postmod{n} f) \|_\infty^2}{(\chi \ukmod{0} \cdots \ukmod{n - 1} \1_{\Xset_n})^2} \leq \| f \|_\infty^2 \frac{\rho^{2n}}{\rho^4(1 - \rho)^2 (\chi \1_{\Xset_0})^2}. 
$$
The proof is completed by combining these bounds. 
\end{proof}

\section{A technical lemma}
\label{sec:tech:results}
The following technical lemma is a straightforward adaption of \cite[Lemma~14]{olsson:westerborn:2014b} to the framework of Section~\ref{sec:preliminaries}. 

\begin{lemma}
\label{lem:generalized:lebesgue}
Assume \hypref[assum:biased:estimate]{assum:bound:filter:pseudomarginal} and let $\kernel{\Psi}$ be some possibly unnormalised transition kernel on $\Xset_n \times \Xfd_{n + 1}$ having transition density in $\bmf{\Xfd_n \tensprod \Xfd_{n + 1}}$ with respect to some reference measure on $\Xfd_{n + 1}$. Moreover, let $(\varphi_\N)_{\N \in \nset}$ be a sequence of functions in $\bmf{\Xfd_{n + 1}}$  for which 
\begin{itemize}
\item[(i)] there exists $\varphi \in \bmf{\Xfd_{n + 1}}$ such that for all $x \in \Xset_{n + 1}$, $\lim_{\N \to \infty} \varphi_\N(x) = \varphi(x)$ $\pP$-a.s. and 
\item[(ii)] there exists $c \in \rsetpos$ such that $\|\varphi \|_\infty \leq c$ for all $\N \in \nset$. 
\end{itemize} 
Then $\post[\N]{n} \kernel{\Psi} \varphi_\N \pplim \post{n} \kernel{\Psi} \varphi$ as $\N \to \infty$. 
\end{lemma}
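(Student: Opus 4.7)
The natural decomposition is to separate the random-$\varphi_\N$ error from the limiting behavior of the weighted particle measure. Write
\[
\post[\N]{n}\kernel{\Psi}\varphi_\N - \post{n}\kernel{\Psi}\varphi = \underbrace{[\post[\N]{n} - \post{n}]\kernel{\Psi}\varphi}_{A_\N} + \underbrace{\post[\N]{n}\kernel{\Psi}(\varphi_\N - \varphi)}_{B_\N},
\]
and treat each summand in turn.

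For $A_\N$, observe that $\kernel{\Psi}$ has a bounded transition density and satisfies $\sup_x \kernel{\Psi}(x, \Xset_{n+1}) < \infty$ by the kernel-boundedness convention of Section~\ref{sec:model}; together with $\varphi \in \bmf{\Xfd_{n+1}}$, this gives $\kernel{\Psi}\varphi \in \bmf{\Xfd_n}$. Applying Proposition~\ref{prop:hoeffding:tau:marginal}(ii) with $h_n \equiv 0$, $f_n \equiv 0$ and $\tilde f_n = \kernel{\Psi}\varphi$ yields $A_\N \pplim 0$ with an exponential Hoeffding rate.

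For $B_\N$, bound $|B_\N| \leq \post[\N]{n}\psi_\N$ with $\psi_\N(x) \eqdef \kernel{\Psi}|\varphi_\N - \varphi|(x)$. The uniform bound $|\varphi_\N - \varphi| \leq 2c$ combined with $\sup_x \kernel{\Psi}(x, \Xset_{n+1}) < \infty$ provides a deterministic envelope $\psi_\N \leq g \eqdef 2c\,\kernel{\Psi}\mathbf{1}_{\Xset_{n+1}} \in \bmf{\Xfd_n}$ that is uniform in $\N$. Pointwise in $x$, dominated convergence against the finite measure $\kernel{\Psi}(x,\cdot)$ gives $\psi_\N(x) \to 0$ $\pP$-a.s. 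Combining Fubini with a second application of dominated convergence on $\Omega \times \Xset_n$ then yields $\pE[\post{n}\psi_\N] \to 0$, so $\post{n}\psi_\N \pplim 0$. A subsequence/Vitali-type argument using the fixed envelope $g$---to which Proposition~\ref{prop:hoeffding:tau:marginal}(ii) \emph{does} apply, giving $\post[\N]{n}g \pplim \post{n}g$---transfers this convergence from the deterministic integral $\post{n}\psi_\N$ to the weighted particle average $\post[\N]{n}\psi_\N$, allowing the conclusion $B_\N \pplim 0$.

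The main obstacle is the coupling between the random function $\psi_\N$ and the particle system $(\epart{n}{i}, \ewght{n}{i})_{i=1}^\N$: Proposition~\ref{prop:hoeffding:tau:marginal} provides concentration of $\post[\N]{n}$ only against \emph{deterministic} bounded integrands, so one cannot simply plug in the random $\psi_\N$. The key workaround is to dominate $\psi_\N$ by the fixed bounded function $g$ and exploit the pointwise a.s. convergence $\psi_\N(x)\to 0$ via the Vitali/dominated-convergence argument sketched above; the rest is a routine adaptation of the analogous result in \cite{olsson:westerborn:2014b} to the auxiliary-filter setting of Section~\ref{sec:preliminaries}.
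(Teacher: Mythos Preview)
The paper itself does not prove this lemma; it only states it in Section~\ref{sec:tech:results} and refers the reader to \cite[Lemma~14]{olsson:westerborn:2014b}, of which it is declared a ``straightforward adaptation''. So there is no in-paper proof to compare against, and the question is simply whether your argument is complete.

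Your decomposition and the treatment of $A_\N$ are fine, as is the chain of dominated-convergence steps leading to $\post{n}\psi_\N \pplim 0$. The gap is in the final ``transfer'' step. Knowing only that $0\le\psi_\N\le g$, that $\post[\N]{n}g\to\post{n}g$, and that $\post{n}\psi_\N\to 0$ is \emph{not} enough to conclude $\post[\N]{n}\psi_\N\to 0$: these ingredients do not control the coupling between the random integrand $\psi_\N$ and the random measure $\post[\N]{n}$. (For instance, with $\post[\N]{n}$ the empirical measure of i.i.d.\ uniforms on $[0,1]$ and $\psi_\N$ the indicator of the particle support, all three hypotheses hold while $\post[\N]{n}\psi_\N\equiv 1$.) A generic ``Vitali-type'' argument with a fixed envelope therefore cannot close the loop.

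What makes the lemma go through is precisely the bounded-density structure of $\kernel{\Psi}$, which you have not yet used. Writing $\psi$ for the transition density of $\kernel{\Psi}$ with respect to the reference measure $\nu$ on $\Xfd_{n+1}$, Fubini gives
\[
\post[\N]{n}\kernel{\Psi}|\varphi_\N-\varphi|
=\int_{\Xset_{n+1}}|\varphi_\N(y)-\varphi(y)|\,g_\N(y)\,\nu(\rmd y),
\qquad g_\N(y)\eqdef \post[\N]{n}\psi(\cdot,y).
\]
Now Proposition~\ref{prop:hoeffding:tau:marginal} (applied to the bounded test functions $\psi(\cdot,y)$, with constants uniform in $y$ since $\|\psi(\cdot,y)\|_\infty\le\|\psi\|_\infty$) plus Borel--Cantelli and a Fubini argument on $\Omega\times\Xset_{n+1}$ give that, $\pP$-a.s., $g_\N(y)\to g(y)\eqdef\post{n}\psi(\cdot,y)$ for $\nu$-a.e.\ $y$; likewise $|\varphi_\N(y)-\varphi(y)|\to 0$ for $\nu$-a.e.\ $y$, $\pP$-a.s. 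Since $|\varphi_\N-\varphi|\,g_\N\le 2c\,g_\N$, $\int g_\N\,\rmd\nu=\post[\N]{n}\kernel{\Psi}\1_{\Xset_{n+1}}\to\post{n}\kernel{\Psi}\1_{\Xset_{n+1}}=\int g\,\rmd\nu$ a.s.\ (this is exactly your $\post[\N]{n}g\to\post{n}g$), the \emph{generalised} dominated convergence theorem (Pratt's lemma) yields $\int|\varphi_\N-\varphi|\,g_\N\,\rmd\nu\to 0$ a.s. This is the ``generalised Lebesgue'' step that gives the lemma its name and that your sketch is missing; once it is in place, $B_\N\pplim 0$ follows.
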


\end{document}